\title{Infinitary Cut-Elimination for Non-Wellfounded Parsimonious Linear Logic} 
\author{Matteo Acclavio}{
	University of Sussex, UK \and \url{matteoacclavio.com} 
}{
}{
	https://orcid.org/0000-0002-0425-2825
}{%
	The author was partially supported by Villum Fonden, grant no. 50079
}
\author{
	Gianluca Curzi
}{
	University of Gothenburg, Sweden \and \url{http://gianlucacurzi.com/}
}{
}{
	https://orcid.org/0000-0001-8746-1704
}{%
	This work was supported by the Wallenberg Academy Fellowship Prolongation project “Taming Jörmungandr: The Logical Foundations of Circularity” (project reference 251080003), and by the VR starting grant “Proofs with Cycles in Computation” (project reference 251088801)
}
\author{
	Giulio Guerrieri
}{
	University of Sussex, UK
	\and \url{https://pageperso.lis-lab.fr/~giulio.guerrieri/}%
}{
}{
	https://orcid.org/0000-0002-0469-4279
}{%
}
\authorrunning{M. Acclavio, G. Curzi and G. Guerrieri} 
\keywords{cut-elimination,
	non-wellfounded proofs, 
	parsimonious logic,
	linear logic,
	proof theory,
	approximation,
	sequent calculus,
	non-uniform proofs} 
\def\queq{\quad=\quad}
\newcommand{\weightcp}[1]{\mathsf{C}(#1)}
\newcommand{\heightcut}[1]{\mathsf{H}(#1)}
\newcommand{\cf}[1]{\mathsf{cf}{(#1)}}
\newcommand{\prun}[2]{\lfloor {#1} \rfloor_{#2}}
\newcommand{\base}[1]{\mathsf{base}(#1)}
\def\Sys{\mathcal S}
\def\wnderr{\wn\mathsf{der}}
\newcommand{\dfn}{\coloneqq}
\renewcommand{\emptyset}{\varnothing}
\def\quand{\quad \mbox{and} \quad}
\def\defn#1{\textbf{#1}}
\def\set#1{\{#1\}}
\def\Set#1{\left\{#1\right\}}
\def\intset#1#2{\set{#1,\ldots, #2}}
\def\Setdef#1#2{\Set{\!\!\begin{array}{c|c}#1 & #2\end{array}\!\!}}
\def\seq#1{(#1)}
\definecolor{mygreen}{rgb}{0, 0.6, 0}
\definecolor{airforceblue}{rgb}{0.36, 0.54, 0.66}
\def\cneg#1{{#1}^\lbot}
\def\atoms{\mathcal A}
\newcommand{\Nat}{\mathbf{N}}
\def\axr{\mathsf{ax}}
\def\cutr{\mathsf{cut}}
\def\wrule{\mathsf{w}}
\def\crule{\mathsf{c}}
\def\brule{\mathsf{b}}
\def\wnbrule{\wn\brule}
\def\ocbrule{\oc\brule}
\def\wnwrule{\wn \wrule}
\def\wncrule{\wn \crule}
\def\ocwrule{\oc \wrule}
\def\prule{\mathsf{\oc p}}
\def\derr{\mathsf{\wn d}}
\def\wnderr{\derr}
\def\digr{\mathsf{\wn\wn d}}
\def\fprule{\mathsf{f\prule}}
\def\cprule{\mathsf{c\prule}}
\def\rrule{\mathsf{r}}
\def\trule{\mathsf{t}}
\def\unit{\mathbf{1}}
\newcommand{\zero}{\mathsf{hyp}}
\newcommand{\zerorule}[1]{
	\vlinf{\zero}{}{{#1}}{}
}
\def\rclr#1{{\color{red}#1}}
\def\mathacronym#1{\textsf{#1}}
\newcommand{\nuprule}{\mathsf{ib}\prule} 
\newcommand{\nwbox}{\mathacronym{nwb}\xspace} 
\newcommand{\nwboxes}{\mathacronym{nwb}s\xspace}
\newcommand{\nwpromotion}{\mathfrak{S}}
\newcommand{\nwpstream}[3]{\seq{ {#1},{#2}, \ldots,\allowbreak {#3}, \ldots } }
\newcommand{\derstream}[1]{\cprule_{\seq{ #1, \ldots}}}
\newcommand{\der}{\mathcal{D}}
\newcommand{\dD}{\der}
\newcommand{\zeroder}{\der_{\!\lightning}}
\newcommand{\wnder}{\der_{\wn}}
\newcommand{\trueder}{\cod\true}
\newcommand{\falseder}{\cod\false}
\newcommand{\digder}{\der_{\digr}}
\newcommand{\vlhyE}[1]{\global\setbox\vlhybox=\hbox{$#1$}%
	\vlhyaux{\box\vlhybox}}%
\newcommand{\vldr}[3][8]{\vltr{#2}{#3}{\vlhyE{}}{\vlhyE{\hspace{#1pt} }}{\vlhyE{}}}
\newcommand{\pll}{\mathsf{PLL}}
\newcommand{\nwpll}{\pll^{\!\infty}}
\newcommand{\nupll}{\mathsf{wr}\nwpll}
\newcommand{\cpll}{\mathsf{r}\nwpll}
\newcommand{\opll}{\mathsf{o}\nwpll}
\newcommand{\ppll}{\mathsf{p}\nwpll}
\newcommand{\wppll}{\mathsf{w}\ppll}
\newcommand{\dpll}{\mathsf{nu}\pll}
\newcommand{\mell}{\mathsf{MELL}}
\newcommand{\nwmell}{\mell^{\!\infty}}
\def\sysX{\mathsf X}
\def\proves#1{\vdash_{#1}}
\newcommand{\setDer}{\mathsf{X}}
\def\parsimony{finite expandability\xspace}
\def\parsimonious{finitely expandable\xspace}
\def\FE{\parsimonious}
\def\octhread{$\oc$-thread\xspace}
\def\wnthread{$\wn$-thread\xspace}
\def\prog{progressing\xspace}
\def\wprog{weak-progressing\xspace}
\def\progness{progressing criterion\xspace}
\def\Prog{Progressing\xspace}
\newcommand{\Nodes}{\mathcal{V}}
\newcommand{\Nset}{\mathbb{N}}
\def\cproj#1{\cprojp{\left(\!#1\!\right)\!\!}}
\def\fproj#1{\fprojp{\left(\!#1\!\right)\!\!}}
\def\finproj#1{\finprojp{\left(\!#1\right)}}
\newcommand{\cprojp}[1]{{#1}^{\bullet}}
\newcommand{\fprojp}[1]{{#1}^{\circ}}
\newcommand{\finprojp}[1]{{#1}^{f}}
\def\ptom#1{\left(#1\right)^\spadesuit}
\def\ptomp#1{#1^\spadesuit}
\newcommand{\size}[1]{|#1|}
\newcommand{\nestl}[2]{\mathbf{nl}_{#1} ({#2})}
\def\depth#1{\mathbf{d}(#1)}
\newcommand{\bord}[1]{\mathsf{border}(#1)}
\def\cutstep#1#2{#1\mbox{-vs-}#2}
\def\cutelim{\to_{\cutr}}
\def\cutelims{\cutelim^*}
\def\cutelimnhl{\to_{\cutr+}}
\newcommand{\fapx}[1][\maximal]{\mathcal K({#1})}
\def\emptymset{[~]}
\def\mset#1{[#1]}
\newcommand{\fmset}[1]{\mathcal{M}_f({#1})}
\def\Sem#1{\left\{\!\!\left\{#1\right\}\!\!\right\}}
\def\sem#1{\{\!\{#1\}\!\}}
\def\fsem#1#2{\Sem{#1}_{#2}}
\def\fSem#1#2{\fsem{#1}{#2}}
\newcommand{\ptime}{\mathbf{P}}
\newcommand{\ppoly}{\ptime/\mathsf{poly}}
\newcommand{\cod}[1]{\underline{#1}}
\newcommand{\true}{\mathbf{1}}
\newcommand{\false}{\mathbf{0}}
\newcommand{\ices}{\mathsf{ices}}
\newcommand{\mcices}{\mathsf{mc}\text{-}\mathsf{ices}}
\renewcommand{\lim}[2]{\mathsf{lim}_{#1}{\left( {#2 }\right)}}
\begin{document}

\maketitle

\begin{abstract}	
	We investigate non-wellfounded proof systems based on parsimonious logic, a weaker variant of linear logic where the exponential modality ! is interpreted as a constructor for streams over finite data.	Logical consistency is  maintained at a global level by adapting a standard progressing criterion.	
	We present an infinitary version of cut-elimination based on finite approximations, and  we prove that, in presence of the progressing criterion, it returns well-defined non-wellfounded proofs at its limit.
	Furthermore, we show that cut-elimination preserves the progressive criterion and various regularity conditions internalizing degrees of proof-theoretical uniformity.
	Finally, we provide a denotational semantics for our systems based on the relational model.
\end{abstract}


%

\section{Introduction}

\emph{Non-wellfounded proof theory} studies proofs as possibly infinite (but finitely branching) trees, where logical consistency is maintained via  global conditions called \emph{progressing} (or \emph{validity}) \emph{criteria}. In this setting, the so-called   \emph{regular} (also called \emph{circular}) proofs receive a special attention, as they  admit a finite description in terms of (possibly cyclic) directed graphs. 

This area of proof theory  makes its first appearance (in its modern guise) in  the modal $\mu$-calculus~\cite{niwinski1996games,dax2006proof}. Since then, it has been extensively investigated from many perspectives (see, e.g.,~\cite{brotherston2011sequent, Simpson17, Das2021,Kuperberg-Pous21}), establishing itself as an ideal setting for manipulating least and greatest fixed points, and hence for modeling induction and coinduction principles.  

Non-wellfounded proof theory has been applied to constructive fixed point logics i.e., with a computational interpretation based on the \emph{Curry-Howard correspondence}~\cite{Curry-Howard}. A key example can be found in the context  of \emph{linear logic} ($\mathsf{LL}$)~\cite{gir:ll}, a logic implementing  a finer control on resources thanks to the 	\emph{exponential} modalities $\oc$ and $\wn$. In this framework, the most extensively studied fixed point logic is  $\mu\mathsf{MALL}$, defined as the exponential-free fragment of $\mathsf{LL}$ with least and greatest fixed point operators (respectively,  $\mu$ and its dual  $\nu$)~\cite{BaeldeM07,BaeldeDS16}.

In~\cite{BaeldeM07} Baelde and Miller have shown that the exponentials can be recovered  in $\mu \mathsf{MALL}$ by exploiting the fixed points operators, i.e., by defining  $\oc A \dfn \nu X. (\mathbf{1} \with A \with (X \otimes X))$ and $\wn A\dfn \mu X. (\bot \oplus A \oplus (X \parr X))$.  
As these authors notice, the fixed point-based definition of $\oc$ and $\wn$  can be regarded as  a more permissive variant of the standard exponentials, since a proof of $\nu X. (\mathbf{1} \with A \with (X \otimes X))$ could be constructed using different proofs of $A$, whereas in $\mathsf{LL}$ a proof of $\oc A$  	is constructed uniformly using a single proof of $A$. 
This  proof-theoretical notion of  \emph{non-uniformity} is indeed  a central feature of the  fixed-point exponentials.

However, the above encoding is not free from issues.  First, as  discussed in full detail in~\cite{Farzad}, the encoding of the exponentials does not verify the Seely isomorphisms, 
syntactically expressed by the equivalence  $\oc (A \with B)\multimapboth (\oc A \otimes \oc B)$, 
an essential property for modeling exponentials in $\mathsf{LL}$.   Specifically, the  fixed-point definition of $\oc$  relies on the multiplicative connective  $\otimes$,  which forces an interpretation of $\oc A$ based on  lists rather than multisets. Secondly,  as pointed out   in~\cite{BaeldeM07}, there is a neat  mismatch between cut-elimination for the exponentials of $\mathsf{LL}$
and the one  for the fixed point exponentials of  $\mu\mathsf{MALL}$. While the first problem  is  related to   syntactic deficiencies  of the encoding, and  does not undermine further investigations on fixed point-based definitions of the exponential modalities, the second one is more critical.  These apparent differences between the two exponentials contribute to stressing an important aspect  in linear logic modalities, i.e., their  \emph{non-canonicity}~\cite{quatrini:phd,DanosJ03}\footnote{It is  possible to construct linear logic proof systems with alternative (non equivalent) exponential modalities (see,  e.g.,~\cite{NigamM09}).}.

On a parallel research thread, Mazza~\cite{Mazza15, MazzaT15} studied  \emph{parsimonious logic}, a variant of linear logic where the exponential modality $\oc$ satisfies Milner's law (i.e., $\oc A \multimapboth A \otimes \oc A$) and  invalidates   the implications $\oc A \multimap \oc \oc A$ (\emph{digging}) and $\oc A \multimap \oc A \otimes \oc A$ (\emph{contraction}).  In parsimonious logic, a proofs of  $\oc A$ can be interpreted  as a  \emph{stream} over (a finite set of)  proofs of  $A$, i.e., as a greatest fixed point, where  the   linear implications $A \otimes \oc A \multimap \oc A$ (\emph{co-absorption}) and $\oc A \multimap A \otimes \oc A$ (\emph{absorption}) can be computationally read as  
the \emph{push} and \emph{pop} 
operations on streams.  More specifically, a formula $\oc A$ is  introduced  by an \emph{infinitely branching rule} that takes a  finite set of proofs $\der_1, \ldots, \der_n$ of $A$ and  a (possibly non-recursive) function $f: \mathbb{N}\to \set{1, \ldots, n}$ as premises, and  constructs a proof of $\oc A$ representing a  stream of proofs of the form $\mathfrak{S}=\nwpstream{\der_{f(0)}}{\der_{f(1)}}{\der_{f(n)}}$.  Hence, parsimonious logic exponential modalities exploit in an essential way the above-mentioned  proof-theoretical non-uniformity, which in turn   deeply  interfaces with  notions of  non-uniformity from computational complexity~\cite{MazzaT15}. 

The analysis of parsimonious logic conducted in~\cite{Mazza15,MazzaT15} reveals that fixed point definitions of the exponentials are better behaving when  digging and  contraction  are discarded. On the other hand, the co-absorption rule cannot be derived in $\mathsf{LL}$, and so it prevents parsimonious logic becoming a genuine subsystem of the latter. This led the authors of the present paper to  introduce  \emph{parsimonious linear logic}, a co-absorption-free subsystem of linear logic that nonetheless allows a  stream-based interpretation of the exponentials.

We present two finitary proof systems for parsimonious linear logic: the system $\dpll$, supporting non-uniform exponentials, and $\pll$,  a fully uniform version. We investigate non-wellfounded  counterparts of $\dpll$ and $\pll$, adapting to our setting   the {progressing criterion} to maintain logical consistency. 
To recover the proof-theoretical behavior of $\dpll$ and $\pll$, we  identify further global conditions on non-wellfounded proofs, that is, some forms of regularity to capture the notions of uniformity and non-uniformity. 
This leads us to  two main non-wellfounded proof systems:  \emph{regular parsimonious linear logic} ($\cpll$), defined via the regularity condition and corresponding to $\pll$, and \emph{non-uniform parsimonious linear logic} ($\nupll$), defined via a \emph{weak regularity} condition and corresponding to $\dpll$.

The major contribution of this paper is the study of continuous cut-elimination in the setting of non-wellfounded parsimonious linear logic. We first  introduce  Scott-domains of  partially defined  non-wellfounded proofs,   ordered by an approximation relation. 
Then, we  define  special infinitary  proof rewriting  strategies called \emph{maximal and continuous infinitary cut-elimination strategies} ($\mcices$) which compute  (Scott-)continuous functions.  
Productivity  in this framework is established by showing that,  in presence of the progressing condition, these continuous functions return totally defined cut-free non-wellfounded proofs  (\Cref{thm:CCE}.\ref{MEGA:1}).
Moreover, we prove that they also preserve the (weak) progressing, the finite expandability, and the (weak) regularity conditions (\Cref{MEGA:2,MEGA:3,MEGA:4} in  \Cref{thm:CCE}). 

On a technical side, we stress that our methods and results distinguish   from previous approaches to cut-elimination in a non-wellfounded setting in many respects.  
First, 
we get rid of many technical notions typically introduced to prove infinitary cut-elimination, 
such as 
the \emph{multicut rule}
or the 
\emph{fairness conditions} (as in, e.g., \cite{fortier2013cuts,BaeldeDS16}), as these notions are subsumed by an \emph{approximation-by-approximation} approach to cut-elimination.
Furthermore, we prove productivity of cut-elimination and preservation  of progressiveness in a more direct and constructive way, i.e., without going through auxiliary proof systems and avoiding  arguments by contradiction (see, e.g., ~\cite{BaeldeDS16}). Finally,  we prove for the first time  preservation  of regularity properties under continuous cut-elimination, essentially exploiting  methods for compressing  transfinite rewriting sequences to $\omega$-long ones from~\cite{Terese, Saurin}.

Finally, we define a denotational semantics for non-wellfounded parsimonious logic based on  the relational model, with a standard multiset-based interpretation of the exponentials, and we show that this semantics is preserved under continuous cut-elimination (\Cref{thm:soundness-semantics}). 
We also prove that
extending non-wellfounded parsimonious linear logic with digging
prevents the existence of a cut-elimination result preserving the semantics (\Cref{thm:no-semantics-digging}).
Therefore, the impossibility of a stream-based definition of $\oc$ that validates digging (and contraction).

{For lack of space, proofs are in the appendix if omitted or sketched in the body of~the~paper.}

\section{Preliminary notions}\label{sec:prelim}

In this section we recall some basic notions from (non-wellfounded) proof theory, fixing the notation that will be adopted in this paper.

\subsection{Derivations and coderivations}\label{subsec:coderivations}

We assume that the  reader is familiar with the syntax of sequent calculus, e.g. \cite{troelstra_schwichtenberg_2000}. 
Here we specify some conventions adopted to simplify the content of  this paper.

In this work we consider (\defn{sequent}) \defn{rules} of the form
$\vlinf{\rrule}{}{\Gamma}{}$
or
$\vlinf{\rrule}{}{\Gamma}{\Gamma_1}$
or
$\vliinf{\rrule}{}{\Gamma}{\Gamma_1}{\Gamma_2}$,
and we refer 
to the sequents $\Gamma_1$ and $\Gamma_2$ as the \defn{premises}, 
and
to the sequent $\Gamma$ as the \defn{conclusion} of the rule $\rrule$.
To avoid technicalities of the sequents-as-lists presentation, we follow \cite{BaeldeDS16} and we consider \defn{sequents} as \emph{sets of occurrences of formulas} from a given set of formulas.
In particular, when we refer to a formula in a sequent we always consider a \emph{specific occurrence}~of~it.

\def\tree{\mathcal T}
\def\branch{\mathcal B}
\begin{definition}
	\label{def:coderivation}
	
	A (binary, possibly infinite) \defn{tree} $\tree$ is a subset of words in $\set{1,2}^*$
	that~contains the empty word $\epsilon$ (the \defn{root} of $\tree$)
	and is \emph{ordered-prefix-closed}
	(i.e., if $n \in \set{1,2}$ and $vn \in \mathcal{T}$, then $v \in \mathcal{T}$,
	and
	if moreover $v2 \in \mathcal{T}$, then $v1 \in \mathcal{T}$).
	Elements of a tree $\tree$ are called \defn{nodes}
	and 
	a node $vn\in\tree$ with $n\in\set{1,2}$ is 
	a \defn{child} of $v\in\tree$.
Given a tree $\tree$ and a node $v \in \tree$, a \defn{branch} $\branch$ of $\tree$ (from $v$) is a set of nodes in $\tree$ of the form $vw$ (for any $w \in \{1,2\}^*$) 
such that if they have at least one child in $\tree$ then they have exactly one child in $\branch$.

	A \defn{coderivation} over a set of rules $\Sys$
	is a labeling $\der$ of a tree by sequents such that
	if $v$ is a node with children $v_1,\ldots, v_n$ (with $n\in\set{0,1,2}$), 
	then there is an occurrence of a rule $\rrule$ in $\Sys$ with conclusion the sequent $\der(v)$ and premises the sequents $\der(v_1),\ldots, \der(v_n)$.
	The \defn{height} of $\rrule$ in $\der$ is the length of the node $v \in \set{1,2}^*$ such that $\der(v)$ is the conclusion of $\rrule$.

	The \defn{conclusion} of $\der$ is the sequent $\der(\epsilon)$.
	If $v$ is a node of the tree, the \defn{sub-coderivation} of $\der$ rooted at $v$ is the coderivation $\der_v$ defined by $\der_v(w)=\der(vw)$.

	A coderivation $\der$ is \defn{$\rrule$-free} (for a rule $\rrule \!\in\! \Sys$) if it contains no occurrence of $\rrule$.
	 It is \defn{regular} if it has finitely many distinct sub-coderivations; it is \defn{non-wellfounded} if it labels an infinite tree, and 
	it is a \defn{derivation} (with \defn{size} $\size{\der} \in \Nset$) if it labels a finite tree (with $\size{\der}$ nodes).
	
	Given a set of coderivations $\setDer$, a sequent $\Gamma$ is \defn{provable} in $\setDer$ (noted $\proves{\setDer} \Gamma$)
if there is a coderivation in $\setDer$ with conclusion $\Gamma$.
\end{definition}

While  derivations are usually represented as  finite trees, regular coderivations can be represented  as \emph{finite} directed (possibly cyclic) graphs: a cycle is created by linking the roots of two identical~subcoderivations.

\begin{definition}[Bar] 
	Let $\der$ be a coderivation.
	A set $\Nodes$ of nodes in $\der$ is a \defn{bar} (of $\der$) 
	if:
	\begin{itemize}
		\item any infinite branch of $\dD$ contains a node in $\Nodes$;
		\item any pair of nodes in $\Nodes$ are mutually incomparable (w.r.t.~the partial order in $\der$).
	\end{itemize}
	We say that a bar $\Nodes$ has \defn{height}  $h$ if every node in $\Nodes$ that is not a leaf of $\dD$ has height  $\geq h$.
\end{definition}

\section{Parsimonious Linear Logic}\label{sec:parsimoniousLogic}

In this paper we consider the set of  \defn{formulas}  for propositional multiplicative-exponential linear logic with units ($\mell$).
These are generated by a countable set of propositional variables $\atoms=\set{X,Y,\ldots}$ using the following grammar:
$$A,B \Coloneqq X \mid \cneg X   \mid A\ltens  B \mid A\lpar B \mid \oc A \mid \wn A \mid \lone \mid \lbot 
$$

A \defn{$\oc$-formula}  (resp.~\defn{$\wn$-formula}) is a formula of the form $\oc A$ (resp.~$\wn A$). 
\defn{Linear negation} $\cneg{(\cdot)}$ is defined by De Morgan's~laws
$\cneg{({\cneg A})}		= A$					 , 
$\cneg{(A \ltens B)} 	= \cneg A \lpar \cneg B$ , 
$\cneg {(\oc A)} 		= \wn \cneg A$			 , and 
$\cneg {(\lone)} 		= \lbot $
%
while \defn{linear implication} is defined as $A \limp B \coloneqq \cneg A \parr B$.

\begin{figure*}[!t]
	\begin{adjustbox}{max width=\textwidth}$
		\begin{array}{c}
			\vlinf{\axr}{}{A, \cneg A}{}
			\quad
			\vliinf{\cutr}{}{ {\Gamma}, {\Delta}}{ {\Gamma}, A}{\cneg A,{\Delta}}
			\quad 
			\vlinf{\lpar}{}{ {\Gamma}, A \lpar B}{ {\Gamma}, A , B}
			\quad 
			\vliinf{\ltens}{}{ {\Gamma}, {\Delta},A \ltens B}{ {\Gamma}, A}{B, {\Delta}}
			\quad
			\vlinf{\lone}{}{\lone}{}
			\quad
			\vlinf{\lbot}{}{ {\Gamma}, \lbot}{ {\Gamma}}
			\quad
			\vlinf{\fprule}{}{  \wn \Gamma,\oc A}{  \Gamma, A} 
			\quad 
			\vlinf{\wnwrule}{}{ {\Gamma}, \wn A}{ {\Gamma}}
			\quad 
			\vlinf{\wnbrule}{}{ {\Gamma}, {\wn A}}{ {\Gamma}, A, {\wn A}}
		\end{array}
	$\end{adjustbox}
	\caption{Sequent calculus rules of $\pll$.}
	\label{fig:sequent-system-pll}
\end{figure*}

\begin{definition}
	\defn{Parsimonious linear logic}, denoted by $\pll$, is the set of rules in
	\Cref{fig:sequent-system-pll}, that is,
	\defn{axiom} ($\axr$), 
	\defn{cut} ($\cutr$), 
	\defn{tensor} ($\ltens$), 
	\defn{par} ($\lpar$),
	\defn{one} ($\lone$),
	\defn{bottom} ($\lbot$),
	\defn{functorial promotion} ($\fprule$), 
	\defn{weakening} ($\wnwrule$),  
	\defn{absorption} ($\wnbrule$).
	Rules
	$\axr$, $\ltens$, $\lpar$, $\lone$ and $\lbot$
	are called \defn{multiplicative}, while rules $\fprule$, $\wnwrule$ and $\wnbrule$  are called \defn{exponential}.	
	We also denote by $\pll$ the set of derivations over the rules in $\pll$.
\end{definition}

\begin{figure*}[!t]
	\centering
	\adjustbox{max width=.9\textwidth}{
	$\begin{array}{c|c|c|c}
		\falseder
		&
		\trueder
		&
		\der_\mathsf{abs}
		&
		\der_\mathsf{der}
		\\\hline
		\vlderivation{
			\vlin{\lpar}{}{\wn (X \otimes \cneg{X\!}) \lpar \cneg{X\!} \lpar X}
			{
				\vlin{\lpar}{}{\wn (X \otimes \cneg{X\!}), \cneg{X\!} \lpar X}
				{
					\vlin{\wnwrule}{}{ \wn (X \otimes \cneg{X\!}), \cneg{X\!}\!, X}
					{
						\vlin{\axr}{}{\cneg{X\!}\!, X}{\vlhy{}}
					}
				}
			}
		}
		&
		%
		\vlderivation{
			\vliq{\lpar \times 2}{}{\wn (X \otimes \cneg{X\!}) \lpar \cneg{X\!} \lpar X} 
			{
					\vlin{\wnbrule}{}{\wn (X \otimes \cneg{X\!}),  \cneg{X\!}\!, X} 
					{
						\vlin{\wnwrule}{}{\wn (X \otimes \cneg{X\!}), X \otimes \cneg{X\!}\!, \cneg{X\!}\!, X} 
						{
							\vliin{\ltens}{}{X \otimes \cneg{X\!}\!, \cneg{X\!}\!, X} 
							{
								\vlin{\axr}{}{\cneg{X\!}\!, X}{\vlhy{}}
							}
							{
								\vlin{\axr}{}{\cneg{X\!}\!, X}{\vlhy{}}
							}
						}
					}
			}
		}
	&
		\vlderivation{
			\vlin{\lpar}{}{\wn \cneg{A} \lpar (A \ltens \oc A)}
			{
				\vlin{\wnbrule}{}{\wn \cneg{A}, A \ltens \oc A}
				{
					\vliin{\ltens}{}{\cneg{A}, \wn \cneg{A}, A \ltens \oc A}
					{
						\vlin{\axr}{}{\cneg{A}, A}{\vlhy{}}
					}
					{
						\vlin{\axr}{}{\wn \cneg{A}, \oc A}{\vlhy{}}
					}
				}
			}
		}
		& 
		%
		\vlderivation{
			\vlin{\lpar}{}{\wn \cneg{A} \lpar A}
			{
				\vlin{\wnbrule}{}{\wn \cneg{A}, A}
				{
					\vlin{\wnwrule}{}{\cneg{A}, \wn \cneg{A}, A}
					{
						\vlin{\axr}{}{\cneg{A}, A}{\vlhy{}}
					}
				}
			}
		}
	\end{array}
	$}
	\caption{Examples of derivations in $\pll$.}
	\label{fig:der:examples-of-derivations}
\end{figure*}

\begin{example}\label{ex:derivations}
	\Cref{fig:der:examples-of-derivations} gives some examples of derivation in $\pll$.
	The (distinct) derivations $\falseder$ and $\trueder$ prove the same formula $\Nat \dfn \oc(X \multimap X) \multimap X \multimap X$.
	The derivation $\dD_{\mathsf{abs}}$
	proves the 
	 \emph{absorption law} $\oc A \multimap A \ltens \oc A$; 
	the derivation $\dD_\mathsf{der}$
	proves
	the \emph{dereliction law} $\oc A \multimap A$.
\end{example}

The \defn{cut-elimination} relation $\cutelim$ in $\pll$
is the union of
\defn{principal} cut-elimination steps 
in  
\Cref{fig:cut-elim-finitary-lin} (\defn{multiplicative}) 
and
\Cref{fig:cut-elim-finitary-exp} (\defn{exponential})
and 
\defn{commutative} cut-elimination steps in \Cref{fig:cut-elim-finitary-comm}.																																																	
The reflexive-transitive closure of $\cutelim$ is noted $\cutelim^*$.

\begin{restatable}{theorem}{cutelimPLL}
	\label{thm:cut-elimination-pll} 
	
	For every $\der \in \pll$,
	there is a cut-free $\der'\in\pll$ such that $\der\cutelims\der'$.
\end{restatable}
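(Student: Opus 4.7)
I would prove \Cref{thm:cut-elimination-pll} by induction on a well-founded measure that captures the complexity of the cuts in a $\pll$-derivation. The natural choice is a lexicographic pair $(\rho(\der), \size{\der})$, where the \emph{cut-rank} $\rho(\der)$ is the maximum size of a cut-formula occurring in $\der$ (with $\rho(\der) = 0$ iff $\der$ is cut-free), and $\size{\der}$ is its size; both components take values in $\Nset$ on any finite $\der \in \pll$, so the order is well-founded. The reductions in \Cref{fig:cut-elim-finitary-lin}, \Cref{fig:cut-elim-finitary-exp} and \Cref{fig:cut-elim-finitary-comm} are designed to strictly decrease this measure.

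\textbf{Main argument.} If $\rho(\der)=0$ there is nothing to prove. Otherwise I would single out a \emph{topmost} cut in $\der$ (one whose two premises are cut-free subderivations) and case-analyse on the last rules of both premises. When either premise ends with $\axr$, the cut collapses by identifying the two occurrences of the cut-formula. The principal multiplicative reductions ($\ltens/\lpar$ and $\lone/\lbot$, from \Cref{fig:cut-elim-finitary-lin}) split the cut into cuts on strict subformulas, so $\rho$ strictly decreases. The principal exponential reduction $\fprule/\wnwrule$ (from \Cref{fig:cut-elim-finitary-exp}) discards the promoted subderivation and inserts weakenings on $\wn\Gamma$, again lowering $\rho$. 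Commutative reductions (\Cref{fig:cut-elim-finitary-comm}) push the cut upward into a premise, preserving $\rho$ while strictly reducing the size of the subderivation above the cut. In every case the resulting derivation has strictly smaller measure, so the induction hypothesis produces a cut-free $\der'$ with $\der \cutelims \der'$.

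\textbf{Main obstacle.} The delicate step is the principal case $\fprule/\wnbrule$. In $\mell$ the analogous step uses contraction to duplicate the promoted branch, but $\pll$ has no contraction. The $\pll$-reduction instead exploits that an absorption step retains $\wn\cneg A$ in its premise $\Delta, \cneg A, \wn\cneg A$: the cut on $\oc A$ is commuted upward through the $\wnbrule$, yielding a residual cut of $\fprule$ against the strictly smaller absorption-chain inside $\Delta, \cneg A, \wn\cneg A$, together with a principal cut on $A$ that strictly decreases $\rho$. To keep a single induction going along this cascade, without invoking auxiliary systems or fairness conditions, I would refine the ordering to a triple $(\rho(\der), \nu_\oc(\der), \size{\der})$ in which $\nu_\oc(\der)$ counts $\wnbrule$-occurrences lying above pending $\oc$-cuts: each $\wnbrule$-commutation strictly decreases $\nu_\oc$ at fixed $\rho$, and once the cascade bottoms out on a $\wnwrule$ or an $\axr$ the cut is eliminated and $\rho$ strictly drops. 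The overall induction then closes and returns a cut-free $\der' \in \pll$ with $\der \cutelims \der'$.
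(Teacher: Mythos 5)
Your route is genuinely different from the paper's. The paper does not reason about a termination measure inside $\pll$ at all: it defines a translation $\ptomp{(\cdot)}$ of $\pll$ into $\mell$ (sending $\wnbrule$ to a contraction over a dereliction and $\fprule$ to a promotion over derelictions), checks that every $\pll$ cut-elimination step is simulated by a non-empty sequence of $\mell$ steps (plus two extra commutative steps), and then imports termination from strong normalization of $\mell$, extended to the additional steps by a small auxiliary measure. That approach buys termination "for free" from an external result at the cost of verifying a simulation diagram; your direct Gentzen-style induction is self-contained and arguably more informative, but it places the entire burden on getting the well-founded measure right.

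And that is where there is a genuine gap. In the principal case $\cutstep{\fprule}{\wnbrule}$ (\Cref{fig:cut-elim-finitary-exp}), the reduct contains the promoted subderivation of $\Gamma, A$ \emph{twice}: once as the left premise of a fresh cut on $A$, and once inside a copy of the whole $\fprule$ that is re-cut against the premise $\Delta, \cneg{A}, \wn\cneg{A}$ of the absorption. Consequently your third component $\nu_\oc$, a global count of $\wnbrule$-occurrences above pending $\oc$-cuts, need not decrease: if $A$ is itself a $\oc$-formula, the fresh cut on $A$ is a new pending $\oc$-cut sitting below a duplicated copy of every $\wnbrule$ occurring in the promoted branch, so $\nu_\oc$ can strictly increase while $\rho$ stays fixed (the residual cut on $\oc A$ still has maximal rank). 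The same duplication also defeats $\size{\der}$ as a tie-breaker. The standard repair is to stratify by cut-rank rather than use one global lexicographic triple: prove a cut-reduction lemma stating that a cut on $\oc A$ between two derivations of cut-rank $< \size{\oc A}$ can be replaced by a derivation of cut-rank $< \size{\oc A}$, by an inner induction on the size of the right premise only (which does strictly decrease along the $\wnbrule$-cascade, since each step replaces it by its immediate subderivation); the freshly created cut on $A$ has strictly smaller rank and is discharged by the outer induction on $\rho$, not by $\nu_\oc$. With that restructuring your argument closes; as written, the termination claim for the cascade does not.
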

\begin{proof}[Sketch of proof]
	We associate with any derivation $\der$ in $\pll$ 
	a derivation $\ptomp{\der}$
	in $\mell$ sequent calculus. 
	Thanks to additional commutative cut-elimination steps, 
	we prove that cut-elimination in $\mell$ rewrites $\ptomp{\der}$ to the translation of 
	a derivation in $\pll$.
	The termination of cut-elimination in $\pll$ follows from the result in $\mell$~\cite{pag:tor:StrongNorm}.
	Details are in \Cref{app:3}.
\end{proof}

Akin to light linear logic \cite{girard:98,lafont:soft,Roversi},  the exponential rules of $\pll$  are weaker than those in $\mell$: the usual promotion rule is replaced by $\fprule$ (\emph{functorial promotion}), and the usual contraction and  dereliction  rules  by $\wnbrule$. 
As a consequence, the  \emph{digging} formula $\oc A \limp \oc \oc A$
and the \emph{contraction} formula $\oc A \limp \oc A \otimes \oc A$ 
are not provable in $\pll$ (unlike the dereliction formula, \Cref{ex:derivations}).
This allows us to interpret computationally these weaker exponentials in terms~of~streams, as well as  to control the complexity of cut-elimination~\cite{Mazza15,MazzaT15}.

It is easy to show that $\mell = \pll + \text{digging}$: if we add the digging formula as an axiom (or equivalently, the \emph{digging rule} $\digr$ in \Cref{fig:digging}) to the set of rules in \Cref{fig:sequent-system-pll}, then the contraction formula becomes provable, and the obtained proof system coincides~with~$\mell$.

\begin{figure*}[t]
	\adjustbox{max width=\textwidth}{$\begin{array}{c}	
		\vlderivation{
			\vliin{\cutr}{}{\Gamma, A}{\vlin{\axr}{}{A, \cneg A}{\vlhy{}}}{\vlhy{\Gamma, A}}
		}
		\cutelim
		\vlderivation{\vlhy{\Gamma, A}}
%
		\qquad
		\vlderivation{
			\vliin{\cutr}{}{\Gamma, \Delta,\Sigma}{
				\vlin{\lpar}{}{\Gamma, A\lpar B}{\vlhy{\Gamma, A, B}}
			}{
				\vliin{\ltens}{}{ \Delta,\cneg A \ltens \cneg{B}\!, \Sigma}{\vlhy{ \Delta,\cneg A}}{\vlhy{\cneg{B}\!,\Sigma}}
			}
		}
		\cutelim
		\vlderivation{
			\vliin{\cutr}{}{\Gamma, \Delta, \Sigma}{
				\vliin{\cutr}{}{\Gamma, \Delta, B}{\vlhy{\Gamma, B,A}}{\vlhy{\cneg{A}\!, \Delta}}
			}{
				\vlhy{\cneg{B}\!,\Sigma}
			}
		}
		\qquad
		\vlderivation{
			\vliin{\cutr}{}{\Gamma}
			{
				\vlin{\lbot}{
				}{\Gamma, \lbot}{\vlhy{\Gamma}}
			}{
				\vlin{\lone}{}{\lone}{}
			}
		}
		\cutelim
		\vlderivation{
			\vlhy{\Gamma}
		}	
	\end{array}
	$}
	\caption{Multiplicative cut-elimination steps in $\pll$.}
	\label{fig:cut-elim-finitary-lin}
\end{figure*}

\begin{figure*}[t]
	\centering
	\adjustbox{max width=.9\textwidth}{$\begin{array}{c}
		\vlderivation{
			\vliin{\cutr}{}{\wn \Gamma, \wn \Delta, \oc B}{
				\vlin{\fprule}{}{\wn \Gamma, \oc A}{
					\vlhy{\Gamma, A}
				}
			}{
				\vlin{\fprule}{}{\wn \cneg {A}\!, \wn \Delta, \oc B}{
					\vlhy{\cneg {A}\!, \Delta, B}
				}
			}
		}
		\cutelim
		\vlderivation{
			\vlin{\fprule}{}{\wn \Gamma, \wn \Delta, \oc B}{
				\vliin{\cutr}{}{\Gamma, \Delta, B}{
					\vlhy{\Gamma, A}
				}{
					\vlhy{\cneg {A}\!, \Delta, B}
				}
			}
		}
		\qquad
		\vlderivation{
			\vliin{\cutr}{}{\wn \Gamma, \Delta}{
				\vlin{\fprule}{}{\wn \Gamma, \oc A}{
					\vlhy{\Gamma, A}
				}
			}{
				\vlin{\wnwrule}{}{\Delta,\wn \cneg A}{\vlhy{\Delta}}
			}
		}
		\cutelim
		\vlderivation{
			\vliq{\wnwrule}{}{\wn\Gamma,\Delta}{\vlhy{\Delta}}
		}
		\\
		\vlderivation{
			\vliin{\cutr}{}{\wn \Gamma, \Delta}{
				\vlin{\fprule}{}{\wn \Gamma, \oc A}{
					\vlhy{\Gamma, A}
				}
			}{
				\vlin{\wnbrule}{}{\Delta, \wn \cneg A}{
					\vlhy{\Delta,\cneg {A}\!, \wn\cneg A}
				}
			}
		}
		\cutelim
		\vlderivation{
			\vliq{\wnbrule}{}{\wn\Gamma, \Delta}{
				\vliin{\cutr}{}{\Gamma, \wn \Gamma,\Delta }{
					\vlhy{\Gamma, A}
				}{
					\vliin{\cutr}{}{\wn\Gamma,  \Delta, \cneg A}{
						\vlin{\fprule}{}{\wn \Gamma,\oc A}{\vlhy{\Gamma , A}}
					}{
						\vlhy{\Delta, \cneg {A}\!, \wn \cneg A}
					}
				}
			}
		}
	\end{array}
	$}
	\caption{Exponential cut-elimination steps in $\pll$.}
	\label{fig:cut-elim-finitary-exp}
\end{figure*}

\begin{figure*}[t]
	\adjustbox{max width=\textwidth}{$
		\begin{array}{c@{\quad\cutelim\;}c@{\qquad\qquad}c@{\quad\cutelim\;}c}
		\vlderivation{
			\vliin{\cutr}{}{\Gamma, \Delta}
			{\vlin{\rrule}{}{\Gamma, A}{\vlhy{  \Gamma_1, A}}}
			{\vlhy{\cneg{A}, \Delta}}
		}
		&
		\vlderivation{
			\vlin{\rrule}{}{\Gamma, \Delta} 
			{\vliin{\cutr}{}{\Gamma_1, \Delta}
				{\vlhy{   \Gamma_1, A }}
				{\vlhy{ \cneg{A}, \Delta}}
			}
		}
		&
		\vlderivation{
			\vliin{\cutr}{}{\Gamma, \Delta}
			{\vliin{\rrule}{}{\Gamma, A}{\vlhy{\Gamma_1,A}}{\vlhy{\Gamma_2}}}
			{\vlhy{\Delta, \cneg A}}
		}
		&
		\vlderivation{
			\vliin{\rrule}{}{\Gamma,\Delta} 
			{\vliin{\cutr}{}{\Gamma_1,\Delta}
				{\vlhy{\Gamma_1, A}}
				{\vlhy{\cneg{A},\Delta}}
			}
			{\vlhy{\Gamma_2}}
		}
	\end{array}
	$}
	\caption{Commutative cut-elimination steps in $\pll$, where $\rrule \neq \cutr$.}
	\label{fig:cut-elim-finitary-comm}
\end{figure*}

\section{Non-wellfounded Parsimonious Linear Logic }\label{sec:coder}

In linear logic, a formula $\oc A$ is interpreted as the availability of $A$ at will.
This  intuition still holds in $\pll$. Indeed, the Curry-Howard correspondence interprets rule $\fprule$ introducing the modality $\oc$ as an operator taking a derivation $\der$ of $A$ and creating a (infinite) \emph{stream} $\nwpstream \der \der \der$ of copies of the proof~$\der$.
Each element of the stream is accessed via the cut-elimination step $\fprule$ vs $\wnbrule$ in \Cref{fig:cut-elim-finitary-exp}: 
rule $\wnbrule$ is interpreted as an operator \emph{popping} one copy of $\der$ out of the stream.
Pushing these ideas further, Mazza \cite{Mazza15} introduced \emph{parsimonious logic} $\mathbf{PL}$, a type system (comprising rules $\fprule$ and $\wnbrule$) characterizing~the logspace decidable problems.

Mazza and Terui then introduced in~\cite{MazzaT15} another  type system,  $\mathbf{nuPL}_{\forall \ell}$, based on parsimonious logic 
and capturing the complexity class $\ppoly$ (i.e., the problems decidable by polynomial size families of boolean circuits~\cite{arora_barak_2009}). Their system is endowed with a \emph{non-uniform} version of the functorial promotion, which takes a  finite set of proofs $\der_1, \ldots, \der_n$ of $A$ and  a (possibly non-recursive) function $f \colon \mathbb{N}\to \set{1, \ldots, n}$ as premises, and  constructs a proof of $\oc A$ modeling the stream $\nwpstream{\der_{f(0)}}{\der_{f(1)}}{\der_{f(n)}}$.  This  typing rule  is the key tool  to  encode  the so-called  \emph{advices} for Turing machines,  an essential  step to show  completeness for $\ppoly$.

In a similar vein, we can endow $\pll$ with a non-uniform version of $\fprule$ called \defn{infinitely branching  promotion} ($\nuprule$), which constructs a 
stream $\nwpstream{\der_0}{\der_1}{\der_n}$ with finite support, i.e., made of \emph{finitely} many distinct derivations (of the same conclusion):\footnotemark
\footnotetext{Rule $\nuprule$ is reminiscent of the $\omega$-rule used in (first-order) Peano arithmetic to derive formulas of the form $\forall x\phi$ that cannot be proven in a uniform way.}

\begin{equation}\label{eqn:damiano-rule}
	\adjustbox{max width=.9\textwidth}{$\begin{array}{c|c}
		\vlupsmash{
			\vlderivation{
				\vliiiiin{\nuprule}{
					\text{\scriptsize $\set{\der_i\mid i\in\Nset}$ is finite}
				}{\wn \Gamma, \oc A}{\vldr{\der_{0}}{\Gamma,A}}{\vldr{\der_{1}}{\Gamma,A}}{\vlhy{\cdots}}{\vldr{\der_{n}}{\Gamma,A}}{\vlhy{\cdots}}
			}
		}
		&
		\vlinf{\ocwrule}{}{\oc A}{}
		\qquad
		\vliinf{\ocbrule}{}{\Gamma, \Delta, \oc A}{\Gamma, A}{\Delta, \oc A}
	\end{array}
	$}
\end{equation}

The side condition on $\nuprule$ provides a proof theoretic counterpart to the function $f \colon \mathbb{N}\to \set{1, \ldots, n}$  in $\mathbf{nuPL}_{\forall \ell}$. 
Clearly, $\fprule$ is subsumed by the rule $\nuprule$, as it corresponds to the special (uniform) case where $\der_i = \der_{i+1}$ for all $i \in \Nset$.

\begin{definition}
	\label{defn:dpll} 
	 We define the set of rules $\dpll \dfn \set{\axr,\ltens,\lpar,\lone,\lbot, \cutr,\wnbrule,\wnwrule,\nuprule}$.
	 We also denote by $\dpll$ the set of derivations over the rules in $\dpll$.\footnotemark
	\footnotetext{To be rigorous, this requires a slight change in \Cref{def:coderivation}: the tree labeled by a derivation in $\dpll$ must be over $\Nset^\omega$ instead of $\set{1,2}^*$, in order to deal with infinitely branching derivations.}
\end{definition}

There are some notable differences between $\dpll$ and Mazza and Terui's original system $~\mathbf{nuPL}_{\forall \ell}$~\cite{MazzaT15}.
As opposed to  $\dpll$,  $\mathbf{nuPL}_{\forall \ell}$ 
is formulated as an intuitionistic (type) system. Furthermore, to achieve  completeness for $\ppoly$,  these authors introduced  second-order quantifiers and  the  co-absorption ($\ocbrule$) and co-weakening ($\ocwrule$) rules displayed  in \eqref{eqn:damiano-rule}.

\emph{Cut-elimination} steps for $\dpll$ are 
in \Cref{fig:cut-elim-finitary-lin,fig:cut-elim-finitary-comm,fig:damiano-cut-elimination} (\Cref{fig:damiano-cut-elimination} is in \Cref{app:3} because we do not  use it: it just adapts the exponential steps to $\nuprule$).
In particular, the step $\cutstep{\nuprule}{\wnbrule}$ in \Cref{fig:damiano-cut-elimination} \emph{pops} the first premise $\der_0$ of $\nuprule$  out of the \mbox{stream~$\nwpstream{\der_0}{\der_1}{\der_n}$.}

\subsection{From infinitely branching proofs to non-wellfounded proofs}

In this paper we explore a dual approach to the one of $\mathbf{nuPL}_{\forall\ell}$ (and $\dpll$): 
instead of considering (wellfounded) derivations with infinite branching, we consider (non-wellfounded) coderivations with finite branching.
For this purpose, the infinitary rule $\nuprule$ of $\dpll$ is replaced by the binary rule below, called \defn{conditional promotion} ($\cprule$):
\begin{equation}\label{eqn:condProm}
	\hskip14em
	\vliinf{\cprule}{}{\wn \Gamma, \oc A}{\Gamma, A}{\wn \Gamma, \oc A}
\end{equation}

\begin{figure}
	$$
	\zeroder\coloneqq
	{\vlderivation{
			\vliin{\cutr}{}
			{\Gamma, A}
			{
				\vlin{\axr}{}{\cneg A, A}{\vlhy{}}
			}
			{
				\vliin{\cutr}{}
				{\Gamma,  A}
				{
					\vlin{\axr}{}{\cneg A, A}{\vlhy{}}
				}
				{\vlin{\cutr}{}{\Gamma, A}{\vlhy{\vdots}}}
			}
	}}
	\qquad 
	\wnder\coloneqq
	\vlsmash{\vlderivation{
		\vlin{\wnbrule}{}{\wn A}{
			\vlin{\wnbrule}{}{A, \wn A}{
				\vlin{\wnbrule}{}{A, A, \wn A}{\vlhy{\vdots}}
			}
		}
	}}
	$$
	\caption{
		Two {non-wellfounded} and non-\prog coderivations in $\nwpll$.
	}
\label{fig:nonProg}
\end{figure}

\begin{definition}
We define the set of rules $\nwpll \dfn \set{\axr,\ltens,\lpar, \lone, \lbot, \cutr,\wnbrule,\wnwrule,\cprule}$. 
We also denote by $\nwpll$ the set of coderivations over the rules in $\nwpll$.
\end{definition}
	
In other words, $\nwpll$ is the set of coderivations generated by the same rules as $\pll$, except that $\fprule$ is replaced by $\cprule$.
From now on, we will only consider coderivations in $\nwpll$.

\begin{example}\label{ex:coderivations}
	\Cref{fig:nonProg} shows two non-wellfounded coderivations in $\nwpll$:
	$\zeroder$ (resp. $\wnder$) has an infinite branch of $\cutr$ (resp.~$\wnbrule$) rules, and is (resp.~is not)~regular.
\end{example}

We can embed	$\pll$ and $\dpll$ into $\nwpll$ via the conclusion-preserving translations $\fprojp{(\cdot)} \colon \pll \to \nwpll$ and $\cprojp{(\cdot)} \colon \dpll \to \nwpll$ defined in \Cref{fig:translations-pll} by induction on derivations: they map all rules to themselves except $\fprule$ and $\nuprule$,  which are   ``unpacked'' into  non-wellfounded coderivations that  iterate infinitely many times the rule $\cprule$.

\begin{figure*}[t]
	\adjustbox{max width=\textwidth}{$\begin{array}{l}
		\begin{array}{c|c}
			\fproj{
				\vlderivation{
					\vlin{\rrule}{}
					{\Gamma}
					{\vldr{\der}{\Gamma'}}
				}
			}
			\dfn
			\vlderivation{
				\vlin{\rrule}{}
				{\Gamma}
				{\vldr{\fprojp{\der}}{\Gamma'}}
			}
			\qquad
			\fproj{
				\vlderivation{
					\vliin{\trule}{}{\Gamma}{
						\vldr{\der_1}{\Gamma_1}
					}{
						\vldr{\der_2}{\Gamma_2}
					}
				}
			}
			\dfn
			\vlderivation{
				\vliin{\trule}{}{\Gamma}{
					\vldr{\fprojp{\der_1}}{\Gamma_1}
				}{
					\vldr{\fprojp{\der_2}}{\Gamma_2}
				}
			}
		&
			\fproj{
				\vlderivation{
					\vlin{\fprule}{}
					{\wn \Gamma, \oc A}
					{\vldr{\der}{\Gamma, A}}
				}
			}
			\dfn
			\vlderivation{
				\vliin{\cprule}{}
				{\wn \Gamma, \oc A}
				{\vldr{\fprojp{\der}}{ \Gamma, A}}
				{
					\vliin{\cprule}
					{}
					{\wn \Gamma, \oc A}
					{\vldr{\fprojp{\der}}{ \Gamma, A}}
					{\vlin{\cprule}{}{\wn \Gamma, \oc A}{\vlhy{\vdots}}}
				}
			}
		\\
			\cproj{
				\vlderivation{
					\vlin{\rrule}{}
					{\Gamma}
					{\vldr{\der}{\Gamma'}}
				}
			}
			\dfn
			\vlderivation{
				\vlin{\rrule}{}
				{\Gamma}
				{\vldr{\cprojp{\der}}{\Gamma'}}
			}
			\qquad
			\cproj{
				\vlderivation{
					\vliin{\trule}{}{\Gamma}{
						\vldr{\der_1}{\Gamma_1}
					}{
						\vldr{\der_2}{\Gamma_2}
					}
				}
			}
			\dfn
			\vlderivation{
				\vliin{\trule}{}{\Gamma}{
					\vldr{\cprojp{\der_1}}{\Gamma_1}
				}{
					\vldr{\cprojp{\der_2}}{\Gamma_2}
				}
			}
		&
			\cproj{
				\vlderivation{
					\vliiiin{\nuprule}{}{\wn \Gamma, \oc A}{
						\vldr{\der_{0}}{\Gamma,A}
					}{
						\vlhy{\cdots}
					}{
						\vldr{\der_{n}}{\Gamma,A}}{\vlhy{\cdots}
					}
				}
			}
			\dfn
			\vlderivation{
				\vliin{\cprule}{}{\wn \Gamma, \oc A}{
					\vldr{\der_0^\bullet}{\Gamma, A}
				}{
					\vliin{\cprule}{}{
							\reflectbox{$\ddots$}
						}{
							\vldr{\der_n^\bullet}{\Gamma, A}
						}{
							\vlin{\cprule}{}
							{{\wn \Gamma, \oc A}}
							{\vlhy{\vdots}}
						}
				}
			}
		\end{array}
		\\
		\mbox{for all }
		\rrule\in\set{\lpar,\lbot,\wnwrule,\wnbrule}
		\mbox{ and }
		\trule\in\set{\cutr,\ltens} 
		\mbox{ ($\axr$ and $\lone$ are translated by themselves).}
	\end{array}$}
	\caption{
		Translations $\fprojp{(\cdot)}$ from $\pll$ to $\nwpll$, and $\cprojp{(\cdot)}$ from  $\dpll$ to $\nwpll$.
	}
	\label{fig:translations-pll}
\end{figure*}

An infinite chain of $\cprule$ rules (\Cref{eq:box}) is a structure of interest in itself in~$\nwpll$.

\begin{figure}[t]
	\centering
	\vspace{-15pt}
	\adjustbox{max width=.8\textwidth}{$
		\der = 
		\derstream{\der_0,\ldots, \der_n} 
		=
		\vlderivation{
			\vliin{\cprule}{}{\wn \Gamma, \oc A}{\vldr{\der_0}{\Gamma, A}}{
				\vliin{\cprule}{}{{\wn \Gamma, \oc A}}{\vldr{\der_1}{\Gamma, A}}{				 			
					\vliin{\cprule}{}{\reflectbox{$\ddots$}}{\vldr{\der_n}{\Gamma, A}}{
						\vlin{\cprule}{}{{\wn \Gamma, \oc A}}{\vlhy{\vdots}}
					}
				}
			}
		}
	$}
	\caption{A non-wellfounded box in $\nwpll$.}
	\label{eq:box}
\end{figure}

\begin{definition}\label{defn:boxes} 
	A   \defn{non-wellfounded box} 
	(\nwbox for short)
	is a coderivation $\der$
	with an infinite branch $\set{\epsilon, 2,22,\dots}$
	(the  \defn{main branch} of $\der$) all labeled by $\cprule$ rules as in~\Cref{eq:box}, 
	where $\oc A$ in the conclusion is the \defn{principal formula} of $\der$, and	$\der_0, \der_1, \ldots$ are the \defn{calls} of $\der$. 
	We denote $\der$ by $\derstream{\der_0,\ldots, \der_n}$.
	
	Let $\nwpromotion=\derstream{\der_0,\ldots, \der_n}$ be a $\nwbox$.
	We may write $\nwpromotion(i)$ to denote $\der_i$.
	We say that $\nwpromotion$ has \defn{finite support} ({resp.~}is \defn{periodic} with \defn{period} $k$)
	if $\set{\nwpromotion(i)\mid i\in\Nset}$ is finite ({resp.~}if $\nwpromotion(i)= \nwpromotion(k+i)$ for any $i\in\Nset$).
	A coderivation $\der$ 
	has  \defn{finite support} ({resp.~}is \defn{periodic}) 
	if any $\nwbox$ in $\der$ has finite support ({resp.~}is periodic).
\end{definition}

\begin{example}\label{ex:nonReg}
	Consider the following \nwbox of the formula $\oc \Nat$, where $\Nat \dfn \oc(X \multimap X) \multimap X \multimap X$ has at two distinct derivations $\falseder$ and $\trueder$ (\Cref{ex:derivations}), and $i_j \in \{\false,\true\}$ for all $j \in \Nset$.
	\begin{equation}\label{eq:stramCod}
		\hfill
		\adjustbox{max width=.55\textwidth}{$
		\derstream{\cod{i_0},\ldots,\cod{i_n}}
		\quad =\qquad
		\vlderivation{
			\vliin{\cprule}{}{\oc \Nat}{
				\vldr[16]{\cod{i_0}}{\Nat}
			}{
				\vliin{\cprule}{}{{ \oc \Nat}}{
					\vldr[16]{\cod{i_1}}{\Nat}
				}{
					\vliin{\cprule}{}{
						\reflectbox{$\ddots$}
					}{
						\vldr[16]{\cod{i_n}}{\Nat}
					}{
						\vlin{\cprule}{}{{\oc \Nat}}{\vlhy{\vdots}}
					}
				}
			}
		}$}
		\hfill
	\end{equation}
	Thus $\derstream{\cod{i_0},\ldots,\cod{i_n}}$ has finite support, as its only calls can be $\falseder$ or $\trueder$,
	and it is periodic if and only if so is the infinite sequence $\seq{i_0,\ldots, i_n,\ldots} \in \{\false,\true\}^\omega$.
\end{example}

The \emph{cut-elimination} steps $\cutelim$ for $\nwpll$ are in \Cref{fig:cut-elim-finitary-lin,fig:cut-elim-finitary-comm,fig:cut-elim-pll}.
Computationally, they allow the $\cprule$ rule to be interpreted as a \emph{coinductive} definition of a stream of type $\oc A$ from a stream of the same type to which an element of type $A$ is prepended.
In particular, the cut-elimination step $\cprule$ vs $\wnbrule$ accesses the head of a stream: 
rule $\wnbrule$ acts as a \emph{popping}~operator.

As a consequence, the $\nwbox$ in \Cref{eq:box} constructs a stream $\nwpstream{\der_0}{\der_1}{\der_n}$ similarly to $\nuprule$ but, unlike the latter, all the $\der_i$'s may be pairwise distinct.
The reader {expert in} linear logic can see a $\nwbox$ as a box with possibly \emph{infinitely many} distinct contents (its calls), while usual linear logic boxes (and $\fprule$ in $\pll$) provide infinitely many copies of the~\emph{same}~content.

Rules $\fprule$ in $\pll$ and $\nuprule$ in $\dpll$ are mapped by $\fprojp{(\cdot)}$ and $\cprojp{(\cdot)}$ into \nwboxes, which are non-wellfounded coderivations. 
Hence, the cut-elimination steps $\fprule$ vs $\fprule$ in $\pll$ and $\nuprule$ vs $\nuprule$ in $\dpll$ can  only be simulated by infinitely many cut-elimination steps in $\nwpll$.

Note that $\zeroder \in \nwpll$ in \Cref{fig:nonProg} is not $\cutr$-free, and if $\zeroder \cutelim \der$ then $\der = \zeroder$: thus~$\zeroder$  cannot reduce to a $\cutr$-free coderivation, and so the cut-elimination theorem fails~in~$\nwpll$.

\begin{figure*}[t]
	\adjustbox{max width=\textwidth}{$\begin{array}{c}
			\vlderivation{
				\vliin{\cutr}{}{\wn \Gamma, \wn \Delta, \oc B}{
					\vliin{\cprule}{}{\wn \Gamma, \oc A}{
						\vlhy{\Gamma, A}
					}{
						\vlhy{\wn \Gamma, \oc A}
					}
				}{
					\vliin{\cprule}{}{\wn \cneg A\!, \wn \Delta, \oc B}{
						\vlhy{\cneg{A}\!, \Delta, B}
					}{
						\vlhy{\wn \cneg{A}\!, \wn \Delta, \oc B}
					}
				}
			}
			\cutelim
			\vlderivation{
				\vliin{\cprule}{}{\wn \Gamma, \wn \Delta, \oc B}{
					\vliin{\cutr}{}{\Gamma, \Delta, B}{
						\vlhy{\Gamma, A}
					}{
						\vlhy{\cneg A, \Delta, B}
					}
				}{
					\vliin{\cutr}{}{\wn\Gamma, \wn\Delta,\oc B}{
						\vlhy{\wn\Gamma, \oc A}
					}{
						\vlhy{\wn \cneg{A}\!, \wn \Delta, \oc B}
					}
				}
			}
			\\\\
			\vlderivation{
				\vliin{\cutr}{}{\wn \Gamma, \Delta}{
					\vliin{\cprule}{}{\wn \Gamma, \oc A}{
						\vlhy{\Gamma, A}
					}{
						\vlhy{\wn \Gamma, \oc A}
					}
				}{
					\vlin{\wnwrule}{}{\Delta,\wn \cneg A}{\vlhy{\Delta}}
				}
			}
			\cutelim
			\vlderivation{
				\vliq{\wnwrule}{}{\wn\Gamma,\Delta}{\vlhy{\Delta}}
			}
			\qquad
			\vlderivation{
				\vliin{\cutr}{}{\wn \Gamma, \Delta}{
					\vliin{\cprule}{}{\wn \Gamma, \oc A}{
						\vlhy{\Gamma, A}
					}{
						\vlhy{\wn \Gamma, \oc A}
					}
				}{
					\vlin{\wnbrule}{}{\Delta, \wn \cneg A}{
						\vlhy{\Delta,\cneg{A}\!, \wn\cneg A}
					}
				}
			}
			\cutelim
			\vldownsmash{\vlderivation{
				\vliq{\wnbrule}{}{\wn\Gamma, \Delta}{
					\vliin{\cutr}{}{\Gamma, \wn \Gamma,\Delta }{
						\vlhy{\Gamma, A}
					}{
						\vliin{\cutr}{}{\wn\Gamma,  \Delta, \cneg A}{
							\vlhy{\wn \Gamma,\oc A}
						}{
							\vlhy{\Delta, \cneg{A}\!, \wn \cneg A}
						}
					}
				}
			}}
		\end{array}$}
	
	\caption{Exponential cut-elimination steps for coderivations of $\nwpll$.}
	\label{fig:cut-elim-pll}
\end{figure*}

\subsection{Consistency via a \progness}

In a non-wellfounded setting such as $\nwpll\!$, any sequent is provable.
Indeed, the (non-wellfounded) coderivation $\zeroder$ in \Cref{fig:nonProg} shows that any non-empty sequent (in particular, any formula) is provable in $\nwpll$,
and the empty sequent is provable in $\nwpll$ by applying the $\cutr$ rule on the conclusions $B$ and $\cneg B$ (for any formula $B$) of two  derivations $\zeroder$.

The standard way to recover logical consistency in non-wellfounded proof theory is to introduce a global soundness condition on coderivations, called \defn{\prog criterion}. 
In $\nwpll$, this criterion relies on tracking occurrences of $\oc$-formulas in a coderivation.

\begin{definition}\label{defn:octhread}
	Let $\dD$ be a coderivation in $\nwpll$.
	It is \defn{weakly \prog} if every infinite branch contains infinitely many right premises of $\cprule$-rules.
	
	An occurrence of formula in a premise of a rule $\rrule$ is the \defn{parent} of an occurrence of a formula in the conclusion if they are connected according to the edges depicted in \Cref{fig:threads}.

	A \defn{\octhread} (resp.~\defn{\wnthread}) in $\der$
	is a maximal sequence $(A_i)_{i \in I}$ of $\oc$-formulas (resp.~$\wn$-formulas)
	for some downward-closed $I \subseteq \Nset$ such that $A_{i+1}$ is the parent of $A_i$ for all $i \in I$.
	A \octhread $(A_i)_{i \in I}$ is \defn{\prog} if $A_j$ is in the conclusion of a $\cprule$ for infinitely many $j\in I$.

	$\der$ is \defn{\prog} if every infinite branch contains a \prog \octhread. 
		We define $\ppll$ (resp.~$\wppll$) as the set of \prog (resp.~\wprog) coderivations~in~$\nwpll$.
\end{definition}

\begin{figure*}[!t]
	\begin{adjustbox}{max width=\textwidth}$
			\begin{array}{c}
				\vlinf{\axr}{}{A, \cneg A}{}
				\quad
				\vliinf{\cutr}{}{
					\vF1_1, \ldots,\vF2_n,\vG1_1,\ldots,\vG2_m
				}{
					\vF3_1,\dots\vF4_n, A
				}{\cneg{A},\vG3_1,\ldots,\vG4_m}
				\Tedges{F1.center/F3.center,F2.center/F4.center,G1.center/G3.center,G2.center/G4.center}
				\quad 
				\vlinf{\lpar}{}{
					\vF1_1, \ldots,\vF2_n,\vA2\lpar \vB2
				}{
					\vF3_1,\dots\vF4_n, \vA1\;,\;\vB1
				}
				\Tedges{F1.center/F3.center,F2.center/F4.center,A1.center/A2.center,B1.center/B2.center}
				\quad 
				\vliinf{\ltens}{}{
					\vF1_1, \ldots,\vF2_n,\vA2\ltens \vB2,\vG1_1,\ldots,\vG2_m,
				}{
					\vF3_1,\dots\vF4_n, \vA1
				}{\vB1,\vG3_1,\ldots,\vG4_m}
				\Tedges{F1.center/F3.center,F2.center/F4.center,G1.center/G3.center,G2.center/G4.center,A1.center/A2.center,B1.center/B2.center}
				\\
				\vlinf{\lone}{}{\lone}{}
				\quad
				\vlinf{\lbot}{}{\vF1_1, \ldots,\vF2_n, \lbot}{ {\vF3_1,\dots,\vF4_n \quad}}
				\Tedges{F1.center/F3.center,F2.center/F4.center}
				\quad
				\vliinf{\cprule}{}{
					\wn \vF1_1, \ldots,\wn \vF2_n,\oc \vA1
				}{F_1,\ldots, F_n, A\quad}{
					\wn \vF3_1, \ldots,\wn\vF4_n,\oc \vA2
				}
				\Tedges{F1.center/F3.center,F2.center/F4.center,A1.center/A2.center}
				\quad 
				\vlinf{\wnwrule}{}{\vF1_1, \ldots,\vF2_n, \wn A}{ {\vF3_1,\dots,\vF4_n \quad}}
				\Tedges{F1.center/F3.center,F2.center/F4.center}
				\quad 
				\vlinf{\wnbrule}{}{\vF1_1, \ldots,\vF2_n, \wn \vA2}{ {\vF3_1,\dots,\vF4_n,A, \wn \vA1 }}
				\Tedges{F1.center/F3.center,F2.center/F4.center,A1.center/A2.center}
			\end{array}
		$\end{adjustbox}
	\caption{$\nwpll$ rules: edges connect a formula in the conclusion with its parent(s) in a premise.}
	\label{fig:threads}
\end{figure*}

\begin{remark}\label{rem:prog}
	Clearly, any \prog coderivation is weakly \prog too, but the converse fails (\Cref{ex:prog}), therefore $\ppll \subsetneq \wppll$.
	Moreover, the main branch of {any} \nwbox contains by definition a \prog \octhread of its principal formula.
\end{remark}

\newemptyvertex{ghost}{}
\begin{example}\label{ex:prog}
	Coderivations in \Cref{fig:nonProg} are not weakly \prog (hence, not \prog):
	the rightmost branch of $\zeroder$, i.e., the branch  $\set{\epsilon,2,22,\ldots}$,
	and the unique branch of $\wnder$
	are infinite and contain no $\cprule$-rules.
	In contrast, the \nwbox  $\derstream{\cod{i_0},\ldots,\cod{i_n}}$ in \Cref{ex:nonReg} is \prog by \Cref{rem:prog}, since its main branch is the only infinite branch.
	Below, a regular, weakly \prog but not \prog coderivation ($\oc X$ in the conclusion of $\cprule$ is a cut formula, so the branch $\set{\epsilon, 2, 21, 212, 2121, \dots}$ is infinite but has no progressing \octhread).
	$$
	\small
		\vlderivation{
			\vliin{\cprule}{}{\wn\vnX1 \;, \;\;\;\oc\vX1}{
				\vlin{\axr}{}{X, \ \cneg{X}}{\vlhy{}}}{
					\vliin{\cutr}{}{\wn \vnX2 \;\;,  \;\;\;\; \oc\vX2}{
						\vliin{\cprule}{}{\wn\vnX3, \ \rclr{\oc\vX3}}{
							\vlin{\axr}{}{X, \ \cneg{X}}{\vlhy{}}}{
							\vliin{\cutr}{}{\wn \vnX4, \ \oc\vX4}{
								\vlin{\cprule}{}{\wn\vnX5, \rclr{\oc \vX5}}{
									\vlhy{  
										\vghost2{~}\;\; \vdots \quad\vghost1{~}
									}
								}
							}{
								\vlin{\axr}{}{\rclr{\wn \cneg X},\oc \vX7}{\vlhy{}}
							}
						}
					}{
						\vlin{\axr}{}{\rclr{\wn \cneg X},\oc \vX6}{\vlhy{}}
					}
			}
		}
		\Tedges{
			X1.225/X1.135,
			X1.135/X2.225,
			X2.225/X2.135,
			X2.135/X6.225,
			X6.225/X6.135,
			X3.225/X3.135,
			X3.135/X4.225,
			X4.225/X4.135,
			X4.135/X7.225,
			X7.225/X7.135,
			X5.225/ghost1.south%
		}
		\dTedges{ghost1.south/ghost1.north}
		\Tedges{
			nX1.225/nX1.135,
			nX1.135/nX2.225,
			nX2.225/nX2.135,
			nX2.135/nX3.225,
			nX3.225/nX3.135,
			nX3.135/nX4.225,
			nX4.225/nX4.135,
			nX4.135/nX5.225,
			nX5.225/nX5.135,
			nX5.135/ghost2.south%
		}
		\dTedges{ghost2.south/ghost2.north}
	$$
\end{example}

\begin{lemma}\label{lem:weak-prog}
	Let $\Gamma$ be a sequent.
	Then, $\proves{\pll} \Gamma$ if and only if $\proves{\wppll} \Gamma$.
\end{lemma}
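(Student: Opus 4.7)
The plan is to prove the two directions separately, exploiting the translation $\fprojp{(\cdot)} \colon \pll \to \nwpll$ already defined in \Cref{fig:translations-pll} for the forward direction, and a bar/König argument for the backward one.

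For the forward direction, given $\der \in \pll$ with conclusion $\Gamma$, I claim that $\fprojp{\der} \in \nwpll$ proves $\Gamma$ and is weakly \prog. The conclusion is preserved by inspection of the translation. Weak progression is shown by induction on $\size{\der}$. In the base case ($\der$ is an $\axr$ or $\lone$), $\fprojp{\der}$ has no infinite branch, so the condition holds vacuously. For the inductive step, if the last rule $\rrule$ of $\der$ is not $\fprule$, then $\fprojp{\der}$ has the same last rule applied to the translations $\fprojp{\der_i}$ of its premises, each of which is weakly \prog by IH; any infinite branch of $\fprojp{\der}$ descends into some $\fprojp{\der_i}$ after the root and inherits its weak progression. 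If the last rule is $\fprule$ with premise $\der'$, then $\fprojp{\der}$ is a \nwbox with all calls equal to $\fprojp{\der'}$: any infinite branch of $\fprojp{\der}$ either stays on the main branch forever---and thus contains infinitely many right premises of $\cprule$---or eventually enters a call $\fprojp{\der'}$, which is weakly \prog by IH (since $\size{\der'}<\size\der$).

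For the backward direction, let $\der \in \wppll$ prove $\Gamma$. Define $\Nodes \dfn \Setdef{v \in \der}{v \text{ is a right premise of a } \cprule \text{ rule, and no proper ancestor of } v \text{ is}}$. By construction $\Nodes$ is an antichain, and by weak progression any infinite branch of $\der$ meets $\Nodes$, so $\Nodes$ is a bar of $\der$. Let $T$ be the set of nodes strictly below $\Nodes$, i.e., nodes $v$ such that neither $v$ nor any proper ancestor belongs to $\Nodes$. Since $\der$ is binary branching and any infinite branch of $T$ would be an infinite branch of $\der$ avoiding $\Nodes$, contradicting the bar property, by K\"onig's lemma $T$ is finite. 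By the minimality built into $\Nodes$, every $\cprule$ rule whose conclusion lies in $T$ has its right premise in $\Nodes$ and its left premise in $T$; every non-$\cprule$ rule in $T$ has all premises in $T$ (since only right premises of $\cprule$ can land in $\Nodes$). We now build a finite derivation $\der'$ in $\pll$ by traversing $T$ bottom-up and replacing each $\cprule$ rule by an $\fprule$ rule applied only to its left premise; the conclusions match since both rules derive $\wn\Gamma_i, \oc A_i$ from $\Gamma_i, A_i$. All other rules of $T$ are already rules of $\pll$ and are left unchanged, so $\der'$ is indeed a $\pll$-derivation of $\Gamma$.

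The main subtlety is the backward direction: one must check that $\Nodes$ is a bar and that $T$ is exactly the ``finite silhouette'' of $\der$ on which the $\cprule$-to-$\fprule$ replacement is well-defined. Weak progression is used precisely to ensure that $\Nodes$ is encountered on \emph{every} infinite branch, so that the K\"onig argument produces a finite $T$; the structural observation that only right premises of $\cprule$ can occur in $\Nodes$ guarantees that the replacement discards only the right premises of $\cprule$-rules and preserves the sequent shape of $\fprule$.
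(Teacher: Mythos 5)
Your proof is correct and follows essentially the same route as the paper: the forward direction via the embedding $\fprojp{(\cdot)}$ of \Cref{fig:translations-pll}, and the backward direction via the bottom-up collapse of each $\cprule$ to an $\fprule$ keeping only its left premise (the paper's $\finprojp{(\cdot)}$). The only difference is that you spell out, via the bar $\Nodes$ of minimal right premises of $\cprule$-rules and K\"onig's lemma, the well-definedness (finiteness) of the resulting $\pll$-derivation, which the paper merely asserts as a consequence of weak progressiveness.
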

\begin{proof}
	Given $\dD\in\pll$, $\cprojp{\dD}\in \nwpll$ preserves the conclusion and is progressing, hence weakly \prog (see \Cref{rem:prog}).
	Conversely, given a weakly \prog coderivation $\dD$, 
	we define 
	a derivation $\dD^f\in\pll$ with the same conclusion
	by applying, bottom-up,~the~translation:
	\begin{equation*}
	\adjustbox{max width=.9\textwidth}{$
	\finproj{
		\vlderivation{
			\vlin{\rrule}{}
			{\Gamma}
			{\vldr{\der}{\Gamma'}}
		}
	}
	\dfn
	\vlderivation{
		\vlin{\rrule}{}
		{\Gamma}
		{\vldr{\der^f}{\Gamma'}}
	}
	\quad
	\finproj{
		\vlderivation{
			\vliin{\rrule}{}{\Gamma}{
				\vldr{\der_1}{\Gamma_1}
			}{
				\vldr{\der_2}{\Gamma_2}
			}
		}
	}
	\dfn
	\vlderivation{
		\vliin{\rrule}{}{\Gamma}{
			\vldr{\finprojp{\der_1}}{\Gamma_1}
		}{
			\vldr{\finprojp{\der_2}}{\Gamma_2}
		}
	}
	\qquad
	\finproj{
		\vlderivation{
			\vliin{\cprule}{}{\wn \Gamma, \oc A}{
				\vldr{\der}{\Gamma, A}
			}{
				\vldr{\der'}{\wn \Gamma, \oc A}
			}
		}
	}
	\dfn
	\vlderivation{
		\vlin{\fprule}{}{
			\wn\Gamma , \oc A
		}{
			\vldr{\der^f}{\Gamma, A}
		}
	}
	$}
	\end{equation*}
	with $\rrule \neq \cprule$.
	Note that the derivation $\dD^f$ is well-defined because $\dD$ is weakly \prog.
\end{proof}

\begin{corollary}
	The empty sequent is not provable in $\wppll$ (and hence in $\ppll$).
\end{corollary}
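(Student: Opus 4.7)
The plan is to reduce the non-provability of the empty sequent in $\wppll$ to the same fact in the finitary system $\pll$, where it follows by a straightforward rule inspection after cut-elimination.

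First I would invoke \Cref{lem:weak-prog} with $\Gamma = \emptyset$: this gives $\proves{\wppll} \emptyset$ if and only if $\proves{\pll} \emptyset$. Hence it suffices to show that the empty sequent is not provable in $\pll$. Suppose, towards a contradiction, that there is a derivation $\der \in \pll$ with conclusion the empty sequent. By \Cref{thm:cut-elimination-pll}, there exists a cut-free derivation $\der' \in \pll$ with the same (empty) conclusion.

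Now I would inspect the bottom-most rule $\rrule$ of $\der'$ by cases on the rules of \Cref{fig:sequent-system-pll} other than $\cutr$. In every case, the conclusion of $\rrule$ contains at least one explicit formula: the axiom $\axr$ introduces $A, \cneg A$; the unit rule $\lone$ introduces $\lone$; the rules $\ltens$, $\lpar$, $\lbot$, $\fprule$, $\wnwrule$, $\wnbrule$ all have a principal formula (respectively $A\ltens B$, $A\lpar B$, $\lbot$, $\oc A$, $\wn A$, $\wn A$) occurring in the conclusion. Hence the conclusion of $\der'$ cannot be the empty sequent, a contradiction. Therefore $\not\proves{\pll} \emptyset$, and by the preceding equivalence $\not\proves{\wppll} \emptyset$. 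Since $\ppll \subsetneq \wppll$ (see \Cref{rem:prog}), this also gives $\not\proves{\ppll} \emptyset$.

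There is essentially no obstacle here: the whole content of the corollary is packed into \Cref{lem:weak-prog}, whose role is precisely to export the finitary consistency of $\pll$ (which rests on cut-elimination) to the non-wellfounded setting. The only point to be careful about is that the reduction to $\pll$ really requires the \wprog hypothesis — without it one can build proofs such as $\zeroder$ in \Cref{fig:nonProg} whose cut-elimination does not terminate and for which no finitary translation $\finprojp{(\cdot)}$ is defined.
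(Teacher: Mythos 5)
Your proof is correct and follows exactly the paper's own route: reduce to $\pll$ via \Cref{lem:weak-prog}, invoke \Cref{thm:cut-elimination-pll} to obtain a cut-free derivation, and observe that every rule other than $\cutr$ has a non-empty conclusion. The extra case analysis on the bottom-most rule just spells out the paper's one-line observation in more detail.
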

\begin{proof}
	If the empty sequent were provable in $\wppll$, then there would be a cut-free derivation $\der \in \pll$ of the empty sequent by \Cref{lem:weak-prog,thm:cut-elimination-pll}, but this is impossible since $\cutr$ is the only rule in $\pll$ that could have the empty sequent in its conclusion.
\end{proof}

\subsection{Recovering (weak forms of) regularity}

The \progness cannot capture the finiteness condition of the rule $\nuprule$ in the derivations in $\dpll$.
By means of example, consider the $\nwbox$ 
below, which is progressive but cannot be the image of 
the rule $\nuprule$ via $\cprojp{(\cdot)}$ (see \Cref{fig:translations-pll}) since $\set{\der_i\mid i\in \Nset}$ is infinite.
\vspace{-15pt}
\begin{equation}\label{eqn:non-finite-support}
	\small
	\vlderivation{
		\vliin{\cprule}{}{\oc\oc\Nat}{\vldr{\der_0}{\oc\Nat}}{
			\vliin{\cprule}{}{{\oc\oc\Nat}}{\vldr{\der_1}{\oc \Nat}}{				 			
				\vliin{\cprule}{}{\reflectbox{$\ddots$}}{\vldr{\der_n}{\oc\Nat}}{
					\vlin{\cprule}{}{{\oc\oc\Nat}}{\vlhy{\vdots}}
				}
			}
		}
	}
	\qquad
	\mbox{\normalsize with
		$\der_i=\derstream{\underbrace{{\scriptstyle{\trueder,\ldots,\trueder}}}_i,\falseder}$
		for each $i\in\Nset$.
	}
\end{equation}	

To identify in $\ppll$ the coderivations corresponding to derivations in $\dpll$ and in $\pll$ via the translations $\cprojp{(\cdot)}$ and $\fprojp{(\cdot)}$, respectively, we need additional conditions.

\begin{definition}\label{def:wR}
	\label{defn:weak-regulairity}  

	A coderivation is 
	\defn{weakly regular} if it has only finitely many distinct sub-coderivations 
	whose conclusions are left premises of $\cprule$-rules;
	it is \defn{\parsimonious} if any branch contains finitely many $\cutr$ and $\wnbrule$ rules.
	We denote by $\nupll$ (resp.~$\cpll$) the set of weakly regular (resp.~regular) and \parsimonious coderivations in {$\ppll$}.
\end{definition}

\begin{remark}\label{prop:weak-regular-finite-support}
	Regularity implies weak regularity and the converse fails as 
	shown in \Cref{ex:weak-regular} below, therefore $\cpll \subsetneq \nupll$.	
		Moreover,  $\der {\in \nwpll}$ is  regular (resp.~weakly regular) 
		if and only if  any $\nwbox$ in $\der$ is periodic (resp.~has finite support).
\end{remark}

\begin{example}\label{ex:weak-regular}
	{Coderivations $\zeroder$ and $\wnder$ in \Cref{fig:nonProg}} are not \parsimonious, as their infinite branch has infinitely many $\cutr$ or  $\wnbrule$,
	but they are weakly regular, since they have no $\cprule$ rules.
	The coderivation in \eqref{eqn:non-finite-support} is not weakly regular because $\set{\der_i\mid i\in \Nset}$ is infinite.
	
	An example of a weakly regular but not regular coderivation is  the  $\nwbox$ 
	$\derstream{\cod{i_0},\ldots,\cod{i_n}}$ in \Cref{ex:nonReg} when the infinite sequence 
	$(i_j)_{j \in \Nset} \in \set{\false,\true}^\omega$ is not periodic: in each rule $\cprule$ there, the left premise can only be $\falseder$ or $\trueder$ (so the $\nwbox$ is weakly regular), but the right premise is a distinct coderivation (so the $\nwbox$ is not regular). 
	Moreover, that $\nwbox$ is \parsimonious since it contains no $\wnbrule$ or $\cutr$.
\end{example}

The sets $\cpll$ and $\nupll$ are the non-wellfounded counterparts of $\pll$ and $\dpll$, respectively.
Indeed, we have the following correspondence via the translations $\fprojp{(\cdot)}$ and $\cprojp{(\cdot)}$. 

\begin{restatable}{proposition}{translations}
	\label{prop:translations}
	\begin{enumerate}
		\item\label{p:translations-foward}	If $\der \in \pll$ (resp. $\der \in \dpll$) with conclusion $\Gamma$, then $\fprojp{\der} \in \cpll$ (resp. $\cprojp{\der} \!\in \nupll$) with conclusion $\Gamma$, and every $\cprule$ in $\fprojp{\der}$ (resp. $\cprojp{\der}$) belongs to a~$\nwbox$.
		\item\label{p:translations-backward} If $\der' \in \cpll$ (resp. $\der' \in \nupll$) and every $\cprule$ in $\der'$ belongs to a $\nwbox$, then there is $\der \in \pll$ (resp. $\der \in \dpll$) such that $\fprojp{\der} = \der'$ (resp. $\cprojp{\der} = \der'$).  
	\end{enumerate}

\end{restatable}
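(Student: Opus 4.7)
My plan is to handle part (1) by induction on the (wellfounded) derivation $\der$, and part (2) by a corecursion on the possibly non-wellfounded coderivation $\der'$.

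For part (1), all cases except $\fprule$ (in $\pll$) and $\nuprule$ (in $\dpll$) are immediate, since the translation acts by copying the rule on top of the recursively translated premises: the inductive hypothesis then transfers the conclusion, \parsimony, (weak) regularity, \progness, and the ``every $\cprule$ in a $\nwbox$'' property from the premises to $\fprojp{\der}$ (resp.~$\cprojp{\der}$). In the $\fprule$ case, the translation produces a $\nwbox$ $\nwpromotion$ whose calls are all equal to $\fprojp{\der_0}$, so $\nwpromotion$ is periodic of period $1$; every $\cprule$ of $\nwpromotion$ lies in $\nwpromotion$ by construction; \parsimony is immediate since the main chain of $\nwpromotion$ contains no $\cutr$ or $\wnbrule$; \progness follows from \Cref{rem:prog}, since any new infinite branch is either the main branch of $\nwpromotion$ (carrying a \prog \octhread on the principal $\oc$-formula) or eventually enters a call, where the inductive hypothesis applies. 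The $\nuprule$ case is analogous, except that the calls are drawn from the finite set $\{\cprojp{\der_i}\mid i\in\Nset\}$, yielding finite support (hence weak regularity) rather than period $1$.

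For part (2), the key structural observation is that, under the hypotheses, any branch of $\der'$ not eventually absorbed into the main chain of some $\nwbox$ must be finite: since every $\cprule$ lies in a $\nwbox$, such a branch contains no $\cprule$ rule and hence no \prog \octhread, contradicting \progness if it were infinite. Consequently, $\der'$ decomposes as a tree of finite non-$\cprule$ fragments glued together via the main chains of \nwboxes. I then construct $\der$ by corecursion: traversing $\der'$ top-down, I copy each non-$\cprule$ rule into $\der$, and whenever I reach the top of a $\nwbox$ $\nwpromotion$, I collapse its entire main chain into a single $\fprule$ (resp.~$\nuprule$) node in $\der$, whose premise(s) are obtained by recursively translating back the call(s) of $\nwpromotion$. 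The $\cprule$-in-$\nwbox$ condition and (weak) regularity are preserved when passing to the sub-coderivation rooted at a call (which is again in $\cpll$ or $\nupll$), so the recursion is well-defined. Each branch of $\der$ projects to a non-main-chain branch of $\der'$, which is finite; for $\pll$ this gives that $\der$ is a finite derivation by K\"onig's lemma, and for $\dpll$ the well-foundedness of $\der$ follows directly. A straightforward inspection of the translations then yields $\fprojp{\der}=\der'$ (resp.~$\cprojp{\der}=\der'$).

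The main obstacle I expect lies in the $\cpll$ case of part (2): reconstructing the single premise of an $\fprule$ rule from the calls of a $\nwbox$ $\nwpromotion$ requires that all calls of $\nwpromotion$ coincide, i.e., that $\nwpromotion$ has period exactly $1$. Regularity of $\der'$ only ensures that $\nwpromotion$ is periodic of \emph{some} period $k\geq 1$, so a further coherence argument --- combining regularity, the $\cprule$-in-$\nwbox$ hypothesis, and the analysis of main-chain sub-coderivations of $\nwpromotion$ --- is needed to exclude $k>1$. The corresponding case for $\nupll$ and $\dpll$ is by contrast straightforward: weak regularity matches exactly the finite-support side condition of $\nuprule$, and no additional coherence argument is required.
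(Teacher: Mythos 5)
Your proof of part~\ref{p:translations-foward}, and of the $\dpll$/$\nupll$ half of part~\ref{p:translations-backward}, is essentially the paper's own argument: the paper proves~\ref{p:translations-foward} by induction on $\der$ and~\ref{p:translations-backward} by induction on $\depth{\der'}$, which is finite by \Cref{lem:finite-depth}; your corecursion that collapses each $\nwbox$ into a single promotion rule is the same peeling-off, grounded by the same finiteness. One local imprecision: a branch of $\der'$ that is not eventually absorbed into the main chain of a $\nwbox$ may still cross $\cprule$-rules (by repeatedly entering calls), so it is not true that it ``contains no $\cprule$ rule''; what you actually need is \Cref{lem:infinite-branches} together with \Cref{lem:finite-depth} to bound how often a branch can enter a call. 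This is fixable and does not affect the $\dpll$ direction.

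The genuine gap is the one you flag yourself: the $\cpll$ case of part~\ref{p:translations-backward}. You are right that everything hinges on forcing each $\nwbox$ to have all calls equal, but the ``further coherence argument'' you hope for does not exist from the stated hypotheses. Consider the $\nwbox$ $\derstream{\falseder,\trueder,\falseder,\trueder}$ of \Cref{ex:nonReg} built from the period-$2$ sequence $\false,\true,\false,\true,\dots$. It is \prog, \parsimonious, and regular (it has finitely many distinct sub-coderivations; equivalently, by \Cref{prop:weak-regular-finite-support}, it is periodic), hence it lies in $\cpll$, and every $\cprule$ in it lies on the main branch of a $\nwbox$ (namely the sub-coderivation rooted at that $\cprule$'s own conclusion). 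Yet it is not $\fprojp{\der}$ for any $\der\in\pll$: the only rule whose image under $\fprojp{(\cdot)}$ produces a $\cprule$ is $\fprule$, and its image is a $\nwbox$ all of whose calls \emph{coincide}, whereas here $\falseder\neq\trueder$. So no argument can exclude period $k>1$; to complete the reconstruction one must either strengthen the hypothesis (require every $\nwbox$ of $\der'$ to be constant, i.e.\ of period $1$) or weaken the conclusion (recover $\der'$ only up to unfolding of cycles). The paper's own one-line proof of part~\ref{p:translations-backward} (``induction on the depth'') silently elides exactly this point, so your instinct to isolate it is sound; but as written, your proposal leaves the $\pll$/$\cpll$ direction unproved.
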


\Prog and weak \prog coincide in finite expandable coderivations.

\begin{restatable}{lemma}{infinitebrances}
	\label{lem:infinite-branches}\label{cor:simple-structure}
	Let $\der \in \nwpll$ be  \parsimonious.  If $\der\in\wppll$ then any infinite branch contains the principal branch of a $\nwbox$.	Moreover, $\der \in \ppll$  iff $\der \in \wppll$.
\end{restatable}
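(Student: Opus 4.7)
My plan is to first prove the structural statement that every infinite branch of a \parsimonious, weakly \prog coderivation eventually follows the main branch of some \nwbox, and then derive the equivalence of \prog and weakly \prog as an immediate consequence of this structural fact together with \Cref{rem:prog}.

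For the first claim, fix an infinite branch $\mathcal{B}$ of $\der$. Since $\der$ is \parsimonious, $\mathcal{B}$ contains only finitely many occurrences of $\cutr$ and $\wnbrule$, so past some node $v_0 \in \mathcal{B}$ the only rules encountered belong to $\set{\axr, \ltens, \lpar, \lone, \lbot, \wnwrule, \cprule}$. The zero-premise rules $\axr$ and $\lone$ cannot appear past $v_0$ on an infinite branch, so from $v_0$ onwards the rules along $\mathcal{B}$ lie in $\set{\ltens, \lpar, \lbot, \wnwrule, \cprule}$. I now assign to each sequent $\Gamma$ the natural number $\mu(\Gamma) \dfn \sum_{A \in \Gamma} |A|$. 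A routine inspection of the rule schemes in \Cref{fig:sequent-system-pll} and~\eqref{eqn:condProm} shows that $\mu$ strictly decreases when passing to either premise of $\ltens$, to the premise of $\lpar$, $\lbot$ or $\wnwrule$, or to the left premise of $\cprule$; the only rule-premise pair preserving $\mu$ is the right premise of $\cprule$, which in fact preserves the full sequent verbatim. Since $\mu$ takes values in $\Nset$, only finitely many strict decreases can occur along $\mathcal{B}$, so past some node $v_1 \in \mathcal{B}$ every step of $\mathcal{B}$ enters the right premise of a $\cprule$. This cofinal portion of $\mathcal{B}$ is then an infinite chain of $\cprule$ rules all with a common conclusion $\wn \Gamma', \oc A'$, which by \Cref{defn:boxes} is precisely the main branch of a \nwbox.

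The second claim follows quickly. The implication $\der \in \ppll \Rightarrow \der \in \wppll$ is recorded in \Cref{rem:prog}. For the converse, let $\mathcal{B}$ be any infinite branch of $\der$: by the first claim, $\mathcal{B}$ eventually follows the main branch of some \nwbox with principal formula $\oc A'$. By \Cref{rem:prog} again, this main branch carries a \prog \octhread of $\oc A'$ built from the successive occurrences of $\oc A'$ in the conclusions of the $\cprule$ rules (by \Cref{fig:threads}, each such $\oc A'$ has as parent the $\oc A'$ in the right premise of the same $\cprule$). Any maximal extension of this thread downwards along $\mathcal{B}$ yields a \prog \octhread of $\der$ along $\mathcal{B}$, showing that $\der$ is \prog.

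The main conceptual observation, and the only one requiring any real insight, is that $\cprule$ is the unique rule whose right premise reproduces its conclusion's sequent verbatim; this is what forces any branch that survives past all $\cutr$ and $\wnbrule$ rules to stabilize into the infinite tower of $\cprule$-rules characterizing a \nwbox. Everything else reduces to a case check on the rule schemes and a direct appeal to \Cref{rem:prog}.
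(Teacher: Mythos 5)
Your proof is correct and follows essentially the same route as the paper: exhaust the finitely many $\cutr$ and $\wnbrule$ occurrences along the branch, observe that it must then stabilize into right premises of $\cprule$ (hence into the main branch of a \nwbox), and read off progressivity from \Cref{rem:prog}. The only difference is that you obtain the stabilization from a size measure that strictly decreases at every rule--premise pair except the right premise of $\cprule$, whereas the paper uses the weak-progressing hypothesis to extract infinitely many sequents of shape $\wn\Gamma,\oc A$ and then rules out all but $\wnwrule$ and $\cprule$ by shape analysis; your variant is slightly more economical and in fact never invokes weak progressivity for the structural claim.
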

\begin{proof}
	Let $\der\in\wppll$ be \parsimonious, and let $\mathcal{B}$ be an infinite branch in $\der$. By \parsimony there is  $h\in\Nset$ such that $\mathcal{B}$ contains no conclusion of a $\cutr$ or $\wnbrule$ with height greater than $h$. 
	Moreover, by weakly \prog there is an infinite sequence $h\leq h_0 < h_1 <  \ldots <h_n <\ldots$ such that the sequent of $\mathcal{B}$ at height $h_i$ has shape $\wn \Gamma_i, \oc A_i$.
	By inspecting the rules in~\Cref{fig:sequent-system-pll},  each such $\wn \Gamma_i, \oc A_i$ can be either the conclusion of either a $\wnwrule$ or a $\cprule$ (with right premise $\wn \Gamma_i, \oc A_i$). 
	So, there is a $k$ large enough such that, for any $i \geq k$, only the latter case applies (and, in particular, $\Gamma_i=\Gamma$ and $A_i=A$ for some $\Gamma, A$).
	Therefore, $h_k$ is the root of a  \nwbox. 
	This also shows $\der \in \ppll$. By \Cref{rem:prog}, $\ppll \subseteq \wppll$.
\end{proof}

By inspecting the steps in \Cref{fig:cut-elim-finitary-comm,fig:cut-elim-finitary-lin,fig:cut-elim-pll}, we prove the following preservations.
\begin{restatable}{proposition}{preserves}
	\label{prop:cut-elim-preserves-finexp-reg-weakreg}
	Cut elimination preserves weak-regularity, regularity and \parsimony.
	Therefore, if $\dD\in\sysX$ with $\sysX\in\set{\cpll,\nupll}$ and
	$\der \cutelim \der'$, then also $\der' \in \sysX$.
\end{restatable}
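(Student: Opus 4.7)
I would prove Proposition~\ref{prop:cut-elim-preserves-finexp-reg-weakreg} by case analysis on the cut-elimination step $\der \cutelim \der'$, inspecting each of the multiplicative (\Cref{fig:cut-elim-finitary-lin}), commutative (\Cref{fig:cut-elim-finitary-comm}), and exponential (\Cref{fig:cut-elim-pll}) rewriting rules. The starting observation is that every such step is \emph{local}: the redex is the bottom $\cutr$ rule together with the rules immediately above it, and the reduct rearranges its premises $\der_1, \ldots, \der_n$ (possibly using some of them twice) adding only a bounded number of new rule occurrences, namely at most two $\cutr$, at most one $\wnbrule$, and, only for the $\cprule$ vs $\cprule$ step, a single new $\cprule$. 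Every sub-coderivation of $\der$ outside the redex is preserved verbatim in $\der'$.

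For \parsimony, let $\mathcal{B}'$ be an infinite branch of $\der'$. If $\mathcal{B}'$ does not cross the reduct, it coincides with an infinite branch of $\der$ and carries the same (finite) number of $\cutr$ and $\wnbrule$ occurrences. Otherwise $\mathcal{B}'$ traverses the reduct (crossing at most three new $\cutr$ or $\wnbrule$ rules) and its tail is an infinite branch contained in one of the premises $\der_i$ of the redex, which prolongs to an infinite branch of $\der$. By \parsimony of $\der$, that branch contains finitely many $\cutr$ and $\wnbrule$ rules, and the bounded contribution of the reduct then yields the conclusion for $\mathcal{B}'$.

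For weak regularity and regularity, I exploit the characterization from \Cref{prop:weak-regular-finite-support}, namely that weak regularity (resp.\ regularity) holds iff every $\nwbox$ in the coderivation has finite support (resp.\ is periodic). The key point is that no cut-elimination step creates a fresh $\nwbox$: the only step introducing a new $\cprule$ rule in the reduct is $\cprule$ vs $\cprule$, but its right premise is a $\cutr$, so this new $\cprule$ cannot itself be the root of a $\nwbox$. Every $\nwbox$ surviving in $\der'$ is therefore, as a coderivation, obtained from a $\nwbox$ already present in $\der$ by a finite transformation: shifting the stream of calls, or replacing finitely many calls by cuts combining corresponding calls of two original $\nwboxes$. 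Such transformations preserve finite support (finite unions and finite products of finite sets remain finite) and periodicity (the least common multiple of two periods is still a period), which gives the required preservation.

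The main technical obstacle is verifying carefully, for each exponential and commutative step, that the surviving $\nwboxes$ in $\der'$ really correspond in this controlled way to $\nwboxes$ of $\der$, particularly when a commutative step pushes a cut past the bottommost $\cprule$ of a $\nwbox$, or when the $\cprule$ vs $\wnbrule$ step pops a call off the top of a stream, or when $\cprule$ vs $\cprule$ aligns two streams inside a new cut. Once this bookkeeping of calls is performed uniformly for each rewriting rule, the three preservation claims follow, yielding as a consequence the closure of both $\cpll$ and $\nupll$ under $\cutelim$.
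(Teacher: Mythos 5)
Your proposal is correct and takes exactly the paper's route: the paper's entire proof is ``by inspection of the cut-elimination steps'' in the three figures, and you simply carry out that inspection explicitly (locality of the redex, bounded contribution of new $\cutr$/$\wnbrule$ occurrences to any branch, and tracking of \nwboxes via the finite-support/periodicity characterization of \Cref{prop:weak-regular-finite-support}). The only nit is quantitative: the $\cutstep{\cprule}{\wnbrule}$ and $\cutstep{\cprule}{\wnwrule}$ steps introduce a chain of $|\Gamma|$-many $\wnbrule$ (resp.\ $\wnwrule$) rules rather than ``at most one'', but the count remains bounded, so the argument is unaffected.
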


\section{Continuous cut-elimination}\label{sec:ccutelim}


Cut-elimination for (finitary) sequent calculi   proceeds by introducing   a  proof rewriting strategy that stepwise decreases an appropriate  {termination ordering (see, e.g,~\cite{troelstra_schwichtenberg_2000}). Typically, these proof rewriting strategies  consist on pushing upward    the topmost  cuts via the cut-elimination steps in order to eventually eliminate them. 
	
A somewhat dual approach is investigated in the context of non-wellfounded proofs~\cite{BaeldeDS16,FortierS13}. It consists on \emph{infinitary} proof rewriting strategies that gradually push upward  the  bottommost cuts. In this setting, the progressing condition is essential to guarantee  \emph{productivity}, i.e.,  that such  proof rewriting strategies  construct 
strictly increasing  approximations of the cut-free  proof, which can  thus be obtained as a (well-defined) \emph{limit}.  

A major obstacle of this approach arises when the bottommost cut $\rrule$ is below another one $\rrule'$.
In this case, no cut-elimination step can be applied to $\rrule$, so proof rewriting runs into an apparent  stumbling block.
To circumvent  this problem, in~\cite{BaeldeDS16,FortierS13} a special cut-elimination step is introduced, which merges $\rrule$ and $\rrule'$ in a single, generalized cut rule called \emph{multicut}.

In this section we study a continuous cut-elimination method that does not rely on multicut rules, following an alternative idea in which the notion of approximation plays an even more central rule, inspired by the topological approaches to infinite trees~\cite{COURCELLE198395}. To this end, we assume the reader familiar with basic definitions on domain-theory (see, e.g.,~\cite{domains}).

\subsection{Approximating coderivations}\label{subsec:cut-elim-approx}

In this subsection we introduce \emph{open coderivations}, which approximate coderivations. 
Open coderivations form Scott-domains, on top of which we will formally define continuous cut elimination.
Furthermore, we exploit open coderivations to present a  decomposition result for finitely expandable and  progressing coderivations.

\begin{definition}
	We define the set of rules $\opll \dfn \nwpll \cup \set{\zero}$, where $\zero\dfn \vlinf{\zero}{}{\Gamma}{}$  for any sequent $\Gamma$.\footnotemark
	\footnotetext{Previously introduced notions and definitions on coderivations extend to open coderivations in the obvious way,  e.g. the global conditions~\Cref{defn:octhread,def:wR}  and the cut-elimination relation $\cutelim$.}
	We will also refer to $\opll$  as the set of coderivations over $\opll$, which we call \defn{open coderivations}. 
	An open coderivation is \defn{normal} if no cut-elimination step can be applied to it, that is, if one premise of each $\cutr$ is a $\zero$.
	An \defn{open derivation} is a derivation in $\opll$. 
	We denote by $\opll(\Gamma)$ the set of open coderivations with conclusion $\Gamma$, and by $\fapx[\der]$ the set of finite approximations of $\der$.
\end{definition}

\begin{definition}
	Let  
	$\der$ be an open coderivation,
	$\Nodes\subseteq \set{1,2}^*$ be a set of mutually incomparable (w.r.t. the prefix  order) nodes of $\der$,
	and
	$\set{\der'_{\nu}}_{\nu \in \Nodes}$ be a set of open coderivations where 
	$\der'_{\nu}$ has the same conclusion  as the subderivation $\der_\nu$ of $\der$.
	We denote by 
	$\der\set{\der'_\nu/\nu}_{\nu\in \Nodes} =
	\der(\der'_{\nu_1}/\nu_1, \ldots, \der'_{\nu_n}/\nu_n)$, 
	the open coderivation obtained by replacing each $\der_\nu$ with 
	$\der'_\nu$.

	The \defn{pruning} of $\der$ over $\Nodes$
	is the open coderivation $\prun{\der}{\Nodes}=\der\set{\zero/\nu}_{\nu\in \Nodes}$.
	If $\der$ and $\der'$ are two open coderivations, then 
	we say that $\der$ is an \defn{approximation} of $\der'$
	(noted $\der \preceq \der'$) 
	iff 
	$\der= \prun{\der'}{\Nodes}$ for some $\Nodes\subseteq\set{1,2}^*$.
	An approximation is \defn{finite} if it is an open derivation.
\end{definition}

Note that $\der$ and $\prun{\der}{\Nodes}$ (and hence $\der'$ if $\der \preceq \der'$) have the same conclusion.

\begin{proposition}
	\label{prop:scott-domain}
	For any sequent $\Gamma$, the poset $\left(\opll(\Gamma), \preceq \right)$ is a  Scott-domain with least element the open derivation $\zero$ and with maximal elements the coderivations (in $\nwpll$) with conclusion $\Gamma$.
	The compact elements are precisely the open derivations in $\opll(\Gamma)$. 
\end{proposition}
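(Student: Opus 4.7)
The plan is to verify in turn the four ingredients of a Scott-domain: that $(\opll(\Gamma), \preceq)$ is a poset with least element $\zero$, is directed-complete, is bounded-complete, and is algebraic with compact elements precisely the open derivations; along the way I would identify the maximal elements with $\nwpll(\Gamma)$.

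First I would check the poset axioms and the bottom element. Reflexivity is pruning over the empty set; transitivity follows since $\prun{(\prun{\der}{\Nodes_1})}{\Nodes_2}$ is itself a pruning of $\der$, over the set of nodes obtained by lifting $\Nodes_2$ along the inclusion of the tree underlying $\prun{\der}{\Nodes_1}$ into the tree underlying $\der$; antisymmetry follows because a non-trivial pruning strictly decreases the set of non-$\zero$ nodes, so mutual pruning forces $\Nodes = \emptyset$ on both sides. The open derivation $\zero$ with conclusion $\Gamma$ is below every $\der \in \opll(\Gamma)$ because $\zero = \prun{\der}{\{\epsilon\}}$. For the maximal elements, if $\der \in \nwpll(\Gamma)$ carries no $\zero$-rule and $\der = \prun{\der'}{\Nodes}$, then each $\nu \in \Nodes$ would force $\der(\nu) = \zero$, so $\Nodes = \emptyset$ and $\der = \der'$; conversely, a $\der$ with a $\zero$ at some node $\nu$ is strictly extended by replacing that $\zero$ with any non-$\zero$ open coderivation of the same conclusion, e.g.\ a $\cutr$ between two $\zero$'s on dual formulas.

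I would then treat directed completeness and bounded completeness uniformly, by showing that suprema of pairwise-compatible families are computed \emph{pointwise}. For any pairwise-compatible $D \subseteq \opll(\Gamma)$ (in particular, any directed $D$ or any bounded pair), I would define $\sup D$ as the open coderivation whose underlying tree is the union of the trees underlying the elements of $D$ and whose label at each node $\nu$ is the rule carried at $\nu$ by any $d \in D$ with $d(\nu) \neq \zero$, falling back to $\zero$ if no such $d$ exists. Pairwise compatibility ensures that any two such $d$'s agree on the rule at $\nu$ and on its premise sequents, so $\sup D$ is well-defined as an open coderivation of conclusion $\Gamma$, and it is straightforwardly the join of $D$.

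Finally I would handle algebraicity together with the identification of the compact elements. Every open derivation $\der$ has finitely many non-$\zero$ nodes, so whenever $\der \preceq \sup D$ for a directed $D$, each such node is realised by some element of $D$, and directedness produces a single $d \in D$ dominating $\der$; hence every open derivation is compact. Conversely, every $\der \in \opll(\Gamma)$ is the join of the directed set $\fapx[\der]$ (the finite approximations exhaust $\der$ since pruning all nodes past a chosen depth covers every finite portion of $\der$), so if $\der$ is compact then $\der \preceq d$ for some $d \in \fapx[\der]$, forcing $\der$ to be finite. Algebraicity then follows at once from $\der = \sup \fapx[\der]$. The most delicate step --- and what I expect to be the main obstacle --- is the pointwise construction of $\sup D$: the argument hinges on the \emph{rigidity} property that whenever $\der_1, \der_2 \preceq \der$ and a node $\nu$ is non-$\zero$ in both, the rules and premise sequents at $\nu$ must coincide because they are both inherited from $\der$; the other clauses then follow by direct structural bookkeeping.
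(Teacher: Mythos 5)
Your proof is correct. Note that the paper itself never proves this proposition --- it is asserted in Section 5.1 and no argument for it appears in the appendix --- so there is no official proof to compare against; your argument (rigidity of prunings giving pointwise suprema of compatible families, finite approximations exhausting every open coderivation, and open derivations as exactly the compact elements) is the standard ideal-completion argument for such tree domains and supplies exactly what the paper leaves implicit. The one point worth making explicit in a final write-up is the step you yourself flag as delicate: that two approximations of a common bound agree on rule, premise sequents, \emph{and arity} at every node where both are non-$\zero$, since this is what guarantees the pointwise union is again a well-formed open coderivation and that each element embeds into it by pruning at its $\zero$-leaves.
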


{Cut-elimination steps essentially do not increase the size of open derivations, hence:}


\begin{restatable}{lemma}{cutElimApp}\label{thm:cut-elimApp}
	$\cutelim$ over open derivations is strongly normalizing~and~confluent.
\end{restatable}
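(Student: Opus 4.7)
The plan is to prove strong normalization first and then derive confluence from local confluence via Newman's Lemma.

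For strong normalization, I follow the paper's hint that every cut-elimination step preserves or strictly decreases the size of the open derivation. A case inspection of \Cref{fig:cut-elim-finitary-lin,fig:cut-elim-finitary-comm,fig:cut-elim-pll} confirms this: the multiplicative principal steps, the axiom step, and the exponential step $\cprule$ vs $\wnwrule$ strictly decrease the number of rule occurrences, while the commutative steps and the steps $\cprule$ vs $\cprule$ and $\cprule$ vs $\wnbrule$ preserve size exactly. Crucially, no step in $\nwpll$ duplicates a sub-derivation of a $\cprule$: unlike the classical $\fprule$ vs $\wnbrule$ reduction of $\pll$ in \Cref{fig:cut-elim-finitary-exp}, the binary $\cprule$ routes its two premises independently into the two new cuts.

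Given that size is bounded by the initial size $n$, I define a well-founded measure on open derivations as the multiset of pairs $(\size{A_c}, n - \mathrm{depth}(c))$ ranging over all $\cutr$ rules $c$ in the derivation, where $A_c$ is the cut formula of $c$ and $\mathrm{depth}(c)$ is the distance from the root to $c$. Pairs are ordered lexicographically, and multisets by multiset extension---both well-founded. Cut-elimination strictly decreases this measure: commutative steps push a cut one level higher, reducing the second coordinate; $\cprule$ vs $\wnbrule$ replaces a single cut on $\oc A$ with a cut on $\oc A$ strictly higher and a new cut on the strictly smaller formula $A$; the analogous bookkeeping applies to $\cprule$ vs $\cprule$; and the remaining principal steps either eliminate the cut or replace it with cuts on strictly smaller formulas. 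Strong normalization follows.

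For confluence, I prove local confluence by inspection of critical pairs: two cut redexes in an open derivation are either disjoint (and their reductions trivially commute) or nested, in which case the closing diagrams are constructed by chasing the relevant commutative and principal steps from \Cref{fig:cut-elim-finitary-lin,fig:cut-elim-finitary-comm,fig:cut-elim-pll}. Combined with strong normalization, Newman's Lemma yields confluence. The main obstacle I anticipate is the bookkeeping for the measure verification, particularly in the exponential steps where cuts are not simply eliminated but relocated higher and accompanied by new cuts on smaller formulas; the multiset extension of the lexicographic order is essential to absorb these additions as a net decrease.
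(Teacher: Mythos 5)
Your overall architecture (a well-founded measure for strong normalization, then local confluence plus Newman's Lemma) is the same as the paper's, but your measure has a genuine gap: the claim that cut-elimination never increases the size of an open derivation is false, and the well-foundedness of your order depends on it. In the step $\cutstep{\cprule}{\wnbrule}$ of \Cref{fig:cut-elim-pll} the reduct contains one $\wnbrule$ for each formula of $\Gamma$ plus an additional $\cutr$, and in the step $\cutstep{\cprule}{\wnwrule}$ the reduct contains one $\wnwrule$ for each formula of $\wn\Gamma$; since in an \emph{open} derivation the premises of the $\cprule$ (and of the $\wnbrule$) may be single $\zero$ rules of size $1$, these steps strictly \emph{increase} $\size{\der}$ as soon as $\Gamma$ is large enough --- in particular $\cutstep{\cprule}{\wnwrule}$ does not ``strictly decrease the number of rule occurrences'' as you assert. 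Consequently the height of the derivation is not bounded by the initial size $n$ along a reduction sequence, the quantity $n-\mathrm{depth}(c)$ can become negative and keep decreasing, and your lexicographic order on pairs then lives in $\Nset\times\mathbb{Z}$, which is not well-founded; its multiset extension inherits this failure, so strong normalization does not follow.

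The missing ingredient is precisely the component the paper places first: the number $\weightcp{\der}$ of $\cprule$ occurrences never increases and strictly decreases at every exponential step, so it must dominate the measure (and it is what bounds, a posteriori, the growth caused by the exponential steps). The paper uses the lexicographic triple $(\weightcp{\der},\size{\der},\heightcut{\der})$, where $\heightcut{\der}$ is the sum of the sizes of the subderivations rooted at $\cutr$ rules, which decreases under commutation. Alternatively, keeping your multiset format, you could replace $n-\mathrm{depth}(c)$ by the size of the subderivation rooted at $c$: this is an intrinsic natural number that strictly shrinks both under commutation and under the exponential steps, so the resulting multiset of pairs is genuinely well-founded and decreases at every step. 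The confluence half of your argument (local confluence by inspection of critical pairs, then Newman) is fine and coincides with the paper's.
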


\Prog and \FE coderivations can be approximated in a canonical~way.}

\begin{proposition}
	\label{prop:canonicity} 
	If $\der \in \ppll$ is \parsimonious, then there is a finite set $\Nodes\subseteq \set{1,2}^*$ of nodes of $\der$ such that $\prun{\der}{\Nodes}$ is a open derivation and each $v\in\Nodes$ is the root of a \nwbox in $\der$.
\end{proposition}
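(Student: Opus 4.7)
The plan is to construct $\Nodes$ as the set of minimal nodes of $\der$ at which a \nwbox starts, and then invoke König's lemma on the pruned tree. Formally, I would define
\[
\Nodes \dfn \set{v \in \set{1,2}^* \mid \der_v \text{ is a } \nwbox \text{ and } \der_w \text{ is not a } \nwbox \text{ for any proper prefix } w \text{ of } v}.
\]
By the minimality clause, any two distinct nodes of $\Nodes$ are mutually incomparable with respect to the prefix order, so the pruning $\prun{\der}{\Nodes} = \der\set{\zero/\nu}_{\nu \in \Nodes}$ is a well-defined open coderivation, and by construction each $v \in \Nodes$ is the root of a \nwbox in $\der$, as required.

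Next, I would show that $\Nodes$ is a bar of $\der$, i.e., every infinite branch of $\der$ meets $\Nodes$. Let $\mathcal{B}$ be an infinite branch of $\der$. Since $\der \in \ppll$ is \parsimonious, \Cref{cor:simple-structure} (i.e., \Cref{lem:infinite-branches}) guarantees that $\mathcal{B}$ contains the principal branch $\set{v, v2, v22, \ldots}$ of some \nwbox of $\der$. Thus the set of nodes $w \in \mathcal{B}$ such that $\der_w$ is a \nwbox is non-empty; its shallowest element (which exists since $\set{1,2}^*$ is well-ordered by length along $\mathcal{B}$) lies in $\Nodes \cap \mathcal{B}$.

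It remains to prove that $\prun{\der}{\Nodes}$ is an open derivation, i.e., that it labels a \emph{finite} tree. Any infinite branch of the pruned tree would correspond to an infinite branch of $\der$ that avoids every node of $\Nodes$ (since pruning terminates each branch reaching $\Nodes$ with a $\zero$-leaf), contradicting the bar property established above. Therefore $\prun{\der}{\Nodes}$ has no infinite branch, and since it is at most binary branching, König's lemma yields that it is finite. Consequently $\Nodes$, being contained in the set of leaves of $\prun{\der}{\Nodes}$, is finite as well.

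The main technical point, rather than an obstacle, is the appeal to \Cref{lem:infinite-branches}: without \parsimony and \prog one cannot guarantee that every infinite branch eventually \emph{settles} into the principal branch of a \nwbox, and the construction of $\Nodes$ would break down. The rest of the argument is a standard König's-lemma compactness wrap-up.
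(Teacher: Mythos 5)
Your proof is correct and follows essentially the same route as the paper's: both obtain, via \Cref{lem:infinite-branches}, a set of \nwbox-roots meeting every infinite branch, and then conclude finiteness of the pruning (hence of $\Nodes$) by (weak) K\"onig's lemma. The only difference is that you make the choice of $\Nodes$ explicit as the set of \emph{minimal} \nwbox-roots, which the paper leaves implicit.
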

\begin{proof}
	By~\Cref{cor:simple-structure}, there is a set $\Nodes$ of nodes of $\der$ such that: (i)  each node in $\Nodes$ is the root of a \nwbox, and (ii) any infinite branch of $\der$ contains a node in $\Nodes$. Thus, $\prun{\der}{\Nodes}$ must  be finite by weak K\"{o}nig's lemma, and so is $\Nodes$.
\end{proof}

\begin{definition}
	\label{def:decomp}
	Let $\der\in  \ppll$  be \parsimonious. 
	The \defn{decomposition}
	of $\der$ is the
	(unique) set of nodes $\bord \der=\set{\nu_1, \ldots, \nu_k}$ with $k \in \Nset$ such that 
	$\der_{\nu_i}$ is a $\nwbox$ for all $i\in\intset1k$
	and 
	$\base{\der}\coloneqq\prun{\der}{\bord\der}$ is a minimal (w.r.t. $\preceq$) finite approximation.
\end{definition}

\subsection{Domain-theoretic approach to continuous cut-elimination}\label{subsec:domain}

In this subsection we define \emph{maximal and continuous infinitary cut-elimination strategies} ($\mcices$), special rewriting strategies that  stepwise generate   $\omega$-chains approximating  the  cut-free  version of an open coderivation.   In other words, a $\mcices$  computes a  (Scott-)continuous function from open coderivations to cut-free open coderivations.  
Then, we introduce the \emph{height-by-height} $\mcices$, a notable example of $\mcices$ that will be used for our results, and we show that any two $\mcices$s compute the same (Scott-)continuous function.

In what follows,  $\sigma$ denotes a countable sequence of coderivations, and $\sigma(i)$ denotes the $i+1$-th coderivation in $\sigma$. We denote the length of a sequence $\sigma$ by  $\ell (\sigma)\leq \omega$.

\begin{definition}
	\label{defn:mices}
	An \defn{infinitary cut elimination strategy} (or $\ices$ for short) is a family $\sigma= \{\sigma_\der\}_{\der \in \opll}$ where, for all $\der \in \opll$,  $\sigma_\der$ is a sequence of {open} coderivations such that  $\sigma_\der(0)= \der$ and $\sigma_\der(i)\cutelim\sigma_{\der}(i+1)$ for all $0 \leq i < \ell(\sigma_\der)$.
	Given a  $\ices$   $\sigma$,  we define  the function $f_\sigma \colon \opll(\Gamma) \to \opll(\Gamma)$ as 
	$f_\sigma (\der)\dfn \bigsqcup_{i=0}^{\ell(\sigma_\der)}\cf{\sigma_{\der}(i)}$
	where $\cf{\der_i}$ is the greatest cut-free approximation of $\der_i$ (w.r.t.~$\preceq$)\footnotemark
	\footnotetext{$f_\sigma$ is well-defined, as  $(\cf{\sigma_\der(i)})_{0 \leq i < \ell{(\sigma_\der)}}$ is an $\omega$-chain in $\opll$ and so its sup exists by~\Cref{prop:scott-domain}.}.
	An $\ices$  $\sigma$ is a \defn{$\mcices$} if it is: 
	\begin{itemize}
		\item \defn{maximal}: $\sigma_{\der}(\ell(\sigma_\der))$ is normal for any open derivation $\der$  ($\ell(\sigma_\der)<\omega$ by~\Cref{thm:cut-elimApp});

		\item \defn{(Scott)-continuous}: $f_\sigma$ is Scott-continuous.
	\end{itemize}
\end{definition}

Roughly, a maximal $\ices$ is a $\ices$ that applies cut-elimination steps to open derivations (i.e., finite approximations) until a normal (possibly cut-free) open derivation is reached.

The following property states that all $\mcices$s induce the same continuous function, an easy consequence of~\Cref{thm:cut-elimApp} and continuity.

\begin{restatable}{proposition}{confluence}\label{prop:confluence}
	 If $\sigma$ and $\sigma'$ are two $\mcices$s, then $f_\sigma= f_{\sigma'}$.
\end{restatable}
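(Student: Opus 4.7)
The plan is to reduce the equality $f_\sigma = f_{\sigma'}$ over all coderivations to their equality on the compact elements of the Scott-domain, and then exploit strong normalization and confluence on open derivations to conclude.

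\textbf{Step 1 (reduction to open derivations).} By \Cref{prop:scott-domain}, the compact elements of $(\opll(\Gamma),\preceq)$ are exactly the open derivations, and every coderivation $\der\in\opll(\Gamma)$ is the directed supremum of its finite approximations $\fapx[\der]$. Since $f_\sigma$ and $f_{\sigma'}$ are Scott-continuous by definition of $\mcices$, it suffices to show $f_\sigma(\der) = f_{\sigma'}(\der)$ for every open derivation $\der$: any two Scott-continuous maps that agree on compact elements must agree on their directed sups, hence on every element of the domain.

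\textbf{Step 2 (agreement on compact elements).} Fix an open derivation $\der$. By the maximality clause in \Cref{defn:mices}, both $\sigma_\der$ and $\sigma'_\der$ end in a normal open derivation, and by strong normalization (\Cref{thm:cut-elimApp}) these sequences are finite. By the confluence clause of \Cref{thm:cut-elimApp}, the $\cutelim$-normal form of $\der$ is unique; denote it $\der^\star$. Then
\[
\sigma_\der(\ell(\sigma_\der)) \;=\; \der^\star \;=\; \sigma'_\der(\ell(\sigma'_\der)).
\]

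\textbf{Step 3 (supremum equals cut-free part of the normal form).} I would verify, by inspection of the rewriting rules in \Cref{fig:cut-elim-finitary-lin,fig:cut-elim-finitary-comm,fig:cut-elim-pll}, that $\cf{\cdot}$ is monotone along $\cutelim$: any cut-elimination step either erases a cut or pushes it upward, so every cut-free subtree in the redex is preserved in the reduct. Consequently the finite chain $\bigl(\cf{\sigma_\der(i)}\bigr)_{0 \leq i \leq \ell(\sigma_\der)}$ is $\preceq$-increasing and its supremum coincides with $\cf{\der^\star}$; symmetrically for $\sigma'_\der$. Hence
\[
f_\sigma(\der) \;=\; \cf{\der^\star} \;=\; f_{\sigma'}(\der),
\]
which together with Step 1 yields $f_\sigma = f_{\sigma'}$.

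The only delicate point is the monotonicity of $\cf{\cdot}$ along $\cutelim$ used in Step 3: it is a routine but case-by-case verification on the rewriting rules of \Cref{fig:cut-elim-finitary-lin,fig:cut-elim-finitary-comm,fig:cut-elim-pll}. All remaining ingredients are immediate consequences of Scott-continuity (Step 1) and of \Cref{thm:cut-elimApp} (Step 2).
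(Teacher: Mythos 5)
Your proof is correct and follows essentially the same route as the paper: agreement on open derivations via maximality together with strong normalization and confluence (\Cref{thm:cut-elimApp}), then extension to all coderivations by algebraicity of the Scott-domain and continuity of $f_\sigma$, $f_{\sigma'}$. Your Step 3 merely makes explicit a monotonicity fact that the paper relegates to the footnote of \Cref{defn:mices} (where $(\cf{\sigma_\der(i)})_i$ is asserted to be a chain), so nothing substantive differs.
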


Therefore, we define a specific $\mcices$ we use in our proofs, 
where cut-elimination steps are applied in a deterministic way to the minimal reducible $\cutr$-rules.

\newcommand{\hbh}[1][]{\sigma^\infty_{#1}}
\newcommand{\fhbh}[1]{f_{\hbh}{(#1)}}

\begin{definition}
	\label{def:hbh}
	The \defn{height-by-height} $\ices$ is defined as $\hbh=\set{\hbh[\der]}_{\der \in \opll}$ where $\hbh[\dD](0)=\der$ for each $\dD\in\opll$,
	and
	$\hbh[\dD](i+1)$ is the open coderivation obtained by applying a cut-elimination step to the leftmost reducible $\cutr$-rule with minimal height in $\hbh[\dD](i)$.
\end{definition}

\begin{proposition}
	The $\ices$ $\sigma^\infty$ is a $\mcices$.
\end{proposition}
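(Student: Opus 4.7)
The plan is to verify separately the two conditions defining a $\mcices$: maximality and Scott-continuity. Maximality is immediate: for any open derivation $\der$, the sequence $\sigma^\infty_\der$ is a chain of cut-elimination steps starting from a finite object, hence finite by the strong normalisation part of~\Cref{thm:cut-elimApp}. By the construction of~\Cref{def:hbh}, the sequence halts precisely when no reducible $\cutr$-rule remains in the current open derivation, i.e., on a normal form.

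For Scott-continuity I would first observe that on the compact elements of $\opll(\Gamma)$---namely the open derivations, by~\Cref{prop:scott-domain}---the function $f_{\sigma^\infty}$ coincides with $\der \mapsto \cf{\mathsf{nf}(\der)}$, where $\mathsf{nf}(\der)$ denotes the unique normal form provided by~\Cref{thm:cut-elimApp}. A standard simulation argument, lifting each cut-elimination step $\der_1 \cutelim \tilde{\der}_1$ with $\der_1 \preceq \der_2$ either trivially (if the reduced cut lies under a $\zero$-leaf of $\der_1$) or to one or several steps $\der_2 \cutelims \tilde{\der}_2$ with $\tilde{\der}_1 \preceq \tilde{\der}_2$, combined with confluence, yields $\mathsf{nf}(\der_1) \preceq \mathsf{nf}(\der_2)$, so $f_{\sigma^\infty}$ is monotone on compact elements.

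Since the domain is $\omega$-algebraic, continuity then reduces to the equation
\begin{align*}
f_{\sigma^\infty}(\der) \;=\; \bigsqcup \{\, f_{\sigma^\infty}(\der') \mid \der' \preceq \der \text{ is an open derivation} \,\}
\end{align*}
for every open coderivation $\der$. The inequality $\preceq$ is a locality observation: the $i$ cut-elimination steps producing $\sigma^\infty_\der(i)$ touch only a finite subtree of $\der$, so they apply verbatim to any finite approximation $\der^{(i)}$ of $\der$ that contains this subtree, giving $\cf{\sigma^\infty_\der(i)} \preceq f_{\sigma^\infty}(\der^{(i)})$, and the sup over $i$ concludes.

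The reverse inequality, which I expect to be the main obstacle, requires showing that for every finite $\der' \preceq \der$ the normal form $f_{\sigma^\infty}(\der') = \cf{\mathsf{nf}(\der')}$ is dominated by some $\cf{\sigma^\infty_\der(k)}$. The subtlety is that the leftmost-minimal-height criterion on $\der$ may select cuts not present in $\der'$ (because $\der'$ has $\zero$-leaves where $\der$ continues and carries additional reducible cuts), so $\sigma^\infty_\der$ and $\sigma^\infty_{\der'}$ generally follow different trajectories. My plan is to simulate a reduction $\der' \cutelims \mathsf{nf}(\der')$ step-by-step inside $\sigma^\infty_\der$ by induction on its length, using the fact that the steps on $\der$ outside the support of $\der'$ cannot affect the evolution of the $\der'$-part; confluence on the finite support (\Cref{thm:cut-elimApp}) guarantees that one can always catch up with the strategy's choices and thereby dominate $\mathsf{nf}(\der')$ in the cut-free part of some $\sigma^\infty_\der(k)$, which yields Scott-continuity.
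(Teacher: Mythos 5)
Your proof is correct in substance and, on maximality, coincides with the paper's: both reduce it to strong normalization of $\cutelim$ on open derivations (\Cref{thm:cut-elimApp}), after which the strategy of \Cref{def:hbh} necessarily halts on a normal form. Where you diverge is on continuity: the paper dispatches it with the single phrase ``by definition, $\hbh$ is continuous'', whereas you actually argue it, via monotonicity of $\der\mapsto\cf{\mathsf{nf}(\der)}$ on the compact elements of $\opll(\Gamma)$ together with the identity $\fhbh{\der}=\bigsqcup_{\der'\in\fapx[\der]}\fhbh{\der'}$ on an algebraic domain. This is not so much a different route as the route the paper leaves implicit, and your decomposition into a locality argument (finitely many steps of $\hbh[\der]$ only touch a finite subtree, hence lift to a sufficiently large finite approximation) and a simulation-plus-confluence argument for the converse inequality is the right way to make it explicit.

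One intermediate claim is stated too strongly. From $\der_1\preceq\der_2$ (both open derivations) you conclude $\mathsf{nf}(\der_1)\preceq\mathsf{nf}(\der_2)$, but this fails in general: a $\cutr$ of $\der_1$ can be stuck against a $\zero$-leaf introduced by the pruning while the corresponding $\cutr$ of $\der_2$ is reducible, so $\mathsf{nf}(\der_1)$ retains a $\cutr$ at a node where $\mathsf{nf}(\der_2)$ carries a different rule, and the relation $\preceq$ --- which only permits replacing whole subtrees by $\zero$, never changing a rule in place --- cannot hold between them. What is true, and what your monotonicity step actually needs, is $\cf{\mathsf{nf}(\der_1)}\preceq\cf{\mathsf{nf}(\der_2)}$: since a reducible cut of $\der_1$ sees exactly the same rules above it as in $\der_2$, the simulation shows that the two normal forms can only disagree at or above the stuck cuts of $\mathsf{nf}(\der_1)$, and these are precisely the nodes that the greatest cut-free approximation $\cf{\cdot}$ prunes to $\zero$. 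With that correction the monotonicity claim, and hence the whole continuity argument, goes through.
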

\begin{proof}
	By definition, $\hbh$ is continuous. 
	It is also maximal since, by \Cref{thm:cut-elimApp}, any open derivation $\der$ normalizes in $n_\dD \in \Nset$ steps;
	so, $\ell(\hbh[\dD])=n_\dD$ and $\hbh[\dD](h_\dD)$ is normal.
\end{proof}

We conclude this section by providing the sketch of proof for the continuous cut-elimination theorem, the main contribution of this paper,
establishing a productivity result and showing that continuous cut-elimination preserves all global conditions.

\begin{restatable}[Continuous Cut-Elimination]{theorem}{CCE}\label{thm:CCE} 
	\hfill
	\begin{enumerate}
		\item\label{MEGA:1} If $\der\in \wppll$, then $f_{\hbh}(\der)\in\nwpll$.
		\item\label{MEGA:2} If $\der\in \wppll$ (resp.~$\der\in \ppll$), then so is $f_{\hbh}(\der)$.
		\item\label{MEGA:3} If $\der\in \wppll$ is \parsimonious, then so is $f_{\hbh}(\der)$.
		\item\label{MEGA:4} If $\der\in \nupll$ (resp.~$\der\in \cpll$), then so is $f_{\sigma^\infty} (\der)$.
	\end{enumerate}
\end{restatable}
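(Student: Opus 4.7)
The plan is to establish the four items in sequence, each relying on techniques of the previous. All four are proved via the height-by-height $\mcices$ $\hbh$ (\Cref{def:hbh}), whose deterministic structure enables induction on the height at which cut-elimination steps occur.

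For item~(\ref{MEGA:1}) (productivity), the goal is to show that the supremum $f_{\hbh}(\der) = \bigsqcup_i \cf{\hbh[\der](i)}$ is a maximal element of the Scott-domain $\opll(\Gamma)$, equivalently that every node carries a rule of $\nwpll$ and not $\zero$. I would prove by induction on $h \in \Nset$ that there exists $N_h$ such that the rules of $\cf{\hbh[\der](N_h)}$ at all positions of height $\leq h$ already coincide with those of $f_{\hbh}(\der)$ at the same positions. The induction leverages two structural facts visible in the rules of \Cref{fig:cut-elim-finitary-lin,fig:cut-elim-finitary-comm,fig:cut-elim-pll}: commutative steps push cuts strictly upward, while principal steps either eliminate a cut or replace it with cuts of strictly larger height. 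Weak progressiveness rules out pathological infinite chains of cut rules such as $\zeroder$ from \Cref{fig:nonProg}, ensuring that after finitely many steps every cut has been pushed above any fixed height.

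For items~(\ref{MEGA:2}) and~(\ref{MEGA:3}), the common tool is a branch-tracking argument. Given an infinite branch $\mathcal{B}$ of $f_{\hbh}(\der)$, the height-by-height structure of $\hbh$ allows me to trace the origin of each node of $\mathcal{B}$ back through the cut-elimination steps and thereby associate $\mathcal{B}$ with an infinite branch $\widetilde{\mathcal{B}}$ of $\der$. Since non-principal formulas and non-redex rules are preserved across each step, a \prog \octhread along $\widetilde{\mathcal{B}}$ in $\der$ lifts to a \prog \octhread along $\mathcal{B}$ in $f_{\hbh}(\der)$, proving~(\ref{MEGA:2}); the weakly-\prog variant uses an analogous tracking of right premises of $\cprule$. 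For~(\ref{MEGA:3}), inspection of the cut-elimination rules shows that the total number of $\cutr$ and $\wnbrule$ occurrences along any single branch is non-increasing under any cut-elimination step, so \parsimony{} of $\der$ transfers along the branch correspondence to $f_{\hbh}(\der)$.

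For item~(\ref{MEGA:4}) (preservation of (weak) regularity), by \Cref{prop:weak-regular-finite-support} it suffices to show that every $\nwbox$ in $f_{\hbh}(\der)$ has finite support (respectively, is periodic). \Cref{prop:cut-elim-preserves-finexp-reg-weakreg} provides preservation at the single-step level, but lifting this to the continuous limit requires an $\omega$-compression argument in the spirit of~\cite{Terese, Saurin}. Each $\nwbox$ of $f_{\hbh}(\der)$ has calls that arise as limits of calls of corresponding $\nwboxes$ in $\hbh[\der](i)$; since $\der$ admits only finitely many distinct subcoderivations at the left premises of $\cprule$, these limits inherit the same finiteness. This item is the main obstacle: coordinating the possibly unbounded number of rewriting steps needed to normalize each call, while guaranteeing that only finitely many distinct normalized results arise, demands carefully compressing transfinitely many rewriting steps into an $\omega$-long sequence.
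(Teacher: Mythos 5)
Your overall architecture matches the paper's (stabilization of the rules below each fixed height for productivity, tracking the origin of limit branches for the progressing conditions, and an $\omega$-compression argument for regularity), but three of your four justifications rest on claims that are false for the actual cut-elimination steps. For item~(1), it is not true that principal steps ``replace a cut with cuts of strictly larger height'': the $\cutstep{\lpar}{\ltens}$ step leaves the outer of the two residual cuts at the same height, and the $\cutstep{\cprule}{\wnbrule}$ step with empty $\wn$-context does likewise, moreover re-creating an exponential cut against whatever sat above the $\wnbrule$. The crux --- which the paper isolates explicitly --- is that when a $\cprule$ belonging to the current bar meets a $\wnbrule$, the bar must be refined to a new one above that $\wnbrule$, and one must argue that only finitely many such refinements can occur; this follows from weak progressiveness because no branch of $\der$ contains infinitely many consecutive $\wnbrule$ rules (cf.\ $\wnder$ in \Cref{fig:nonProg}), not merely because $\zeroder$-style chains of cuts are excluded. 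Without this bound your induction on $h$ does not close.

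For item~(2), an infinite branch $\mathcal{B}$ of $f_{\hbh}(\der)$ that is not already present in some $\hbh[\der](i)$ does not come from a single infinite branch $\widetilde{\mathcal{B}}$ of $\der$: it is generated by an infinite $\cutr$-chain and interleaves material from the \emph{two} branches sitting above the two premises of the tracked cut. The \octhread through the cut formulas $\oc A$ is consumed by the reduction and does not survive into $\mathcal{B}$ --- and by \Cref{rem:at-most-one-infinite-thread} it is the unique progressing \octhread of its branch, so ``lifting'' from that side lifts the wrong (disappearing) thread. What survives is the progressing \octhread of the branch on the $\wn\cneg{A}$ side, which starts in the end-sequent (because the tracked cut has minimal height) and is never touched by the reduction; your argument needs exactly this case distinction. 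For item~(3), the count of $\cutr$ and $\wnbrule$ occurrences along a branch is not non-increasing: $\cutstep{\lpar}{\ltens}$ puts two nested cuts on one branch where there was one, and $\cutstep{\cprule}{\wnbrule}$ introduces a whole block of $\wnbrule$ rules (one per formula of $\wn\Gamma$) below two new cuts. The correct argument is again thread-based: infinitely many $\wnbrule$ in a limit branch would force infinitely many $\cutstep{\cprule}{\wnbrule}$ steps along the cut-chain, hence infinitely many $\wnbrule$ on the branch of $\der$ carrying the corresponding $\wn$-thread, contradicting finite expandability of $\der$. Item~(4) is essentially the paper's argument and is fine as a sketch.
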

\begin{proof}[Sketch of proof]
	\hfill
	\begin{enumerate}
		\item 	
		It suffices to prove that for any $h\geq 0$ there is $n_h\geq 0$ such that 
		$\cf{\hbh[\dD](n_h)}$ has a $\zero$-free  bar $\Nodes_h$ of rules in $\set{\axr, \unit, \cprule}$  of height greater than $h$.
		The existence of a starting  bar for $\der=\hbh[\dD](0)$ is ensured by \wprog condition.
		Then, we show how to define bars of greater height through cut-elimination. 
		The key case is when a  $\cprule$-rule in the bar  is eliminated by a $\cutstep{\cprule}{\wnbrule}$ step, in which case we exploit \Cref{prop:cut-elim-preserves-finexp-reg-weakreg} to find such a new bar.
		The crucial property to establish is that only finitely many refinements of a starting bar are needed to find the  $\Nodes_h$, which follows  from  the fact that, by \wprog condition, there is no branch of $\der$ that contains infinitely many consecutive $\wnbrule$ rules.

		\item 
		We prove the result for $\dD\in\ppll$ since the proof for $\dD\in \wppll$ is similar.
		By the previous point, $\fhbh\dD \in \nwpll$.
		By \Cref{prop:cut-elim-preserves-finexp-reg-weakreg} $\hbh[\dD](i)$ is \prog for all $i<\omega$.
		Therefore if $\fhbh\dD$ contains a non-\prog branch $\mathcal{B}$, it must have been stepwise constructed by pushing upward a $\cutr$-rule in $\dD$.
		We can track the occurrences of this $\cutr$-rule in $\hbh[\der]$ to define a sequence
		$\seq{\rrule_0, \rrule_1, \ldots, \rrule_n, \ldots}_{i\leq \omega}$ of 
		$\cutr$-rules such that  $\rrule_i\in\hbh[\dD](i)$  and 
		either $\rrule_{i}= \rrule_{i+1}$ or $\hbh[\dD](i)\cutelim \hbh[\dD](i+1)$ by applying a cut-elimination step on $\rrule_i$ producing $\rrule_{i+1}$. 
		This sequence of $\cutr$ rules must reduce infinitely many occurrences of a formula $\wn \cneg{A}$ (in a same $\wn$-thread) with infinitely many occurrences of a $\oc A$ (in a same $\oc$-thread). 
		That is, there are infinitely many cut-elimination step $\cutstep{\cprule}{\cprule}$ in the $\hbh[\dD]$ producing an infinite progressing $\oc$-thread in $\mathcal{B}$.
		
		\item 
		Similar to the previous point.

		\item 
		Akin to linear logic, we define the  \emph{depth} of a  coderivation  as  the maximal number of nested $\nwboxes$, and we prove that the depth of  progressing and finitely expandable coderivations is always finite.   
		Moreover, by~\Cref{prop:canonicity}, a \wprog and infinitely expandable coderivation $\der$ can be decomposed to a $\nwbox$-free  finite approximation $\base{\der}$ and a series of $\nwboxes$ $\nwpromotion_1, \ldots, \nwpromotion_k$  with smaller depth. 
		Using this property we define by induction on the depth of $\der$ a maximal and \emph{transfinite} $\ices$ reducing the calls of the $\nwboxes$ orderly, that is, reducing the $i$-th call to a $\cutr$-free coderivation before reducing the $i+1$-th one. 
		This transfinite  $\ices$ has the advantage of making apparent the  preservation of  (weak) regularity under cut-elimination:	leveraging on~\Cref{prop:weak-regular-finite-support},  if we reduce a $\nwbox$ with finite support  (resp. a periodic $\nwbox$) via our transfinite $\ices$, then	we  obtain in the limit a cut-free $\nwbox$ with finite support  (resp. a periodic $\nwbox$). 
		We conclude by showing that this transfinite $\ices$ can be compressed to a ($\omega$-long) $\mcices$ using methods studied in~\cite{Terese, Saurin}.%
		\qedhere
	\end{enumerate}
\end{proof}	

By definition (as the sup of cut-free open coderivations) $f_{\sigma^\infty} (\der)$ is cut-free. Each item of \Cref{thm:CCE} say in particular that $f_{\sigma^\infty} (\der)$ is $\zero$-free, which means that $f_{\sigma^\infty} (\der)$ is obtained by eliminating \emph{all} the cuts in $\der$. 
This may not be the case if $\der$ does not fulfill any of the global conditions in the hypotheses of \Cref{thm:CCE}: $f_{\sigma^\infty} (\der)$ is still cut-free but may contain some ``truncating'' $\zero$ that ``prevented'' eliminating some cut in $\der$, as in the example below.

\begin{example}
	For any finite approximation $\der$ of the  (non-weakly \prog, non-finitely expandable) open coderivation  $\zeroder$, we have $f_{\sigma^\infty} (\der)=  \zero$, so  $f_{\sigma^\infty}(\zeroder)= \zero$ by continuity.
\end{example}

\section{Relational semantics for non-wellfounded proofs}
\label{sec:semantics}

{Here} we define a denotational model for $\opll$ based on \emph{relational semantics}, {which interprets an open coderivation as} the union of the interpretations of its finite approximations, as  in~\cite{farzad-infinitary}.
{We show that relational semantics is sound for $\opll$, but not for its extension with digging}.

Relational semantics interprets exponential by finite multisets, denoted by brackets, e.g., $\mset{x_1, \ldots, x_n}$;
$+$ denotes the \emph{multiset union},
$\fmset{X}$ denotes the set of finite multisets over a set $X$. 
To correctly define the semantics of a coderivation, we need to see sequents as \emph{finite sequence} of formulas (taking their order into account), which means that we have to add an \emph{exchange} rule to $\opll$ to swap the order of two consecutive formulas in a sequent.

\begin{definition}
	\label{defn:coderivations-semantics-short}
	We 
	associate with each formula $A$ a \defn{set} $\sem A$ defined as follows:
	\begin{equation*}
		\adjustbox{max width=\textwidth}{$
			\sem{X} 		\dfn D_X 	
			\quad
			\sem{\lone} 	\dfn \set{*}
			\quad
			\sem{A \ltens B}\dfn \sem{A}\times \sem B
			\quad
			\sem{\oc A}   	\dfn \fmset{\sem A}
			\quad
			\sem{ \cneg A } \dfn \sem{A}
		$}
	\end{equation*}
	where $D_X$ is an arbitrary set.
	For a sequent $\Gamma=A_1, \ldots, A_n$, we set
	$\sem{\Gamma} \dfn \sem{A_1 \lpar\cdots\lpar A_n}$.

	Given {$\der \in \pll \cup \opll$} with conclusion $\Gamma$,
	we set $ \sem{\der} \dfn \bigcup_{n \geq 0} \sem{\der}_n \subseteq \sem{\Gamma}$, 
	where $\sem\der_0=\emptyset$
	and, for all $i\in\Nset\setminus\set{0}$, 
	$\sem\der_i$ is defined inductively according to \Cref{fig:sem}.
\end{definition}

\def\queq{=}
\begin{figure*}[t]
	\begin{adjustbox}{max width= \textwidth}$
		\begin{array}{l}
			\fSem{
				\vlinf{\axr}{}{A, \cneg A}{\vlhy{}}
			}{ n} 
			\queq
			\Setdef{(x, x)}{ x \in \sem{A} }
			\qquad
			\fSem{
				\vlderivation{
					\vliin{\cutr}{}{\Gamma, \Delta}{
						\vldr{\der'}{\Gamma, A}
					}{
						\vldr{\der''}{\Delta, \cneg A}
					}
				}
			}{n}
			\queq 
			\Setdef{(\vec x, \vec y)}{
				\exists z\in \sem{A}\mbox{ s.t.} \!
				\begin{array}{c}
					(\vec x, z)\in \fsem{\der'}{n-1} 	\\
					\mbox{ and } 						\\
					(z, \vec y)\in \fsem{\der''}{n-1} 
				\end{array}
			}
			\\[15pt]
			\fSem{
				\vlderivation{
					\vlin{\lbot}{}{ \Gamma, \lbot}{\vldr{\der'}{\Gamma }}
				}
			}{n}
			\queq
			\Setdef{(\vec x, *)}{\vec x\in \fsem{\der'}{n-1}}
			\qquad
			\fSem{
				\vlderivation{
					\vlin{\lpar}{}{\Gamma, A \lpar  B}{
						\vldr{\der'}{\Gamma, A, B}
					}
				}
			}{n}
			\queq
			\Setdef{(\vec x, (y,z))}{(\vec x, y, z)\in \fsem{\der'}{n-1}}
			\\[15pt]
			\fSem{
				\vlinf{\lone}{}{\lone}{}
			}{n}
			\queq
			\set{*}
			\qquad
			\fSem{
				\vlderivation{
					\vliin{\ltens}{}{\Gamma, \Delta, A \ltens B}{
						\vldr{\der'}{\Gamma, A}
					}{
						\vldr{\der''}{\Delta, B}
					}
				}
			}{n} 
			\queq 
			\Setdef{(\vec x, \vec y, (x, y))}{
				\begin{array}{c}
					(\vec x, x)\in \fsem{\der'}{n-1} 	\\
					\mbox{ and } 						\\
					( \vec y, y)\in \fsem{\der''}{n-1} 
				\end{array}
			}
			\qquad
			\fSem{\vlderivation{\zerorule{\Gamma}}}{n}
			\queq
			\emptyset
			\\[10pt]
			\fSem{
				\vlderivation{
					\vlin{\wnwrule}{}{ \Gamma, \wn A}{\vldr{\der'}{\Gamma }}
				}
			}{n}
			\queq
			\Setdef{ (\vec x, \emptymset)}{ \vec x\in \fsem{\der'}{n-1} } 
			\qquad
			\fSem{
				\vlderivation{
					\vlin{\wnbrule}{}{\Gamma, \wn A}{
						\vldr{\der'}{\Gamma, A, \wn A }
					}
				}
			}{n}
			\queq
			\Setdef{ (\vec x, \mset{y}+\mu)}{(\vec x, y, \mu)\in \fsem{\der'}{n-1} } 
			\\[15pt]
			\fSem{
				\vlderivation{
					\vliin{\cprule}{}{\wn \Gamma, \oc A}{
						\vldr{\der'}{ \Gamma,A }
					}{
						\vldr{\der''}{\wn \Gamma,\oc A }
					}
				}
			}{n}
			\queq 
			\Set{(\vec{\emptymset} , \emptymset)} \ \cup \  
			\Setdef{(\mset{x_1}+\mu_1, \ldots , \mset{x_k}+\mu_k, \mset{x}+\mu)}
			{\begin{array}{c}
					(x_1, \ldots,x_k, x )\in \fsem{\der'}{n-1} 
					\\
					\mbox{and}
					\\
					(\mu_1, \ldots,\mu_k, \mu )\in \fsem{\der''}{n-1}
				\end{array}
			}
			%
		\end{array}
		$\end{adjustbox}
	\caption{
		Inductive definition of the set $\fsem{\der}{n}$, for $n>0$.
	}
	\label{fig:sem}
\end{figure*}

\begin{example}
	\label{exmp:examples-of-interpretations}
	For the coderivations $\zeroder$ and  $\wnder$ in \Cref{fig:nonProg}, 
	$\sem{\zeroder}=\sem{\wnder}=\emptyset$.
	For the derivations $\falseder$ and $\trueder$ in \Cref{fig:der:examples-of-derivations}, $\sem{\falseder} = \set{(\emptymset, (x,x)) \mid x \in D_X}$ and $\sem{\trueder} = \set{(\mset{(x,y)},(x,y)) \mid x,y \in D_X}$.
	For the coderivation $\derstream{\cod{i_0},\ldots,\cod{i_n}}$ in \Cref{ex:nonReg} (with $i_j \in \set{\false,\true}$ for all $j \in \Nset$),
	$\sem{\derstream{\cod{i_0},\ldots,\cod{i_n}}} = \Set{\emptymset} \cup \Set{\mset{x_{i_0}, \dots, x_{i_n}} \in \fmset{\sem{\Nat}} \mid n \in \Nset, \ x_{i_j} \!\in \sem{\cod{i_j}} \ \forall \, 0 \leq j \leq n}$.
\end{example}

{By inspecting the cut-elimination steps and by continuity, we can prove the soundness of relational semantics with respect to cut-elimination (\Cref{thm:soundness-semantics}), thanks to the fact the interpretation of a coderivation is the union the interpretations of its finite approximation.}

\begin{restatable}{lemma}{semanticapproximation}
	\label{lem:semantic-approximation}
	{Let $\der\in \opll$. Then,  $\sem{\der} = \bigcup_{\der'\in \fapx[ \der]}\sem{\der'}$.}
\end{restatable}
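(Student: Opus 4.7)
The plan is to establish the two set inclusions separately, both resting on structural properties of the inductive definition of $\fsem{\der}{n}$ in \Cref{fig:sem}.

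For the $\supseteq$ direction, I first prove a monotonicity lemma: if $\der_1 \preceq \der_2$ in $\opll(\Gamma)$, then $\fsem{\der_1}{n} \subseteq \fsem{\der_2}{n}$ for every $n \in \Nset$. The argument proceeds by induction on $n$. The case $n = 0$ is immediate, since both sides are empty by definition. For $n = i+1$, either $\der_1 = \zero$, in which case $\fsem{\der_1}{n} = \emptyset$, or $\der_1$ and $\der_2$ share the same root rule (the root never lies in the pruning set as long as $\der_1 \neq \zero$), and the $j$-th premise of $\der_1$ is $\preceq$-below the $j$-th premise of $\der_2$. Applying the inductive hypothesis to each pair of premises and inspecting each clause of \Cref{fig:sem} then shows that $\fsem{\cdot}{i+1}$ is obtained from the $\fsem{\cdot}{i}$ of the premises via monotone set-theoretic operations, so the inclusion transfers. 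Taking the union over $n$ yields $\sem{\der_1} \subseteq \sem{\der_2}$, and in particular $\bigcup_{\der' \in \fapx[\der]} \sem{\der'} \subseteq \sem{\der}$.

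For the $\subseteq$ direction, the key observation is that $\fsem{\der}{n}$ depends only on the top $n$ levels of $\der$, because each clause of \Cref{fig:sem} evaluates premises at index $n-1$ while $\fsem{\cdot}{0} = \emptyset$. Formally, letting $\Nodes_n$ denote the set of nodes of $\der$ at depth exactly $n$, I prove by induction on $n$ that $\fsem{\prun{\der}{\Nodes_n}}{n} = \fsem{\der}{n}$: for $n \geq 1$ the root rule is preserved by the pruning and the premises of $\prun{\der}{\Nodes_n}$ are themselves prunings at depth $n-1$ of the premises of $\der$, so the inductive hypothesis closes the step. Since the underlying tree of $\der$ is binary, $\Nodes_n$ is finite and $\prun{\der}{\Nodes_n}$ is a finite open derivation belonging to $\fapx[\der]$. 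Any $x \in \sem{\der}$ lies in $\fsem{\der}{n}$ for some $n$, hence in $\fsem{\prun{\der}{\Nodes_n}}{n} \subseteq \sem{\prun{\der}{\Nodes_n}}$, yielding the desired inclusion.

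The main technical obstacle is the case-by-case verification for both the monotonicity lemma and the depth-truncation identity, requiring inspection of every clause of \Cref{fig:sem}. The only mildly subtle case is $\cprule$, whose semantic clause contains the constant pair $(\vec{\emptymset}, \emptymset)$ independently of the premises; this constant is however compatible with both monotonicity and truncation, since it contributes identically to the two sides of every comparison.
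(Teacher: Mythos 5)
Your proof is correct and takes essentially the same route as the paper's: one inclusion via monotonicity of $\semo$ with respect to $\preceq$, the other via the observation that $\fsem{\der}{n}$ is already realized by a finite approximation of $\der$. You merely make explicit (via the depth-$n$ truncation and the inductive monotonicity argument) the two facts the paper asserts without proof, which is a welcome refinement rather than a different approach.
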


\begin{restatable}[Soundness]{theorem}{soundness} 
	\label{thm:soundness-semantics}
\begin{enumerate}
	\item\label{p:soundness-semantics-single} Let $\der\in \opll$. If $\der \cutelim \der'$, then $\sem\der=\sem{\der'}$.
	\item Let $\der\in \opll$. If $\sigma$ is a $\mcices$, then $\sem{\der}= \sem{f_\sigma(\der)}$.
\end{enumerate}
\end{restatable}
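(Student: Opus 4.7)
The plan is to prove part \ref{p:soundness-semantics-single} first (semantic preservation for a single cut-elimination step), and then derive the statement for $\mcices$s by combining it with Scott-continuity and Lemma~\ref{lem:semantic-approximation}.

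For part \ref{p:soundness-semantics-single}, let $\der \cutelim \der'$. By Lemma~\ref{lem:semantic-approximation}, $\sem{\der} = \bigcup_{\der'' \in \fapx[\der]} \sem{\der''}$ and likewise for $\der'$. Since a cut-elimination step is a local rewrite involving only finitely many nodes of $\der$, every finite approximation $\der''\preceq\der$ either already satisfies $\der'' \preceq \der'$ (when $\der''$ does not contain the redex together with the relevant premises), or is rewritten by the same step to some $\der''' \preceq \der'$; and every finite approximation of $\der'$ arises in this way. Hence it suffices to show the statement on open derivations, i.e.\ that $\sem{\dD_1} = \sem{\dD_2}$ whenever $\dD_1 \cutelim \dD_2$ with $\dD_1, \dD_2$ open derivations. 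This is a standard case analysis on the cut-elimination steps of Figures~\ref{fig:cut-elim-finitary-lin},~\ref{fig:cut-elim-finitary-comm} and~\ref{fig:cut-elim-pll}, unfolding the inductive definition of $\fsem{\cdot}{n}$ in Figure~\ref{fig:sem}.

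The main obstacle will be the exponential principal steps involving the $\cprule$ rule. For the $\cprule$ vs $\cprule$ step, the clause for $\cprule$ in Figure~\ref{fig:sem} splits each $\wn$-typed entry as $\mset{x_i} + \mu_i$, and one has to verify that the two ways of pairing these splittings across the cut (head-vs-head and tail-vs-tail on one side, nested promotions on the other) yield the same relation; this follows from the associativity of multiset union. For the $\cprule$ vs $\wnbrule$ step, one decomposes the multiset $\mset{x}+\mu$ in the $\wnbrule$ clause and checks that routing the head $x$ through a cut on $A$ and the tail $\mu$ through a cut on $\oc A$ yields exactly the same relation as the original cut on $\oc A$ fed by a $\cprule$. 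The $\cprule$ vs $\wnwrule$ step uses that the $\wnwrule$ clause forces the contracted multiset to be $\emptymset$, which makes the $\cprule$-branch semantically trivial, and the multiplicative and commutative steps are straightforward.

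For part \ref{thm:soundness-semantics}.2, let $\sigma$ be a $\mcices$. By Scott-continuity of $f_\sigma$ and of $\sem{\cdot}$ (the latter is exactly Lemma~\ref{lem:semantic-approximation} phrased as the fact that $\sem{\cdot}$ preserves directed sups), we obtain
\[
\sem{f_\sigma(\der)} \;=\; \bigcup_{\der' \in \fapx[\der]} \sem{f_\sigma(\der')},
\]
so it is enough to establish $\sem{\der'} = \sem{f_\sigma(\der')}$ for every finite approximation $\der'$ of $\der$. By maximality of $\sigma$ together with Lemma~\ref{thm:cut-elimApp}, $\sigma_{\der'}$ terminates in some $n<\omega$ steps at a normal open derivation $\sigma_{\der'}(n)$, and $f_\sigma(\der') = \cf{\sigma_{\der'}(n)}$. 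Iterating part~\ref{p:soundness-semantics-single} along this finite rewriting sequence gives $\sem{\der'} = \sem{\sigma_{\der'}(n)}$. Finally, in the normal open derivation $\sigma_{\der'}(n)$ every remaining $\cutr$ has at least one $\zero$ premise, so by the clauses for $\cutr$ and $\zero$ in Figure~\ref{fig:sem} its local contribution is $\emptyset$, exactly as for $\zero$; hence $\sem{\sigma_{\der'}(n)} = \sem{\cf{\sigma_{\der'}(n)}} = \sem{f_\sigma(\der')}$, concluding the proof.
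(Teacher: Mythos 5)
Your proof is correct and follows essentially the same route as the paper: part~1 by inspection of the cut-elimination steps against the clauses of Figure~\ref{fig:sem} (the paper does this directly on open coderivations via the stratified definition of $\fsem{\cdot}{n}$, so your preliminary reduction to finite approximations is harmless but not needed), and part~2 by combining part~1 with Scott-continuity of $f_\sigma$ and Lemma~\ref{lem:semantic-approximation}. Your explicit observation that a normal open derivation has the same interpretation as its greatest cut-free approximation $\cf{\cdot}$, because a $\cutr$ with a $\zero$ premise contributes $\emptyset$ just as $\zero$ does, is in fact slightly more careful than the corresponding step in the paper's argument.
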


By \Cref{thm:soundness-semantics} and since cut-free coderivations have non-empty semantics, we have:

\begin{restatable}{corollary}{nonempty}
	\label{cor:progressiveness-non-empty} Let $\der\in \wppll$. Then $\sem{\der}\neq \emptyset$.
\end{restatable}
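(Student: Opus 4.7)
The plan is to reduce to cut-free coderivations via \Cref{thm:CCE} and then to exhibit an explicit element of the relational interpretation, exploiting a peculiar feature of the semantics of $\cprule$. First I would set $\der^\ast \dfn f_{\sigma^\infty}(\der)$ and invoke items~\ref{MEGA:1}--\ref{MEGA:2} of \Cref{thm:CCE}: the former gives $\der^\ast\in\nwpll$ (so $\der^\ast$ is totally defined, i.e.\ $\zero$-free, and moreover cut-free by construction of $f_{\sigma^\infty}$), while the latter preserves weak \prog, so $\der^\ast\in \wppll$. Then \Cref{thm:soundness-semantics} yields $\sem{\der}=\sem{\der^\ast}$, reducing the task to proving $\sem{\der^\ast}\neq\emptyset$.

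Next, I would produce a finite approximation $\der'\preceq \der^\ast$ with $\sem{\der'}\neq\emptyset$; by \Cref{lem:semantic-approximation}, this gives $\sem{\der^\ast}\supseteq\sem{\der'}\neq\emptyset$. Since $\der^\ast$ is weakly \prog, every infinite branch contains a right premise of some $\cprule$-rule; selecting one such node on each infinite branch produces a set $\Nodes$ that is a bar of $\der^\ast$. Because $\der^\ast$ is finitely branching, weak K\"onig's lemma ensures that the pruning $\der'\dfn \prun{\der^\ast}{\Nodes}$ is a finite open derivation whose only $\zero$-leaves are right premises of $\cprule$-rules.

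Finally, I would verify $\sem{\der'}\neq\emptyset$ by a routine bottom-up inspection, after fixing $D_X\neq\emptyset$ for every propositional variable $X$ (so that $\sem{A}\neq\emptyset$ for every formula $A$). The key step, and arguably the heart of the argument, is the observation from \Cref{fig:sem} that the semantics of any $\cprule$-rule always contains the witness $(\vec{\emptymset},\emptymset)$, independently of whether its right premise has empty or non-empty semantics; this is precisely what makes our $\zero$-pruning harmless. Every other rule occurring in $\der'$ (in particular no $\cutr$, since $\der^\ast$ is cut-free) trivially propagates non-emptiness from its premises to its conclusion, while the $\axr$ and $\lone$ leaves of $\der'$ already have non-empty semantics. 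The main subtlety I foresee is to make the bar/pruning construction rigorous from weak \prog and weak K\"onig's lemma, and to check carefully that the witness $(\vec{\emptymset},\emptymset)$ survives all the rules between a pruned $\cprule$ and the root of~$\der'$.
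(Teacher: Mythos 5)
Your proposal is correct and follows essentially the same route as the paper's proof: reduce to the cut-free case via soundness of a $\mcices$ (with \Cref{thm:CCE} guaranteeing the result is a $\zero$-free, weakly \prog coderivation), extract a finite pruning along a bar given by weak \progness and weak K\"onig's lemma, and conclude by induction using the key fact that the interpretation of any $\cprule$ unconditionally contains the witness $(\vec{\emptymset},\emptymset)$. The only differences are cosmetic (you place the bar at right premises of $\cprule$-rules rather than at conclusions of $\axr$, $\lone$, $\cprule$ as the paper does, and you are slightly more explicit about needing $D_X\neq\emptyset$ and about which items of \Cref{thm:CCE} are invoked).
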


We define the set of rules $\nwmell \dfn \nwpll \cup \set{\digr}$ where the rule $\digr$ (\defn{digging}) is defined in \Cref{fig:digging}. 
We also denote by $\nwmell$ the set of coderivations over the rules in $\nwmell$.
Relational semantics is naturally extended to $\nwmell$ as shown in \Cref{fig:digging}.

The proof system $\nwmell$ can be seen as a non-wellfounded version of $\mell$.
We show that, as opposed  to several fragments of $\nwpll$, in any good fragment of $\nwmell$ with digging, cut-elimination cannot reduce to cut-free coderivations preserving the relational semantics.

\begin{figure*}[!t]
	\begin{adjustbox}{max width= 0.95\textwidth}$
			\vlinf{\digr}{}{ {\Gamma}, {\wn A}}{ {\Gamma}, {\wn \wn A}}
			\qquad\qquad
			\fSem{
				\vlderivation{
					\vlin{\digr}{}{\Gamma, \wn A}{
						\vldr{\der'}{\Gamma, \wn \wn A }
					}
				}
			}{0}
			\queq
			\emptyset
			\qquad
			\fSem{
				\vlderivation{
					\vlin{\digr}{}{\Gamma, \wn A}{
						\vldr{\der'}{\Gamma, \wn \wn A }
					}
				}
			}{n}
			\queq
			\Setdef{ \left(\vec x, \displaystyle\sum_{i=1}^m \mu_i\right)}{(\vec x, \mset{\mu_1, \dots, \mu_m}) \in \fsem{\der'}{n-1}, \ m \in \Nset }
		$\end{adjustbox}
	\caption{The rule $\digr$ and its interpretation in the relational semantics ($n > 0$).}
	\label{fig:digging}
\end{figure*}

\begin{theorem}\label{thm:no-semantics-digging}
	Let $\sysX \subseteq \nwmell$ contain non-wellfounded coderivations with $\digr$. 
	Let $\cutelimnhl$ be a cut-elimination relation on $\sysX$ containing $\cutelim$ in \Cref{fig:cut-elim-finitary-lin,fig:cut-elim-finitary-comm,fig:cut-elim-pll}
	and reducing every coderivation in $\sysX$ to a cut-free one.
	Then, $\cutelimnhl$ does not preserve relational~semantics.
\end{theorem}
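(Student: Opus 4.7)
The proof would proceed by contradiction: assume $\cutelimnhl$ preserves the relational semantics. The plan is to exhibit a specific coderivation $\der \in \sysX$ involving $\digr$ whose semantics cannot be matched by any cut-free coderivation in $\sysX$ with the same conclusion. Since by hypothesis $\der \cutelimnhl^{*} \der'$ for some cut-free $\der' \in \sysX$, and since preservation of semantics forces $\sem{\der} = \sem{\der'}$, we reach a contradiction.

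The coderivation $\der$ would be built from the non-wellfounded coderivation with $\digr$ that is guaranteed to exist in $\sysX$ by hypothesis. I would compose it via cuts with simple $\pll$-derivations (for example, along the lines of $\falseder$ or $\trueder$ in \Cref{fig:der:examples-of-derivations}, or of elementary $\wnwrule$- and $\wnbrule$-constructions) so as to obtain a coderivation whose conclusion admits cut-free proofs in $\sysX$, but whose semantics genuinely depends on the flattening action of $\digr$ as specified in \Cref{fig:digging}. Combining the semantic clause for $\digr$ with the approximation characterization of \Cref{lem:semantic-approximation}, I would then extract an explicit element $e \in \sem{\der}$ whose membership in $\sem{\der}$ essentially relies on a $\digr$ rule applied to a non-wellfounded chain of $\cprule$ rules.

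The crucial step is then to show that no cut-free coderivation $\der' \in \sysX$ with the same conclusion as $\der$ satisfies $e \in \sem{\der'}$. Since cut-free coderivations may themselves contain $\digr$ rules, this requires a careful structural analysis. Using \Cref{lem:semantic-approximation} once more, it suffices to check that no finite approximation of any cut-free $\der' \in \sysX$ produces $e$ in its semantics. The argument rests on the observation that, in the absence of $\cutr$, the semantic content of any finite approximation is built strictly bottom-up and compositionally from the introduction rules of the formulas in the conclusion, so that the particular combinatorial flattening pattern underlying $e$ cannot arise.

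The main obstacle is this final structural analysis: since $\digr$ may be used freely in cut-free coderivations, one must pin down an invariant of the relational semantics that is preserved by all cut-free rules (including $\digr$) but violated by $e \in \sem{\der}$. I expect this invariant to come from the interplay between the rigid flattening shape of $\digr$'s semantics and the sequential structure of $\cprule$ rules in a cut-free context; making it explicit, and verifying that it indeed excludes $e$, is where the bulk of the work lies.
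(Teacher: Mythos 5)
Your high-level strategy---exhibit a coderivation with $\digr$ whose semantics cannot agree with that of any cut-free coderivation of the same conclusion---is indeed the skeleton of the paper's argument. But the proposal defers exactly the steps that constitute the proof, and the one concrete step you do commit to points in the wrong direction. You propose to find an element $e \in \sem{\der}$ that lies in \emph{no} cut-free $\sem{\der'}$. For the natural witness (and the one the paper uses), namely $\digder$ obtained by cutting the periodic \nwbox $\der = \derstream{\falseder,\trueder,\falseder,\trueder}$ against $\digr$ applied to an axiom, this direction fails: every individual element of $\sem{\digder}$ (a multiset of multisets whose sum lies in $\sem{\der}$) is realized by \emph{some} cut-free coderivation of $\oc\oc\Nat$, since one can always choose the calls of a nested \nwbox to produce that particular element. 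The separation must go the other way: one shows that \emph{every} cut-free coderivation of the chosen conclusion is forced to contain an element that $\sem{\digder}$ excludes. Since the theorem quantifies over all possible cut-elimination relations $\cutelimnhl$, you cannot control \emph{which} cut-free proof is produced, so an argument that works only against some cut-free proofs is not enough.

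The two ideas that make this work, and which are absent from your outline, are the following. First, the conclusion is chosen to be $\oc\oc\Nat$: a single $\oc$-formula, so the only applicable cut-free rule at every level is $\cprule$ with empty context, and hence every cut-free coderivation of $\oc\oc\Nat$ is a \nwbox of \nwboxes $\widehat{\digder}=\derstream{\der_0,\der_1,\ldots}$ with $\der_i = \derstream{\cod{k^i_0},\ldots}$ and $k^i_j\in\set{\false,\true}$. This is also what disposes of your worry about $\digr$ occurring in cut-free proofs: it simply cannot appear, because no sequent in such a proof contains a $\wn$-formula. Second, a pigeonhole argument: among $k^0_0,k^1_0,k^2_0$ two must coincide, so every $\sem{\widehat{\digder}}$ contains some $\mset{\mset{x},\mset{x}}$ with $x\in\sem{\cod{\false}}\cup\sem{\cod{\true}}$; but $\mset{x,x}\notin\sem{\der}$ because the calls of $\der$ alternate, so no such element lies in $\sem{\digder}$. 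Without the rigid classification of cut-free proofs and this pigeonhole step, the ``invariant'' you appeal to in your last paragraph remains unidentified, and the proof is not complete.
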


\begin{proof}
Consider the coderivations $\digder$ and $\widehat{\digder}$ below, where
$\der = \derstream{\falseder,\trueder,\falseder,\trueder}$,
and $\der_i = \derstream{\cod{k_{0}^{i}}, \dots, \cod{k_{n}^i}}$ and $k_{j}^i \in \set{\false,\true}$ for all $i,j \in \Nset$ (see also \Cref{ex:nonReg}).

\vspace*{-1.5\baselineskip}
\begin{equation}\label{eqn:digging}
	\hfill
	\adjustbox{max width=.75\textwidth}{$
		\digder \dfn
		\vlderivation{
			\vliin{\cutr}{}{\oc\oc\Nat}{\vldr{\der}{\oc\Nat}}{
				\vlin{\digr}{}
				{{\wn \cneg{\Nat}, \oc\oc\Nat}}
				{\vlin{\axr}{}{\wn \wn \cneg{\Nat}, \oc\oc\Nat}{}}
			}
		}
		\qquad\qquad
		\widehat{\digder} 
		\dfn
		\vlderivation{
			\vliin{\cprule}{}{\oc\oc\Nat}{\vldr{\der_0}{\oc\Nat}}{
				\vliin{\cprule}{}{{\oc\oc\Nat}}{
					\vldr{\der_1}{\oc \Nat}
				}{				 			
					\vliin{\cprule}{}{\reflectbox{$\ddots$}}{\vldr{\der_n}{\oc\Nat}}{\vlin{\cprule}{}{{\oc\oc\Nat}}{\vlhy{\vdots}}}
				}
			}
		}
	$}
	\hfill
\end{equation}	
Coderivations $\widehat{\digder}$ are the only cut-free ones with conclusion $\oc \oc \Nat$. 
Therefore, for whatever definition of the cut-elimination steps concerning $\digr$, necessarily $\digder$ will reduce to $\widehat{\digder}$.

Let $\hat{\false}$ be the only element of $\sem{\cod{\false}}$, and $\hat{\true}$ be any element of $\sem{\cod{\true}}$ (see \Cref{exmp:examples-of-interpretations}).
Note that $\hat{\false} \neq \hat{\true}$. 
It is easy to verify that $\mset{\mset{\hat{\false}},\mset{\hat{\false}}}, \mset{\mset{\hat{\true}},\mset{\hat{\true}}} \notin \sem{\digder}$,
 since $\mset{\hat{\false},\hat{\false}}, \mset{\hat{\true},\hat{\true}} \notin \sem{\der}$ (see \Cref{exmp:examples-of-interpretations}).
 Concerning $\sem{\widehat{\digder}}$, we notice that, since $k^0_0, k^1_0, k^2_0\in \set{\false,\true}$,  either  $k^0_0=k^1_0$ or  $k^1_0=k^2_0$ or $k^2_0=k^0_0$. In the first case, we have  $\mset{\mset{k^0_0}, \mset{k^1_0}}\in \sem{\widehat{\digder}}$, in the second case we have $\mset{\mset{k^1_0}, \mset{k^2_0}}\in \sem{\widehat{\digder}}$, and in the last case we have $\mset{\mset{k^2_0}, \mset{k^0_0}}\in \sem{\widehat{\digder}}$.
\end{proof}

\section{Conclusion and future work}

For future research, we envisage extending {our contributions} in many directions. First,  our notion of finite approximation seems intimately related with  that of Taylor expansion from \emph{differential linear logic} ($\mathsf{DiLL}$)~\cite{ehr:introDiLL}, where   the rule  $\zero$  (quite like the rule $0$ from $\mathsf{DiLL}$)  serves  to model approximations of \emph{boxes}. This connection with Taylor expansions  becomes even more apparent in Mazza's original systems for parsimonious logic~\cite{Mazza15,MazzaT15}, which comprise co-absorption  and co-weakening rules typical of $\mathsf{DiLL}$. These considerations deserve  further investigations. Secondly, building on a series of recent works in 
  \emph{Cyclic Implicit Complexity}, i.e., implicit computational complexity in the setting of circular and non-wellfounded proof theory~\cite{Curzi023,CurziDas}, we are currently working on second-order extensions of  $\nupll$ and $\cpll$   to characterize the complexity classes $\ppoly$  and $\ptime$ (see~\cite{BLANK}). These results would reformulate in a non-wellfounded  setting the characterization  of $\ppoly$ presented in~\cite{MazzaT15}.




\bibliography{biblo}

\begin{thebibliography}{10}

\bibitem{domains}
Roberto~M. Amadio and Pierre{-}Louis Curien.
\newblock {\em Domains and lambda-calculi}, volume~46 of {\em Cambridge tracts in theoretical computer science}.
\newblock Cambridge University Press, 1998.

\bibitem{arora_barak_2009}
Sanjeev Arora and Boaz Barak.
\newblock {\em Computational Complexity: A Modern Approach}.
\newblock Cambridge University Press, 2009.
\newblock \href {https://doi.org/10.1017/CBO9780511804090} {\path{doi:10.1017/CBO9780511804090}}.

\bibitem{BaeldeDS16}
David Baelde, Amina Doumane, and Alexis Saurin.
\newblock Infinitary proof theory: the multiplicative additive case.
\newblock In Jean{-}Marc Talbot and Laurent Regnier, editors, {\em 25th {EACSL} Annual Conference on Computer Science Logic, {CSL} 2016, August 29 - September 1, 2016, Marseille, France}, volume~62 of {\em LIPIcs}, pages 42:1--42:17. Schloss Dagstuhl - Leibniz-Zentrum f{\"{u}}r Informatik, 2016.
\newblock \href {https://doi.org/10.4230/LIPIcs.CSL.2016.42} {\path{doi:10.4230/LIPIcs.CSL.2016.42}}.

\bibitem{BaeldeM07}
David Baelde and Dale Miller.
\newblock Least and greatest fixed points in linear logic.
\newblock In Nachum Dershowitz and Andrei Voronkov, editors, {\em Logic for Programming, Artificial Intelligence, and Reasoning, 14th International Conference, {LPAR} 2007, Yerevan, Armenia, October 15-19, 2007, Proceedings}, volume 4790 of {\em Lecture Notes in Computer Science}, pages 92--106. Springer, 2007.
\newblock \href {https://doi.org/10.1007/978-3-540-75560-9\_9} {\path{doi:10.1007/978-3-540-75560-9\_9}}.

\bibitem{brotherston2011sequent}
James Brotherston and Alex Simpson.
\newblock Sequent calculi for induction and infinite descent.
\newblock {\em Journal of Logic and Computation}, 21(6):1177--1216, 2011.

\bibitem{COURCELLE198395}
Bruno Courcelle.
\newblock Fundamental properties of infinite trees.
\newblock {\em Theoretical Computer Science}, 25(2):95--169, 1983.
\newblock URL: \url{https://www.sciencedirect.com/science/article/pii/0304397583900592}, \href {https://doi.org/10.1016/0304-3975(83)90059-2} {\path{doi:10.1016/0304-3975(83)90059-2}}.

\bibitem{CurziDas}
Gianluca Curzi and Anupam Das.
\newblock Cyclic implicit complexity.
\newblock In {\em Proceedings of the 37th Annual ACM/IEEE Symposium on Logic in Computer Science}, LICS '22, New York, NY, USA, 2022. Association for Computing Machinery.
\newblock \href {https://doi.org/10.1145/3531130.3533340} {\path{doi:10.1145/3531130.3533340}}.

\bibitem{Curzi023}
Gianluca Curzi and Anupam Das.
\newblock Non-uniform complexity via non-wellfounded proofs.
\newblock In Bartek Klin and Elaine Pimentel, editors, {\em 31st {EACSL} Annual Conference on Computer Science Logic, {CSL} 2023, February 13-16, 2023, Warsaw, Poland}, volume 252 of {\em LIPIcs}, pages 16:1--16:18. Schloss Dagstuhl - Leibniz-Zentrum f{\"{u}}r Informatik, 2023.
\newblock \href {https://doi.org/10.4230/LIPIcs.CSL.2023.16} {\path{doi:10.4230/LIPIcs.CSL.2023.16}}.

\bibitem{DanosJ03}
Vincent Danos and Jean{-}Baptiste Joinet.
\newblock Linear logic and elementary time.
\newblock {\em Inf. Comput.}, 183(1):123--137, 2003.
\newblock \href {https://doi.org/10.1016/S0890-5401(03)00010-5} {\path{doi:10.1016/S0890-5401(03)00010-5}}.

\bibitem{Das2021}
Anupam Das.
\newblock On the logical strength of confluence and normalisation for cyclic proofs.
\newblock In Naoki Kobayashi, editor, {\em 6th International Conference on Formal Structures for Computation and Deduction, {FSCD} 2021, July 17-24, 2021, Buenos Aires, Argentina (Virtual Conference)}, volume 195 of {\em LIPIcs}, pages 29:1--29:23. Schloss Dagstuhl - Leibniz-Zentrum f{\"{u}}r Informatik, 2021.
\newblock \href {https://doi.org/10.4230/LIPIcs.FSCD.2021.29} {\path{doi:10.4230/LIPIcs.FSCD.2021.29}}.

\bibitem{dax2006proof}
Christian Dax, Martin Hofmann, and Martin Lange.
\newblock A proof system for the linear time $\mu$-calculus.
\newblock In {\em International Conference on Foundations of Software Technology and Theoretical Computer Science}, pages 273--284. Springer, 2006.

\bibitem{ehr:introDiLL}
Thomas Ehrhard.
\newblock An introduction to differential linear logic: proof-nets, models and antiderivatives, 2016.
\newblock \href {https://arxiv.org/abs/1606.01642} {\path{arXiv:1606.01642}}.

\bibitem{Farzad}
Thomas Ehrhard and Farzad Jafar{-}Rahmani.
\newblock On the denotational semantics of linear logic with least and greatest fixed points of formulas.
\newblock {\em CoRR}, abs/1906.05593, 2019.
\newblock URL: \url{http://arxiv.org/abs/1906.05593}, \href {https://arxiv.org/abs/1906.05593} {\path{arXiv:1906.05593}}.

\bibitem{farzad-infinitary}
Thomas Ehrhard, Farzad Jafarrahmani, and Alexis Saurin.
\newblock {On relation between totality semantic and syntactic validity}.
\newblock In {\em {5th International Workshop on Trends in Linear Logic and Applications (TLLA 2021)}}, Rome (virtual), Italy, June 2021.
\newblock URL: \url{https://hal-lirmm.ccsd.cnrs.fr/lirmm-03271408}.

\bibitem{fortier2013cuts}
J{\'e}r{\^o}me Fortier and Luigi Santocanale.
\newblock Cuts for circular proofs: semantics and cut-elimination.
\newblock In {\em Computer Science Logic 2013 (CSL 2013)}. Schloss Dagstuhl-Leibniz-Zentrum fuer Informatik, 2013.

\bibitem{FortierS13}
J{\'{e}}r{\^{o}}me Fortier and Luigi Santocanale.
\newblock Cuts for circular proofs: semantics and cut-elimination.
\newblock In Simona Ronchi~Della Rocca, editor, {\em Computer Science Logic 2013 {(CSL} 2013), {CSL} 2013, September 2-5, 2013, Torino, Italy}, volume~23 of {\em LIPIcs}, pages 248--262. Schloss Dagstuhl - Leibniz-Zentrum f{\"{u}}r Informatik, 2013.
\newblock \href {https://doi.org/10.4230/LIPIcs.CSL.2013.248} {\path{doi:10.4230/LIPIcs.CSL.2013.248}}.

\bibitem{gir:ll}
Jean-Yves Girard.
\newblock Linear logic.
\newblock {\em Theoretical Computer Science}, 50(1):1--101, 1987.
\newblock \href {https://doi.org/10.1016/0304-3975(87)90045-4} {\path{doi:10.1016/0304-3975(87)90045-4}}.

\bibitem{girard:98}
Jean-Yves Girard.
\newblock Light linear logic.
\newblock {\em Information and Computation}, 143(2):175--204, 1998.
\newblock URL: \url{https://www.sciencedirect.com/science/article/pii/S0890540198927006}, \href {https://doi.org/10.1006/inco.1998.2700} {\path{doi:10.1006/inco.1998.2700}}.

\bibitem{Kuperberg-Pous21}
Denis Kuperberg, Laureline Pinault, and Damien Pous.
\newblock Cyclic proofs, system {T}, and the power of contraction.
\newblock {\em Proc. {ACM} Program. Lang.}, 5({POPL}):1--28, 2021.
\newblock \href {https://doi.org/10.1145/3434282} {\path{doi:10.1145/3434282}}.

\bibitem{lafont:soft}
Yves Lafont.
\newblock Soft linear logic and polynomial time.
\newblock {\em Theoretical Computer Science}, 318(1):163--180, 2004.
\newblock Implicit Computational Complexity.
\newblock \href {https://doi.org/10.1016/j.tcs.2003.10.018} {\path{doi:10.1016/j.tcs.2003.10.018}}.

\bibitem{BLANK}
Gianluca~Curzi Matteo~Acclavio and Giulio Guerrieri.
\newblock Non-uniform polynomial time via non-wellfounded parsimonious proofs.
\newblock URL: \url{http://gianlucacurzi.com/Non-uniform-polynomial-time-via-non-wellfounded-parsimonious-proofs.pdf}.

\bibitem{Mazza15}
Damiano Mazza.
\newblock Simple parsimonious types and logarithmic space.
\newblock In Stephan Kreutzer, editor, {\em 24th {EACSL} Annual Conference on Computer Science Logic, {CSL} 2015, September 7-10, 2015, Berlin, Germany}, volume~41 of {\em LIPIcs}, pages 24--40. Schloss Dagstuhl - Leibniz-Zentrum f{\"{u}}r Informatik, 2015.
\newblock \href {https://doi.org/10.4230/LIPIcs.CSL.2015.24} {\path{doi:10.4230/LIPIcs.CSL.2015.24}}.

\bibitem{MazzaT15}
Damiano Mazza and Kazushige Terui.
\newblock Parsimonious types and non-uniform computation.
\newblock In Magn{\'{u}}s~M. Halld{\'{o}}rsson, Kazuo Iwama, Naoki Kobayashi, and Bettina Speckmann, editors, {\em Automata, Languages, and Programming - 42nd International Colloquium, {ICALP} 2015, Kyoto, Japan, July 6-10, 2015, Proceedings, Part {II}}, volume 9135 of {\em Lecture Notes in Computer Science}, pages 350--361. Springer, 2015.
\newblock \href {https://doi.org/10.1007/978-3-662-47666-6\_28} {\path{doi:10.1007/978-3-662-47666-6\_28}}.

\bibitem{NigamM09}
Vivek Nigam and Dale Miller.
\newblock Algorithmic specifications in linear logic with subexponentials.
\newblock In Ant{\'{o}}nio Porto and Francisco~Javier L{\'{o}}pez{-}Fraguas, editors, {\em Proceedings of the 11th International {ACM} {SIGPLAN} Conference on Principles and Practice of Declarative Programming, September 7-9, 2009, Coimbra, Portugal}, pages 129--140. {ACM}, 2009.
\newblock \href {https://doi.org/10.1145/1599410.1599427} {\path{doi:10.1145/1599410.1599427}}.

\bibitem{niwinski1996games}
Damian Niwi{\'n}ski and Igor Walukiewicz.
\newblock Games for the $\mu$-calculus.
\newblock {\em Theoretical Computer Science}, 163(1-2):99--116, 1996.

\bibitem{pag:tor:StrongNorm}
Michele Pagani and Lorenzo Tortora~de Falco.
\newblock Strong normalization property for second order linear logic.
\newblock {\em Theor. Comput. Sci.}, 411(2):410–444, jan 2010.
\newblock \href {https://doi.org/10.1016/j.tcs.2009.07.053} {\path{doi:10.1016/j.tcs.2009.07.053}}.

\bibitem{quatrini:phd}
Myriam Quatrini.
\newblock {\em S{\'e}mantique coh{\'e}rente des exponentielles: de la logique lin{\'e}aire {\`a} la logique classique}.
\newblock PhD thesis, Aix-Marseille 2, 1995.

\bibitem{Roversi}
Luca Roversi and Luca Vercelli.
\newblock Safe recursion on notation into a light logic by levels.
\newblock In Patrick Baillot, editor, {\em Proceedings International Workshop on Developments in Implicit Computational complExity, {DICE} 2010, Paphos, Cyprus, 27-28th March 2010}, volume~23 of {\em {EPTCS}}, pages 63--77, 2010.
\newblock \href {https://doi.org/10.4204/EPTCS.23.5} {\path{doi:10.4204/EPTCS.23.5}}.

\bibitem{Saurin}
Alexis Saurin.
\newblock {A linear perspective on cut-elimination for non-wellfounded sequent calculi with least and greatest fixed points (extended version)}.
\newblock working paper or preprint, 2023.
\newblock URL: \url{https://hal.science/hal-04169137}.

\bibitem{Simpson17}
Alex Simpson.
\newblock Cyclic arithmetic is equivalent to peano arithmetic.
\newblock In Javier Esparza and Andrzej~S. Murawski, editors, {\em Foundations of Software Science and Computation Structures - 20th International Conference, {FOSSACS} 2017, Held as Part of the European Joint Conferences on Theory and Practice of Software, {ETAPS} 2017, Uppsala, Sweden, April 22-29, 2017, Proceedings}, volume 10203 of {\em Lecture Notes in Computer Science}, pages 283--300, 2017.
\newblock \href {https://doi.org/10.1007/978-3-662-54458-7\_17} {\path{doi:10.1007/978-3-662-54458-7\_17}}.

\bibitem{Curry-Howard}
Morten~Heine S\o{}rensen and Pawel Urzyczyn.
\newblock {\em Lectures on the Curry-Howard Isomorphism, Volume 149 (Studies in Logic and the Foundations of Mathematics)}.
\newblock Elsevier Science Inc., USA, 2006.

\bibitem{Terese}
Terese.
\newblock {\em Term rewriting systems}, volume~55 of {\em Cambridge tracts in theoretical computer science}.
\newblock Cambridge University Press, 2003.

\bibitem{troelstra_schwichtenberg_2000}
A.~S. Troelstra and H.~Schwichtenberg.
\newblock {\em Basic Proof Theory}.
\newblock Cambridge Tracts in Theoretical Computer Science. Cambridge University Press, 2 edition, 2000.
\newblock \href {https://doi.org/10.1017/CBO9781139168717} {\path{doi:10.1017/CBO9781139168717}}.

\end{thebibliography}

\clearpage

\appendix

\section{Appendix of \Cref{sec:parsimoniousLogic}}\label{app:3}

\begin{figure*}[t]
	\adjustbox{max width=\textwidth}{$\begin{array}{c|c}
		\ptom{
			\vlderivation{
				\vlin{\rrule}{}
				{\Gamma}
				{\vldr{\der}{\Gamma'}}
			}
		}
		\dfn
		\vlderivation{
			\vlin{\rrule}{}
			{\Gamma}
			{\vldr{\ptomp{\der}}{\Gamma'}}
		}
		\quad
		\ptom{
			\vlderivation{
				\vliin{\trule}{}{\Gamma}{
					\vldr{\der_1}{\Gamma_1}
				}{
					\vldr{\der_2}{\Gamma_2}
				}
			}
		}
		\dfn
		\vlderivation{
			\vliin{\trule}{}{\Gamma}{
				\vldr{\ptomp{\der_1}}{\Gamma_1}
			}{
				\vldr{\ptomp{\der_2}}{\Gamma_2}
			}
		}
		&
		\ptom{
			\vlderivation{\vlin{\wnbrule}{}{\Gamma, \wn A}{
				{\vldr{\der}{\Gamma, A,\wn A}}
			}}
		}
		\dfn
		\vlderivation{
			\vlin{\wncrule}{}{\Gamma, \wn A}{
				\vlin{\wnderr}{}{\Gamma, \wn A,\wn A}{					{\vldr{\ptomp\der}{\Gamma, A,\wn A}}
				}
			}
		}
		\quad
		\ptom{
			\vlderivation{\vlin{\fprule}{}{\wn \Gamma, \oc A}{
				{\vldr{\der}{\Gamma, A}}
			}}
		}
		\dfn
		\vlderivation{
			\vlin{\prule}{}{\wn \Gamma, \oc A}{\vliq{\wnderr}{}{\wn \Gamma, A}{
					{\vldr{\ptomp{\der}}{\Gamma,A}}
				}
			}
		}
		\\
		\mbox{for all } 
		\rrule\in\set{\axr,\lpar,\lone,\lbot,\wnwrule}
		\quand
		\trule\in\set{\cutr,\ltens}
	\end{array}$}
	\caption{
		Translation $\ptom{\cdot}$ from $\pll$ to $\mell$.
	}
	\label{fig:translations-mell}
\end{figure*}

\begin{figure}[t]
	\centering
	\adjustbox{max width=\textwidth}{$
		\begin{tikzcd}
			\vlderivation{
				\vliin{\cutr}{}{
					\wn\Gamma, \wn\Delta, \oc B
				}{
					\vlin{\fprule}{}{\wn \Gamma, \oc A}{\vlhy{\Gamma, A}}
				}{
					\vlin{\fprule}{}{\wn \cneg A, \wn\Delta, \oc B}{
						\vlhy{\cneg A, \Delta, B}
					}
				}
			}
			\arrow[r, "\spadesuit"]
			\arrow[dd, "\cutstep{\wnbrule}{\fprule}"]
			&
			\vlderivation{
				\vliin{\cutr}{}{
					\wn\Gamma, \wn\Delta, \oc B
				}{
					\vlin{\prule}{}{\wn \Gamma, \oc A}{
						\vliq{\wnderr}{}{\wn\Gamma, A}{\vlhy{\Gamma, A}}
					}
				}{
					\vlin{\prule}{}{\wn \cneg A, \wn\Delta, \oc B}{
						\vliq{\wnderr}{}{\wn\cneg A, \wn\Delta, B}{\vlhy{\cneg A, \Delta, B}} 
					}
				}
			}
			\arrow[d,"\cutstep{\fprule}{\fprule}"]
			\\&
			\vlderivation{
				\vlin{\prule}{}{
					\wn\Gamma, \wn\Delta, \oc B
				}{
					\vliin{\cutr}{}{
						\wn\Gamma, \wn\Delta, B
					}{
						\vlin{\prule}{}{\wn \Gamma, \oc A}{
							\vliq{\wnderr}{}{\wn\Gamma, A}{\vlhy{\Gamma, A}}
						}
					}{
						\vliq{\wnderr}{}{\cneg A, \wn\Delta, B}{
							\vlin{\wnderr}{}{\wn\cneg A, \Delta, B}{
								\vlhy{\cneg A, \Delta, B}
							} 
						}
					}
				}
			}
			\arrow[d, two heads, "\cutstep{\fprule}{\wnderr}"]
			\\
			\vlderivation{
				\vlin{\fprule}{}{
					\wn\Gamma, \wn\Delta, \oc B
				}{
					\vliin{\cutr}{}{
						\Gamma, \Delta, B
					}{
						\vlhy{\Gamma, A}
					}{
						\vlhy{\cneg A, \Delta, B}
					}
				}
			}
			\arrow[r, "\spadesuit"]
			&
			\vlderivation{
				\vlin{\prule}{}{
					\wn\Gamma, \wn\Delta, \oc B
				}{
					\vliq{\wnderr}{}{\wn\Gamma, \wn\Delta, B}{
						\vliin{\cutr}{}{
							\Gamma, \Delta, B
						}{
							\vlhy{\Gamma, A}
						}{
							\vlhy{\cneg A, \Delta, B}
						}
					}
				}
			}
		\end{tikzcd}
	$}
	\caption{Commutation of the $\cutstep{\wnbrule}{\fprule}$ step and $\ptom{\cdot}$.}
	\label{fig:pllVSmell1}
\end{figure}
\begin{figure}[t]
	\centering
	\adjustbox{max width=\textwidth}{$		
			\begin{tikzcd}
				\vlderivation{
					\vliin{\cutr}{}{\wn \Gamma, \Delta}{
						\vlin{\fprule}{}{\wn \Gamma, \oc A}{
							\vlhy{\Gamma, A}
						}
					}{
						\vlin{\wnbrule}{}{\Delta, \wn \cneg A}{
							\vlhy{\Delta,\cneg {A}\!, \wn\cneg A}
						}
					}
				}
				\arrow[r, "\spadesuit"]
				\arrow[ddd, "\cutstep{\fprule}{\wnbrule}"]
				&
				\vlderivation{
					\vliin{\cutr}{}{\wn \Gamma, \Delta}{
						\vlin{\prule}{}{\wn \Gamma, \oc A}{
							\vliq{\wnderr}{}{\wn\Gamma, A}{\vlhy{\Gamma, A}}
						}
					}{
						\vlin{\wncrule}{}{\Delta, \wn \cneg A}{
							\vlin{\wnderr}{}{
								\Delta,\wn\cneg A, \wn\cneg A
							}{
								\vlhy{\Delta,\cneg {A}, \wn\cneg A}
							}
						}
					}
				}
				\arrow[d,"\cutstep{\fprule}{\wncrule}"]
				\\&
				\vlderivation{
					\vliq{\wncrule}{}{\wn \Gamma, \Delta}{
						\vliin{\cutr}{}{\wn \Gamma,\wn \Gamma, \Delta}{
							\vlin{\prule}{}{\wn \Gamma, \oc A}{
								\vliq{\wnderr}{}{\wn\Gamma, A}{\vlhy{\Gamma, A}}
							}
						}{
							\vliin{\cutr}{}{\wn \Gamma,\Delta, \wn\cneg A}{
								\vlin{\prule}{}{\wn \Gamma, \oc A}{
									\vliq{\wnderr}{}{\wn\Gamma, A}{\vlhy{\Gamma, A}}
								}
							}{
								\vlin{\wnderr}{}{\Delta,\wn\cneg A, \wn\cneg A}{
									\vlhy{\Delta,\cneg {A}, \wn\cneg A}
								}
							}
						}
					}
				}
				\arrow[d, "\mbox{commutative step}"]
				\\
				&
				\vlderivation{
					\vliq{\wncrule}{}{\wn \Gamma, \Delta}{
						\vliin{\cutr}{}{\wn \Gamma,\wn \Gamma, \Delta}{
							\vlin{\prule}{}{\wn \Gamma, \oc A}{
								\vliq{\wnderr}{}{\wn\Gamma, A}{\vlhy{\Gamma, A}}
							}
						}{
							\vlin{\wnderr}{}{\wn \Gamma,\Delta, \wn\cneg A}{
								\vliin{\cutr}{}{\wn \Gamma,\Delta, \cneg A}{
									\vlin{\prule}{}{\wn \Gamma, \oc A}{
										\vliq{\wnderr}{}{\wn\Gamma, A}{\vlhy{\Gamma, A}} 
									}
								}{
									\vlhy{\Delta,\cneg {A}, \wn\cneg A}
								}
							}
						}
					}
				}
				\arrow[d, two heads, "\cutstep{\fprule}{\wnderr}"]
				\\
				\vlderivation{
					\vliq{\wnbrule}{}{\wn\Gamma, \Delta}{
						\vliin{\cutr}{}{\Gamma, \wn \Gamma,\Delta }{
							\vlhy{\Gamma, A}
						}{
							\vliin{\cutr}{}{\wn\Gamma,  \Delta, \cneg A}{
								\vlin{\fprule}{}{\wn \Gamma,\oc A}{\vlhy{\Gamma , A}}
							}{
								\vlhy{\Delta, \cneg {A}\!, \wn \cneg A}
							}
						}
					}
				}
				\arrow[r,"\spadesuit"]
				&
				\vlderivation{
					\vliq{\wncrule}{}{\wn\Gamma, \Delta}{
						\vliq{\wnderr}{}{\wn\Gamma,\wn \Gamma, \Delta}{
							\vliin{\cutr}{}{\Gamma, \wn \Gamma,\Delta }{
								\vlhy{\Gamma, A}
							}{
								\vliin{\cutr}{}{\wn\Gamma,  \Delta, \cneg A}{
									\vlin{\prule}{}{\wn \Gamma, \oc A}{
										\vliq{\wnderr}{}{\wn\Gamma, A}{\vlhy{\Gamma, A}} 
									}
								}{
									\vlhy{\Delta, \cneg {A}\!, \wn \cneg A}
								}
							}
						}
					}
				}
			\end{tikzcd}
		$}
	\caption{
		Commutation of the $\cutstep{\fprule}{\wnbrule}$ step with $\ptom\cdot$.
	}
	\label{fig:pllVSmell2}
\end{figure}

\cutelimPLL*
\begin{proof}
	We recall the sequent calculus for (propositional) \emph{multiplicative exponential linear logic}  
	$\mell=\set{\axr,\ltens,\lpar,\lone, \lbot,\cutr,\prule,\wnwrule,\wnderr,\wncrule}$
	where
	the \defn{promotion} ($\prule$), \defn{dereliction} ($\wnderr$), \defn{contraction} ($\wncrule$) rules are defined as follows:
	\begin{equation}
		\hfill
		\vlinf{\prule}{}{\wn \Gamma, \oc A}{\wn \Gamma, A}
		\qquad
		\vlinf{\wnderr}{}{\Gamma, \wn A}{ \Gamma, A}
		\qquad
		\vlinf{\wncrule}{}{ \Gamma, \wn  A}{ \Gamma, \wn A, \wn A}
		\hfill
	\end{equation}
	
	We also denote by $\mell$ the set of derivations over the rules in $\mell$, 
	and we map each derivation in $\dD\in\pll$ to 
	a derivation in $\ptom\dD\in\mell$
	$\ptom{\cdot	} \colon \pll \to \mell$
	defined in \Cref{fig:translations-mell} by induction on derivations.
	
	In order to prove that the following diagram commute, 
	$$
	\begin{tikzcd}
		\der
		\arrow[r, "\spadesuit"]
		\arrow[d]
		&
		\ptomp\der 
		\arrow[d, "\mbox{possibly many steps}"]
		\\
		\der'
		\arrow[r, "\spadesuit"]
		&
		\ptom{\der'}
	\end{tikzcd}
	$$	
	Each cut-elimination step in $\pll$ corresponds to a cut-elimination step in $\mell$ except the ones in \Cref{fig:pllVSmell1,fig:pllVSmell2}, 
	where a cut-elimination step in $\pll$ can be simulated by a sequence of cut-elimination steps in $\mell$.
	In these Figures each macro-step denoted by $\mathrel{\mathrlap{\rightarrow}\mkern.1mu\rightarrow}$
	involves a unique step from \Cref{fig:cut-elim-finitary-exp,fig:cut-elim-finitary-comm} (the one marked) and certain additional commutative cut-elimination steps of the following form below
\begin{equation}\label{eq:additional-commutative}
	\footnotesize
	\vlderivation{
		\vliin{\cutr}{}{\Gamma, \Delta, \wn B}{
			\vlhy{\Gamma, A\!\!\!\!\!}
		}{
			\vlin{\wnderr}{}{\cneg A, \Delta,\wn B}{
				\vlhy{\cneg A, \Delta, B}
			}
		}
	}
	\cutelim\!\!
	\vlderivation{
		\vlin{\wnderr}{}{\Gamma, \Delta, \wn B}{
			\vliin{\cutr}{}{\Gamma, \Delta, B}{
				\vlhy{\Gamma, A}
			}{
				\vlhy{\cneg A,\Delta, B}
			}
		}
	}
	\quad
	\quad
	\vlderivation{
		\vlin{\wncrule}{}{\Gamma, \wn A, \wn B}{
			\vlin{\wnderr}{}{\Gamma, \wn A, \wn B, \wn B}{
				\vlhy{\Gamma, A, \wn B, \wn B}
			}
		}
	}
	\cutelim
	\vlderivation{
		\vlin{\wnderr}{}{\Gamma, \wn A, \wn B}{
			\vlin{\wncrule}{}{\Gamma,  A, \wn B}{
				\vlhy{\Gamma, A, \wn B, \wn B}
			}
		}
	}
	\end{equation}
	which {push $\wnderr$ down a cut and create an alternating chain of $\wnderr$ and $\wncrule$ (such additional steps are natural to consider since they involve rule permutations of independent rules and  would appear whenever a $\cutr$-rule would interact with the $\wn$-formula introduced by the $\wnderr$-rule).
	Thus, the derivation in $\mell$ obtained by (standard and additional) cut-elimination from $\ptomp{\der}$ is exactly the translation $\ptom{\der'}$ of the derivation $\der'$ in $\pll$ obtained after a cut-elimination step from $\der$.
	According to the definition of $\ptom{\cdot}$, if $\ptom{\der'}$ is cut-free~then~so~is~$\der'$.}
	
	The termination of cut-elimination in $\mell$ with this additional commutative step follows from the result in $\mell$~\cite{pag:tor:StrongNorm}.
	{Indeed, to the usual measure $m$ that decreases after each standard cut-elimination step in $\mell$ (and remains unchanged after each additional step in \eqref{eq:additional-commutative}), we can add the sum $d$ of the heights of the $\wnderr$ rules in a derivation, which decreases after each step in \eqref{eq:additional-commutative}. Thus, the measure $(m,d)$ with the lexicographical order decreases after each (standard or additional) cut-elimination step in $\mell$.
}
\end{proof}

\begin{figure*}[!thb]
	\adjustbox{max width=\textwidth}{$\begin{array}{c}
			\vlderivation{
				\vliin{\cutr}{}{\wn \Gamma, \wn \Delta, \oc B}{
					\vlin{\nuprule}{}{\wn \Gamma, \oc A}{
						\vlhy{\left\{ \vlderivation{ \vldr{\der_i}{\Gamma, A}}\right\}_{i \in \Nset} }
					}
				}{
					\vlin{\nuprule}{}{\wn \cneg{A}\!, \wn \Delta, \oc B}{
						\vlhy{\left\{ \vlderivation{ \vldr{\der'_i}{\cneg{A}\!, \Delta, B}}\right\}_{i \in \Nset} }
					}
				}
			}
			\cutelim\!\!
			\vlderivation{
				\vlin{\nuprule}{}{\wn \Gamma, \wn \Delta, \oc B}{
					\vlhy{
						\left\{  \vlderivation{ 
							\vliin{\cutr}{}{\Gamma, \Delta, B}
							{
								\vldr{\der_i}{\Gamma, A}
							}{
								\vldr{\der'_i}{\cneg{A}\!, \Delta, B}
							}
						} 	\right\}_{i \in \Nset}
					}
				}
			}
			\qquad
			\vlderivation{
				\vliin{\cutr}{}{\wn \Gamma, \Delta}{
					\vlin{\nuprule}{}{\wn \Gamma, \oc A}{
						\vlhy{\left\{ \vlderivation{ \vldr{\der_i}{\Gamma, A}}\right\}_{i \in \Nset} }
					}
				}{
					\vlin{\wnwrule}{}{\Delta,\wn \cneg A}{\vlhy{\Delta}}
				}
			}
			\cutelim
			\vlderivation{
				\vliq{\wnwrule}{}{\wn\Gamma,\Delta}{\vlhy{\Delta}}
			}
			%
			\\\\
			\vlderivation{
				\vliin{\cutr}{}{\wn \Gamma, \Delta}{
					\vlin{\nuprule}{}{\wn \Gamma, \oc A}{
						\vlhy{\left\{ \vlderivation{ \vldr{\der_i}{\Gamma, A}}\right\}_{i \in \Nset} }
					}
				}{
					\vlin{\wnbrule}{}{\Delta, \wn \cneg A}{
						\vlhy{\Delta,\cneg{A}\!, \wn\cneg A}
					}
				}
			}
			\cutelim\!
			\vlderivation{
				\vliq{\wnbrule}{}{\wn\Gamma, \Delta}{
					\vliin{\cutr}{}{\Gamma, \wn \Gamma,\Delta }{
						\vldr{\der_0}{\Gamma, A}
					}{
						\vliin{\cutr}{}{\wn\Gamma,  \Delta, \cneg A}{
							\vlin{\nuprule}{}{\wn \Gamma,\oc A}
							{
								\vlhy{\left\{ \vlderivation{ \vldr{\der_{i+1}}{\Gamma, A}}\right\}_{i \in \Nset} }
							}
						}{
							\vlhy{\Delta, \cneg{A}\!, \wn \cneg A}
						}
					}
				}
			}
		\end{array}
		$}
	\caption{Exponential cut-elimination steps in $\dpll$.}
	\label{fig:damiano-cut-elimination}
\end{figure*}

\section{Proofs of \Cref{sec:coder}}

Akin to linear logic, the \emph{depth} of a coderivation {is} the maximal number of nested $\nwbox$s.

\begin{definition}\label{def:depth}
	Let {$\der \in \nwpll$}.
	The \defn{nesting level of a sequent {occurrence}} $\Gamma$ in $\der$ 
	is the number $\nestl \der \Gamma$ 
	of nodes below it that are the root of a call of a $\nwbox$. 
	The \defn{nesting level of a formula (occurrence)} $A$ in $\der$, noted $\nestl \der A$, is the nesting level of the sequent that contain that formula. 
	The \defn{nesting level of a rule} $\rrule$ in $\der$, noted $\nestl \der \rrule$
	(resp.~\defn{of a sub-coderivation} $\der'$ of $\der${, noted $\nestl \der{\der'}$}),
	is the nesting level of the conclusion of $\rrule$
	(resp.~conclusion of $\der'$).

	The \defn{depth of $\der$}
	is 
	$\depth \der\coloneqq \sup_{\rrule \in \der} \set{\nestl \der \rrule} \in \Nset \cup \{\infty\}$.
\end{definition}

\begin{remark}\label{rem:nesting}
	All calls of a $\nwbox$ have the same nesting level.
	Moreover, each of the sequents of its main branch have nesting level $0$.
\end{remark}	

Cut-elimination $\cutelim$ on $\nwpll$ enjoys the following property.
\begin{lemma}\label{lem:depth}
	Let $\der, \der' \in \nwpll$. 
	If $\der \cutelim \der'$ then $\depth{\der} \geq \depth{\der'}$.
\end{lemma}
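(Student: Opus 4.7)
The plan is to proceed by case analysis on the cut-elimination step $\der \cutelim \der'$, showing in each case that every rule $\rrule$ in $\der'$ has nesting level bounded above by the nesting level in $\der$ of some ``corresponding'' rule. Since $\depth{\der}$ is the supremum of such nesting levels, this yields $\depth{\der} \geq \depth{\der'}$. The key structural observation underpinning all cases is that no cut-elimination step in $\nwpll$ creates a new $\cprule$-rule: the steps only rearrange, copy, or delete existing $\cprule$-rules, and in particular they never lengthen the main branch of any $\nwbox$. Consequently, every $\nwbox$ appearing in $\der'$ can be traced back to a $\nwbox$ of $\der$ whose bottommost $\cprule$ may optionally have been removed.

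First I would dispatch the multiplicative steps of \Cref{fig:cut-elim-finitary-lin} and the commutative steps of \Cref{fig:cut-elim-finitary-comm} in which the involved rule $\rrule$ is not $\cprule$: the $\cprule$-rules above and below the reduced cut remain unchanged, hence nesting levels of their conclusions are preserved (and may even decrease because of the removal of the cut below). The only delicate commutative case is when the cut is pushed above a $\cprule$-rule $\rho$, which I would split into two subcases. If the cut is pushed into the left premise of $\rho$ (the call), then the new cut in $\der'$ takes the place occupied by the former call, so the subderivation above it keeps the same nesting level. If instead the cut is pushed into the right premise of $\rho$, then in $\der'$ the right child of $\rho$ is a cut; hence $\rho$ no longer lies on the main branch of a $\nwbox$, its left premise ceases to be a call, and its nesting level strictly decreases.

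For the exponential steps of \Cref{fig:cut-elim-pll} the argument runs as follows. In the $\cutstep{\cprule}{\cprule}$ step, the two $\cprule$-rules of the redex are replaced by a single $\cprule$ at the root of the contractum, with two cuts above pushing reduction into the calls and into the continuations; the new bottom $\cprule$ has a cut as its right child, so it does not open a new $\nwbox$, while the two original continuations remain $\nwbox$es at their previous nesting levels. The $\cutstep{\cprule}{\wnwrule}$ step discards the entire $\nwbox$, so nesting levels can only drop. In the $\cutstep{\cprule}{\wnbrule}$ step, the bottom $\cprule$ of the $\nwbox$ is peeled off: the extracted call moves from nesting level $k+1$ down to $k$, while the truncated $\nwbox$ (with one fewer $\cprule$ at the bottom) stays rooted at level $k$ with its internal nesting pattern preserved.

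The hard part I expect is the careful bookkeeping in the commutative case when the cut is pushed above a $\cprule$: I must verify that removing a single $\cprule$ from the main branch of a $\nwbox$ is indeed all that happens, and that no inner $\nwbox$ inadvertently gets ``promoted'' to the main branch of some larger $\nwbox$ elsewhere in $\der'$. Beyond this point, everything reduces to a routine verification, relying on the fact that the nesting level only counts left children of $\cprule$-rules that genuinely lie on an infinite $\cprule$-chain, so any shortening of such a chain can only decrease the nesting level of its former call.
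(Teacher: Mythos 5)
Your proof is correct and takes essentially the same route as the paper, whose entire argument is the one-line ``by inspection of the cut-elimination steps''; you have simply carried out that inspection explicitly, and your key observation (no step creates a new $\cprule$ or extends the main branch of a $\nwbox$, so no new calls and hence no increase in nesting level can appear) is exactly the right invariant. The only minor quibble is that your ``delicate commutative case'' of a cut permuting past a $\cprule$ does not actually arise among the listed steps (context $\wn$-formulas of a $\cprule$ are only ever cut via the exponential steps of \Cref{fig:cut-elim-pll}), but your analysis of it is sound anyway, so this is harmless over-coverage.
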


\begin{proof}
	By inspection of the cut-elimination steps in \Cref{fig:cut-elim-finitary-lin,fig:cut-elim-finitary-comm,fig:cut-elim-pll}.
\end{proof}

\begin{lemma}
	\label{lem:finite-depth}
	If $\der \in \ppll$ then $\depth{\der} \in \Nset$. 
\end{lemma}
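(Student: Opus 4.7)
The plan is to argue by contradiction: assume $\depth{\der} = \infty$ and exhibit an infinite branch of $\der$ with no progressing $\oc$-thread, contradicting $\der \in \ppll$. The argument has two main ingredients: a structural analysis of progressing $\oc$-threads, and a König-style extraction of an infinite branch passing through arbitrarily many nested call-roots.

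The first ingredient is a bound on how a progressing $\oc$-thread $(A_i)_{i \in \Nset}$ on an infinite branch $\mathcal B$ can evolve. Inspecting \Cref{fig:threads}, the only rule at which the $\oc$-depth of the thread formula may change is $\cprule$ through its left premise: going from $\oc A$ in the conclusion to $A$ in the left premise, the thread continues only if $A$ is itself a $\oc$-formula, so its $\oc$-depth strictly decreases by $1$. Since $\oc$-depth is a positive integer, such transitions occur only finitely often, and past some index the thread crosses $\cprule$-rules only through the right premise, preserving its formula. Moreover, at any $\cprule$ along $\mathcal B$'s tail, the thread formula must coincide with the principal $\oc A$ (any other $\oc$-formula in the conclusion $\wn \Gamma, \oc A$ would need to lie in $\wn \Gamma$, impossible). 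Hence $\mathcal B$'s tail never descends into a call-root of any $\nwbox$.

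For the second ingredient, since $\der$ is binary and $\depth{\der} = \infty$, a routine König iteration yields an infinite branch $\mathcal B$ with $\depth{\der_v} = \infty$ at every $v \in \mathcal B$: at each step, among the at most two children of the current node, at least one inherits infinite depth. The subtlety is to route $\mathcal B$ through infinitely many call-roots. I would refine the greedy choice so that, at each $\cprule$-rule encountered, the left premise (a call-root) is preferred whenever its sub-coderivation still has infinite depth; if only the right-premise descent preserves infinite depth, the current $\nwbox$ has calls of unbounded (possibly only finite) depth, and we descend into such a call and restart the recursion inside it. The two ingredients then combine: the so-built $\mathcal B$ visits infinitely many call-roots, yet its guaranteed progressing thread forces $\mathcal B$'s tail to avoid all call-roots --- contradiction.

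The main obstacle is the second step: one must genuinely ensure $\mathcal B$ enters infinitely many call-roots, not merely that all its initial sub-coderivations have infinite depth, since a naive König construction could stabilize on a main branch of a $\nwbox$ with infinitely many finite-depth but unbounded-depth calls. Trading main-branch descents for call-root descents, and recursing inside calls of ever larger finite depth, is the delicate point; I expect the cleanest route to exploit finite expandability (as the sketch of \Cref{thm:CCE}\ref{MEGA:4} indicates) together with \Cref{cor:simple-structure}, which guarantees that every infinite branch eventually lies on a $\nwbox$'s main branch, thereby making the routing into call-roots terminate cleanly and sealing the contradiction.
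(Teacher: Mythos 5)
Your overall strategy is the same as the paper's: the published proof is exactly the two-sentence sketch ``if $\der$ had infinite depth, there would exist an infinite branch that goes left at $\cprule$ infinitely often; this branch cannot contain a progressing \octhread''. For your first ingredient you are working harder than necessary, and from a false premise: in \Cref{fig:threads} the rule $\cprule$ has \emph{no} parent edges into its left premise, so a \octhread can never enter a call at all --- there is no ``descend left with the $\oc$-depth decreasing by one'' case. The conclusion you need (an infinite branch carrying a progressing \octhread goes left at $\cprule$ only finitely often) follows at once, since at a $\cprule$-conclusion $\wn\Gamma,\oc A$ the thread must sit on the principal $\oc A$, whose unique parent lives in the right premise; if the branch went left the thread could not continue along it.

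The genuine gap is in your second ingredient, and you have correctly located it: the K\"onig extraction does not by itself produce a branch crossing infinitely many call-roots, and your proposed repair does not close the hole. If a \nwbox has calls of finite but unbounded depth, descending into a call of depth $n$ buys at most $n$ left-turns and you cannot return to the main branch to try a deeper call, so ``recursing inside calls of ever larger finite depth'' never assembles a single infinite branch with infinitely many left-turns. Neither finite expandability (which is not a hypothesis of the lemma) nor \Cref{cor:simple-structure} helps; the latter only says the branch eventually follows some main branch, which is precisely the bad case. The obstruction is real: let $\der^{(n)}$ be the derivation of $\lone$ obtained by cutting the \nwbox $\derstream{\der^{(n-1)},\der^{(n-1)}}$ of conclusion $\oc\lone$ against the derivation of $\wn\lbot,\lone$ given by $\wnwrule$, so that $\depth{\der^{(n)}}=n$; then $\derstream{\der^{(1)},\der^{(2)}}$ is progressing and has infinite depth, yet every infinite branch goes left at $\cprule$ only finitely often. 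The implication ``infinite depth implies a branch going left infinitely often'' therefore needs an extra hypothesis such as weak regularity (finitely many distinct calls, whence by pigeonhole some call contains a nested copy of itself, which does yield the required branch); this covers every use of the lemma in the paper, namely for $\cpll$ and $\nupll$, but it is supplied neither by your argument nor, it must be said, by the paper's own one-line justification of the same step.
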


\begin{proof}
	If $\der$ had infinite depth, there would exist an infinite branch that goes left at $\cprule$  infinitely  often. This branch cannot contain a (progressing) $\oc$-thread.
\end{proof}

\translations*
\begin{proof}\hfill
	\begin{enumerate}
		\item By straightforward induction on $\der \in \pll$ (resp. $\der \in \dpll$).
		\item By \Cref{lem:finite-depth}, $\depth{\der} \in \Nset$. We can then prove the statement by induction on $\depth{\der}$.
		\qedhere
	\end{enumerate}
\end{proof}

\preserves*

\begin{proof}
	By inspection of the cut-elimination steps defined in \Cref{fig:cut-elim-finitary-comm,fig:cut-elim-finitary-lin,fig:cut-elim-pll}.
\end{proof}

\section{Proofs of \Cref{sec:ccutelim}}\label{app:cut-elim}

\cutElimApp*
	\begin{proof}
		For $\der$ an open derivation, let $\weightcp{\der}$ be the number of $\cprule$ in $\der$ and $\heightcut{\der}$ be the sum of the sizes of all subderivations
		of $\der$ whose root is the conclusion of a $\cutr$ rule.
		If~$\der \cutelim \der'$~via:
		\begin{itemize}
			\item a commutative cut-elimination step, then $\weightcp{\der} = \weightcp{\der'}$, $\size{\der} = \size{\der'}$ and $\heightcut{\der} > \heightcut{\der'}$;
			\item a multiplicative cut-elimination (\Cref{fig:cut-elim-finitary-lin}),  then $\weightcp{\der} = \weightcp{\der'}$ and $\size{\der} > \size{\der'}$;
			\item an exponential cut-elimination step (\Cref{fig:cut-elim-pll}),  then $\weightcp{\der} > \weightcp{\der'}$.
		\end{itemize}
		Since the lexicographic order over the triples $\seq{\weightcp{\der},\size{\der},\heightcut{\der}} \in \omega^3$ is wellfounded,  we conclude that there is no infinite sequence $\seq{\der_i}_{i \in \Nset}$ such that $\der_0 = \der$ and $\der_{i} \cutelim \der_{i+1}$.
		
		Finally, since cut-elimination $\cutelim$ is strongly normalizing over open derivations 
		and it is locally confluent by  inspection of critical pairs, by Newman's lemma it is also confluent. 
	\end{proof}
	
	\confluence*
	\begin{proof}
		For any open derivation $\der$, since  $\sigma$ and $\sigma'$ are maximal, we have that $\sigma_\der(\ell(\sigma_\der))$ and $\sigma'_\der(\ell(\sigma'_\der))$ are normal, and so $\sigma_\der(\ell(\sigma_\der))=\sigma'_\der(\ell(\sigma'_\der))$ by~\Cref{thm:cut-elimApp}. Hence:
		$$
		f_\sigma(\der)= \cf{\sigma_\der(\ell(\sigma_\der))}=\cf{\sigma'_\der(\ell(\sigma'_\der))}=f_{\sigma'}(\der)
		$$
		Now, let $\der$ be an open coderivation, and let $F(\der)$ be the set of its finite approximations. Since by~\Cref{prop:scott-domain}  $\opll$ is a  Scott-domain, it is  also algebraic, so that we  have $\der= \bigsqcup_{\der' \in F(\der)}\der'$. By continuity of $f_\sigma$ and $f_{\sigma'}$ we have:
		$f_\sigma(\der)=\bigsqcup_{\der' \in F(\der)}f_\sigma(\der')=\bigsqcup_{\der' \in F(\der)}f_{\sigma'}(\der')=f_{\sigma'}(\der)$.
	\end{proof}

	\begin{definition}[$\cprule$-chains]\label{defn:cprule-chain} Let $\sigma=\{\sigma_\der\}_{\der\in \opll}$ be a $\ices$ and let $\der \in \opll$. For any $i$, we write $\rrule_i \rightsquigarrow \rrule_{i+1}$ if  $\rrule_i$ is a $\cprule$ rule in $\sigma_\der(i)$, 
		$\rrule_{i+1}$ is a $\cprule$ rule in $\sigma_\der(i+1)$, and $\sigma_\der(i) \cutelim \sigma_\der(i+1)$ is applied to a $\cutr$ rule  immediately below $\rrule_i$ and produces  $\rrule_{i+1}$. 	A  \defn{$\cprule$-chain} in $\sigma_\der$   is any sequence of $\cprule$ rules $(\rrule_i)_{i< \alpha}$ with $\alpha \leq \ell(\sigma_\der) $ such that:
		\begin{itemize}
			\item for all $i\geq 0$, $\rrule_i$ is in $\sigma_\der(i)$
			\item either $\rrule_i=\rrule_{i+1}$ or $\rrule_i \rightsquigarrow \rrule_{i+1}$. 
		\end{itemize}
	\end{definition}
	
	\begin{remark}\label{rem:uniqueness-of-cprule-chain}
	Let $\sigma=\{\sigma_\der\}_{\der\in \opll}$ be a $\ices$. 
		If $\rrule$ is a $\cprule$ rule in $\der$, then there is a \emph{unique} maximal $\cprule$-chain \mbox{$(\rrule_i)_{i<\alpha}$ in  $\sigma_\der$ with ($\alpha \leq \ell(\sigma_\der)$ and)   $\rrule=\rrule_0$}.
	\end{remark}


	The following lemma establishes a productivity result for the height-by-height $\mcices$.



\begin{restatable}{lemma}{WPROG}\label{thm:preservation-weakly-progressive} 
	If $\der\in \wppll$, then $f_{\sigma^\infty} (\der)\in \nwpll$.
\end{restatable}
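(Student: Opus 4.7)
The plan is to show that $f_{\sigma^\infty}(\der) = \bigsqcup_{i<\omega} \cf{\sigma^\infty_\der(i)}$ is $\zero$-free: since the supremum is taken in the Scott-domain $\opll$, it suffices to exhibit, for every $h \in \Nset$, an index $n_h$ such that $\cf{\sigma^\infty_\der(n_h)}$ is $\zero$-free at every node of height strictly less than $h$. I will construct, jointly with $n_h$, a bar $\Nodes_h$ of $\sigma^\infty_\der(n_h)$ of height at least $h$ such that (i) every non-leaf node of $\Nodes_h$ is the conclusion of a $\cprule$, and (ii) no $\cutr$ lies strictly below $\Nodes_h$. Condition (ii) guarantees that the region of $\sigma^\infty_\der(n_h)$ strictly below $\Nodes_h$ survives the cut-free pruning, so $\cf{\sigma^\infty_\der(n_h)}$ is indeed $\zero$-free below height $h$. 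Since the height-by-height strategy always acts on a cut of minimal height, necessarily located above $\Nodes_h$, this $\zero$-free region persists in every subsequent approximation, and hence in the limit $f_{\sigma^\infty}(\der)$.

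The pair $(\Nodes_h, n_h)$ is built by induction on $h$. For the base case $h = 0$, weak progressing forces every infinite branch of $\der$ to contain a right premise of some $\cprule$, so by K\"onig's lemma the tree obtained from $\der$ by truncating each branch at its first $\cprule$ (or leaf) is finite, yielding a candidate bar $\Nodes$. The $\cutr$ rules intervening below $\Nodes$ form a finite set of bounded height; the height-by-height strategy, processing the minimal-height cut first, eliminates them all in finitely many steps, producing the desired $(\Nodes_0, n_0)$. For the step $h \to h+1$, I keep the leaves of $\Nodes_h$ in place and, for each non-leaf $v \in \Nodes_h$ (a $\cprule$), apply the same construction within each sub-coderivation rooted at a premise of $v$. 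These sub-coderivations are weakly progressing, both because any infinite branch of a sub-coderivation is an infinite branch of the ambient $\sigma^\infty_\der(n_h)$, and because \Cref{prop:cut-elim-preserves-finexp-reg-weakreg} guarantees that weak progressing is preserved throughout the strategy. Finitely many further cut-elimination steps then complete the bar at level $h+1$.

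The hard part is bounding the number of cut-elimination steps needed between $n_h$ and $n_{h+1}$: a cut of height $h+1$ may commute upward repeatedly, and the exponential step $\cutstep{\cprule}{\wnbrule}$ of \Cref{fig:cut-elim-pll} duplicates a $\cprule$ while spawning fresh cuts, threatening to indefinitely delay stabilization of the bar. The decisive observation is that weak progressing forbids any branch from containing infinitely many consecutive $\wnbrule$ rules, since such a branch would be infinite yet devoid of $\cprule$ right premises, contradicting the hypothesis. Because this property is stable under the strategy, each cut that is commuted upward can trigger only finitely many $\cutstep{\cprule}{\wnbrule}$ applications before either resolving principally against an axiom, unit, or multiplicative rule, or being absorbed by a $\cprule$ admissible in the extended bar. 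A K\"onig-style argument on the finite region lying above $\Nodes_h$ and below the prospective $\Nodes_{h+1}$ then yields a finite $n_{h+1}$ and closes the induction.
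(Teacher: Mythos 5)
Your proposal is correct and follows essentially the same route as the paper's proof: both reduce the claim to exhibiting, for each height $h$, a $\zero$-free bar of $\{\axr,\lone,\cprule\}$-rules of height $>h$ in some finite-stage approximation, obtain the initial bar from weak progressiveness, refine it through cut-elimination with the $\cutstep{\cprule}{\wnbrule}$ case handled via \Cref{prop:cut-elim-preserves-finexp-reg-weakreg}, and derive termination of the refinement from the fact that an infinite branch with infinitely many $\wnbrule$ and no $\cprule$ right premises would violate weak progressiveness. The only difference is presentational (your induction on $h$ versus the paper's round-based procedure tracking $\cprule$-chains), not mathematical.
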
	
\begin{proof}
		Let $\der$ be a weakly \prog coderivation. Since $\der$ is by assumption $\zero$-free and no cut-elimination rule introduces $\zero$, we can assume  $\ell{(\sigma^\infty_\der)}= \omega$.	In what follows, we shorten $\sigma^\infty_\der(i)$ with $\der_i$, so $\der_0=\der$. We show a stronger statement: for any $h \geq 0$ there is a $n_h \geq 0$ such that $\cf{\der_{n_h}}$ has a $\zero$-free  bar $\Nodes_h$ of height greater than $h$. By definition, this will allow us to conclude  that $f_{\sigma^\infty}(\der)= \bigsqcup_i \cf{\der_i}$ is $\zero$-free.
		
		Let $h \geq 0$. 	We define a procedure computing $\Nodes_h$ divided into rounds, where at the $j$-th round we compute $\Nodes^j_h$.	 At round $0$  we set  $\Nodes^0_h$ to be  a bar across $\der$ with height greater than $h$ containing only rules in $\{\axr, \unit, \cprule\}$  (such a bar exists by weakly \prog).  At the $j$-th  round with $j>0$, the procedure constructs $\Nodes_h^{j}$ from $\Nodes^{j-1}_h$. It  analyses the first node of $\Nodes^{j-1}_h$ that has not been considered in previous rounds (giving priority to nodes with highest prefix order)\footnote{More precisely,  $\nu= a_0\ldots a_n  < c_0\ldots c_m=\nu'$ with $a_i, c_i \in \{1,2\}$ iff there is $i_0 \leq m$ such that $a_i \leq c_i$ for any $0 \leq i \leq i_0$ and $a _{i_0+1} <c_{i_0+1}$.}. Let $\rrule^j$ be such a node. We only consider the case where $\rrule^j$ is a  $\cprule$ rule.  We consider the  $\cprule$-chain $(\rrule^j_i)_i$ such that $\rrule^j= \rrule^j_0$ (which is unique by~\Cref{rem:uniqueness-of-cprule-chain}). If there is a  least $i_0$ such that $\rrule^j_{i_{0}}\rightsquigarrow \rrule^j_{i_{0}+1}$ (so that $ \rrule^j_{i_{0}+1}$ is  produced by applying a  principal cut-elimination step to $ \rrule^j_{i_{0}}$,  and $\rrule^j_{i}=\rrule^j_{i+1}$ for all $i <i_0$), then  we have three cases:
		\begin{itemize}
			\item If  the cut-elimination step has shape $\cutstep{\cprule}{ \cprule}$ then we set $\Nodes_h^{j}\dfn (\Nodes^{j-1}_h \setminus \{\rrule^j_{}\}) \cup \{ \rrule^j_{i_{0}+1}\}$ and we move to the next  round. 
			\item 	If the  cut-elimination step has shape $\cutstep \cprule  \wnwrule$ then we set $\Nodes^j_h \dfn \Nodes^{j-1}_{h}\setminus \{\rrule^j\}$ and we  move to the next round. 
			\item Otherwise,  the  cut-elimination step has shape $\cutstep \cprule \wnbrule$. Let $\der'$ be the coderivation of $\sigma_\der$ containing the rule $\rrule^j_{i_0}$,  let  
			$\nu$ be the node of $\der'$ that is conclusion of $\rrule^j_{i_0}$, and let $\mathcal{U}^j_h$  be a suitable bar of $\der'_\nu$ at height $>0$ containing only rules in $\{\axr, \unit, \cprule\}$. This bar exists by weakly \prog of $\der$ and the fact that  weak \prog is preserved under finite cut-elimination by~\Cref{prop:cut-elim-preserves-finexp-reg-weakreg}. We  set $\Nodes^j_h \dfn (\Nodes^{j-1}_{h}\setminus \{\rrule^j\})\cup \mathcal{U}^j_h$ and we move to the next round. 
		\end{itemize}
		If no such such $\rrule^j_{i_{0}}\rightsquigarrow \rrule^j_{i_{0}+1}$ exists (so $\rrule^j_i=\rrule^j_{i+1}$ for all $i$)    we move to the next round. 
		
		By construction, if the procedure terminates, it computes the set of nodes $\Nodes_h$ such that, for some $k\geq 0$ sufficiently large,  $\Nodes_h$  defines  a bar across any $\der_i$  in the sequence $\sigma_\der$ for all $i \geq k$. This means that there exists $n_h\geq k$ such that $\cf{\der_{n_h}}$ contains that bar. So we have to show that the procedure terminates. Since bars are finite, this boils down to proving that there are only finitely many rounds. Suppose towards contradiction that this is not the case. This can only happen when there are infinitely many distinct $\cprule$ rules $(\rrule_i)_i$ in a branch $\mathcal{B}_j$ of $\der$  and infinitely many distinct $\wnbrule$ rules $(\rrule'_i)_i$ in a branch $\mathcal{B}'_j$ of $\der$ such that in $\sigma_\der^\infty$:
		\begin{enumerate}
			\item \label{enum:contradition1} each $\rrule_i$ is eventually cut with $\rrule'_i$,
			\item \label{enum:contradition2} each $\rrule_i$ is never cut with a $\cprule$ rule.
		\end{enumerate}  
		Notice that the assumption that the rules in $(\rrule'_i)_i$ belong to the same  branch $\mathcal{B}'_j$ causes no loss of generality, since the  height-by-height $\mcices$ reduces the cut $\cutstep{\rrule_i}{\rrule'_i}$ before any other cut above these rules. By~\Cref{enum:contradition1} $\mathcal{B}'_j$ is infinite, and by~\Cref{enum:contradition2}  it is eventually $\cprule$-free, contradicting weakly \prog of $\der$. 
\end{proof}

The following notion is the analogue of (\emph{multi})\emph{cut reduction sequences} from~\cite{BaeldeDS16}.

\begin{definition}[Cut-chains]\label{defn:cut-chain} Let $\sigma=\{\sigma_\der\}_{\der \in \opll}$ be a $\ices$ and let $\der \in \opll$. For any $i$, we write $\rrule_i \mapsto \rrule_{i+1}$ if  $\rrule_i$ is a $\cutr$ rule in $\sigma_\der(i)$, 
	$\rrule_{i+1}$ is a $\cutr$ rule in $\sigma_\der(i+1)$, and $\sigma_\der(i)\cutelim \sigma_\der(i+1)$ is applied to $\rrule_i$ producing  $\rrule_{i+1}$. 	A  \defn{$\cutr$-chain} in $\sigma_\der$   is any sequence of $\cutr$ rules $(\rrule_i)_{i< \alpha }$ with $\alpha \leq \ell(\sigma_\der)$ such that:
	\begin{itemize}
		\item for all $i\geq 0$, $\rrule_i$ is in $\sigma_\der(i)$
		\item either $\rrule_i=\rrule_{i+1}$ or $\rrule_i \mapsto \rrule_{i+1}$. 
	\end{itemize}
\end{definition}
	
\begin{remark}\label{rem:infinite-cut-chain-eventually-oc-formulas}
 Let $\sigma=\{\sigma_\der\}_{\der \in \opll}$ be a $\ices$ , and let $(\rrule_i)_{i}$ be an infinite cut chain in $\sigma_\der$ such that $(A_i, \cneg{A}_i)$ is the pair of cut formulas of $\rrule_i$. There is $i_0 \geq 0$ such that, for all $i \geq i_0$, $A_i$ is a $\oc$-formula (and $\cneg{A}_i$ is a  $\wn$-formula).
\end{remark}

\begin{remark}\label{rem:at-most-one-infinite-thread}
	Any branch  $\mathcal{B}$ in a \prog coderivation $\der$  contains at most (and hence exactly) one \prog $\oc$-thread. As a consequence,  any infinite $\oc$-thread $\tau$ of a branch $\mathcal{B}$ in a \prog coderivation $\der$ must be \prog.  Indeed, let $\tau$ and $\tau'$ be two infinite $\oc$-threads, and let us show that  $\tau= \tau'$. Since $\mathcal{B}$ is \prog, it contains infinitely many $\cprule$ rules $(\rrule_i)_i$, so that there exists $n \geq 0$ such that both $\tau$ and $\tau'$ contain formulas below $\rrule_i$. Since the conclusion of $\rrule_i$ has exactly one $\oc$-formula and $\tau$ is infinite, both $\tau$ and $\tau'$ must contains that formula, so that $\tau=\tau'$ by maximality of $\oc$-threads.
\end{remark}
	
\begin{restatable}{lemma}{PROG}\label{thm:preservation-progressiveness} 
	\hfill
	\begin{enumerate}
		\item \label{enum:point1}	 If $\der\in \wppll$ (resp. $\der \in \ppll$), then so is $f_{\sigma^\infty}(\der)$.
		\item \label{enum:point2}  If $\der\in \wppll$ is finitely expandable, then so is $f_{\sigma^\infty} (\der)$.
	\end{enumerate}
\end{restatable}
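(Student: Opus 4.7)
The plan is to prove both items by contradiction, using the notion of \emph{cut-chain} (Definition~\ref{defn:cut-chain}) to track a $\cutr$-rule through the reduction sequence $(\der_i)_{i < \omega}$ where $\der_i \dfn \sigma^\infty_\der(i)$. By Proposition~\ref{prop:cut-elim-preserves-finexp-reg-weakreg}, every $\der_i$ inherits from $\der$ the global conditions in question, and by Lemma~\ref{thm:preservation-weakly-progressive}, $f_{\sigma^\infty}(\der) \in \nwpll$ is totally defined.

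For item~\ref{enum:point1} (treating $\der \in \ppll$; the $\wppll$ case is analogous), suppose for contradiction that $f_{\sigma^\infty}(\der) = \bigsqcup_i \cf{\der_i}$ contains an infinite branch $\mathcal{B}$ lacking a progressing $\oc$-thread. Since each $\der_i$ is progressing, the obstruction extending the finite approximation $\mathcal{B}_i \subseteq \cf{\der_i}$ must be a $\cutr$-rule $\rrule_i$ in $\der_i$ sitting just above $\mathcal{B}_i$. These align into a cut-chain $(\rrule_i)_{i<\omega}$, which must be infinite (otherwise $\mathcal{B}$ would be finite). By Remark~\ref{rem:infinite-cut-chain-eventually-oc-formulas}, from some stage onwards the cut formulas have the form $\oc A_i / \wn \cneg{A_i}$, so the only applicable principal steps are $\cutstep{\cprule}{\cprule}$, $\cutstep{\cprule}{\wnbrule}$, or $\cutstep{\cprule}{\wnwrule}$. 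A $\cutstep{\cprule}{\wnwrule}$ step would terminate the chain, contradicting its infinity. Infinitely many $\cutstep{\cprule}{\cprule}$ steps would deposit successive $\cprule$-rules on $\mathcal{B}$, creating a progressing $\oc$-thread in $\mathcal{B}$ (Remark~\ref{rem:prog})---contradicting the choice of $\mathcal{B}$. Otherwise only finitely many $\cutstep{\cprule}{\cprule}$ steps occur and so infinitely many $\cutstep{\cprule}{\wnbrule}$ steps occur; but tracking the cut formulas back through $\sigma^\infty$ to $\der$ produces an infinite $\oc$-thread in $\der$ (the chain of $\oc A_i$'s) which by Remark~\ref{rem:at-most-one-infinite-thread} and progressiveness of $\der$ is itself progressing and hence meets infinitely many $\cprule$-rules, forcing in turn infinitely many $\cutstep{\cprule}{\cprule}$ steps in the cut-chain---a contradiction.

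For item~\ref{enum:point2}, $f_{\sigma^\infty}(\der)$ is cut-free by definition of a $\mcices$, so finite expandability amounts to showing that every branch contains only finitely many $\wnbrule$-rules. Suppose for contradiction that some branch $\mathcal{B}$ of $f_{\sigma^\infty}(\der)$ has infinitely many $\wnbrule$-rules. These either descend from $\wnbrule$-rules already in $\der$ (finitely many per branch, by finite expandability of $\der$), or they are produced by $\cutstep{\cprule}{\wnbrule}$ steps along a cut-chain extending $\mathcal{B}$. Hence infinitely many $\cutstep{\cprule}{\wnbrule}$ steps occur; tracking the $\wn\cneg{A_i}$-formulas back to $\der$ yields an infinite $\wn$-thread in $\der$ passing through infinitely many $\wnbrule$-rules, contradicting finite expandability of $\der$.

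The main obstacle is the thread bookkeeping: one must rigorously justify that the cut formulas appearing along the cut-chain $(\rrule_i)_i$ descend from genuine $\oc$- and $\wn$-threads in the original $\der$, so that the global conditions on $\der$ actually constrain the behavior of the chain. This relies on a straightforward induction on the principal step applied at $\rrule_i \mapsto \rrule_{i+1}$, using the fact that each such step introduces the immediate parents (cf.~Figure~\ref{fig:threads}) of the previous cut formulas, together with the invariance of the intermediate $r_i = r_{i+1}$ segments.
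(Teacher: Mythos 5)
Your proposal is correct and follows essentially the same route as the paper's own proof: both reduce the problem to analysing infinite $\cutr$-chains (\Cref{defn:cut-chain}), use \Cref{rem:infinite-cut-chain-eventually-oc-formulas} to restrict attention to exponential cut formulas, invoke \Cref{rem:at-most-one-infinite-thread} to conclude that the $\oc$-thread traced in $\der$ by the left cut formulas is progressing and hence forces infinitely many $\cutstep{\cprule}{\cprule}$ steps (a step that is equally compressed in the paper, where it is really the progressiveness of the branch above the $\wn$-side premise that rules out the all-$\cutstep{\cprule}{\wnbrule}$ scenario), and read off the progressing $\oc$-thread of the limit branch from the $\cprule$-rules deposited by those steps; your item~\ref{enum:point2} coincides with the paper's argument. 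The only real difference is presentational: you argue by contradiction with a case analysis on the principal steps, whereas the paper directly constructs the witness thread $\tau''$, using the minimal-height condition on $\rrule_0$ to anchor it at the end-sequent.
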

\begin{proof}
	Let us prove~\Cref{enum:point1}.	Let $\der$ be a  \prog open coderivation, and let  us shorten $\sigma^\infty_\der(i)$ with $\der_i$, so $\der_0=\der$.  By~\Cref{prop:cut-elim-preserves-finexp-reg-weakreg} we can assume that $\ell(\sigma^\infty_\der)=\omega$.

	We want to show that for any infinite $\cutr$-chain  $(\rrule_i)_{i < \omega}$ in  $\sigma^\infty_\der$ such that:
	\begin{enumerate}
		\item[(I)] \label{enum:cut-chain-conditions1} $\rrule_0$ is a $\cutr$ rule with minimal height  in $\der$
		\item[(II)]\label{enum:cut-chain-conditions2} $\pi(\rrule_i) =  a_0a_1\ldots a_{n_i}$ is the address of $\rrule_i$ in $\der_i$ (with $n_i \leq n_{i+1}$),  
	\end{enumerate}
	there exists $0 \leq k_0\leq  n_0$ and an infinite family $\tau^*\dfn (C_i)_{k_0 \leq i }$ of occurrences of a $\oc$-formula satisfying the following properties:
	\begin{enumerate}[a]
		\item  \label{enum:cut-chain-thread-construction1} $\tau^*_i\dfn (C_j)_{k_0 \leq j \leq n_i}$ is a \octhread in $\pi(\rrule_i)$
		\item \label{enum:cut-chain-thread-construction2} for any $m \geq 0$ there is $i$ such that $\tau^*_{i}$ has $m$ progressing points.
	\end{enumerate}
	
	Notice that the property above allows us to conclude. Indeed,  let $\mathcal{B}$ be an infinite branch of $f_{\sigma^\infty}(\der)$. If $\mathcal{B}$ is in some $\der_i$, then it is \prog by~\Cref{prop:cut-elim-preserves-finexp-reg-weakreg}. Otherwise, there exists an infinite $\cutr$-chain   $(\rrule_i)_{i < \omega}$ in $\sigma^\infty_\der$ satisfying~\Cref{enum:cut-chain-thread-construction1},~\Cref{enum:cut-chain-thread-construction2} and $\mathcal{B}=\{ \pi(\rrule_i) \ \vert \ i \geq 0\}$. By~\Cref{enum:cut-chain-thread-construction1} and~\Cref{enum:cut-chain-thread-construction2} there is an infinite family $(C_i)_{n_0 \leq i }$ of occurrences of a $\oc$-formula that defines a \prog \octhread of $\mathcal{B}$. 
	
	So, let  $(\rrule_i)_{i < \omega}$ be a $\cutr$-chain with minimal height such that:
	\begin{itemize}
		\item  the premises of $\rrule_i$ are conclusions of the rules  $\rrule'_i$ and   $\rrule''_i$
		\item $(A_i, \cneg{A}_i)$  are the cut formulas of $\rrule_i$
		\item $\pi(\rrule_i) =  a_0a_1\ldots a_{n_i}$ is the address of $\rrule_i$ in $\der_i$ 
	\end{itemize}
	By~\Cref{rem:infinite-cut-chain-eventually-oc-formulas}, we can assume w.l.o.g.~that $A_{i}=\oc B$ and  $\cneg{A}_i=\wn \cneg{B}$. It is easy to see that $\tau\dfn (A_i)_i$ is an infinite $\oc$-thread of some branch $\mathcal{B}'$ of $\der$ and that $\tau'\dfn (\cneg{A}_i)$ is an infinite $\wn$-thread of some branch $\mathcal{B}''$ in $\der$. Moreover, by~\Cref{rem:at-most-one-infinite-thread} and by \progness of $\der$, $\tau$ is \prog. 
	This means that there are infinitely  many $i$ such that  $\rrule_i'=\rrule''_i=\cprule$ (so that $A_i$ is the principal $\oc$-formula of $\rrule'_i$ and $\cneg{A}_{i+1}$ is an auxiliary $\wn$-formula of $\rrule''_i$)  and $\rrule_i \mapsto \rrule_{i+1}$. Let $\tau''\dfn (C_i)_i$ be the \prog $\oc$-thread of $\mathcal{B}''$. 
	Since $\rrule_0$ is a $\cutr$ with minimal height, and the minimal height  $\cutr$ rules never decreases during cut-elimination,  all cuts $\rrule_i$ in the cut chain have minimal height. 
	This means that the first formula of $\tau''$, i.e.,  $C_0$,  is not a cut formula, and so it is in the end-sequent of $\der$. 
	It is easy to see that the cut-elimination rules never affect $\tau''$ (and its \prog points) while pushing upward the $\cutr$ rules.  This means that  we can construct  $\tau''$ satisfying the  properties ~\Cref{enum:cut-chain-thread-construction1} and~\Cref{enum:cut-chain-thread-construction2}.

	Let us now prove~\Cref{enum:point2}.		Since $f_{\sigma^\infty}(\der)$ is cut-free we only have to show that all of its infinite branches have only finitely many  $\wnbrule$ rules each.
	Let $\der$ be a  \parsimonious open coderivation, and   let  us shorten $\sigma^\infty_\der(i)$ with $\der_i$, so $\der_0=\der$.  By~\Cref{prop:cut-elim-preserves-finexp-reg-weakreg} we can assume that $\ell(\sigma^\infty_\der)=\omega$.

	We want to show that for any infinite $\cutr$-chain  $(\rrule_i)_{i < \omega}$ in  $\sigma^\infty_\der$ such that:
	\begin{enumerate}
		\item[(I)] \label{enum:cut-chain-fconditions1} $\rrule_0$ is a $\cutr$ rule with minimal height  in $\der$
		\item[(II)]\label{enum:cut-chain-fconditions2} $\pi(\rrule_i) =  a_0a_1\ldots a_{n_i}$ is the address of $\rrule_i$ in $\der_i$ (with $n_i \leq n_{i+1}$),  
	\end{enumerate}
	the branch  $\mathcal{B}=\{ \pi(\rrule_i) \ \vert \ i \geq 0\}$ of $f(\der)$ has only finitely many distinct $\wnbrule$ rules. 
	Note that the property above allows us to conclude. Indeed,  let $\mathcal{B}$ be an infinite branch of $f_{\sigma^\infty}(\der)$. If $\mathcal{B}$ is in some $\der_i$, then it is finitely expandable by~\Cref{prop:cut-elim-preserves-finexp-reg-weakreg}. Otherwise, there is an infinite $\cutr$-chain   $(\rrule_i)_{i < \omega}$ in $\sigma^\infty_\der$ such that $\mathcal{B}=\{ \pi(\rrule_i) \ \vert \ i \geq 0\}$, so  we are done by the above property.

	Thus, let  $(\rrule_i)_{i < \omega}$ be a $\cutr$-chain with minimal height such that:
	\begin{itemize}
		\item  the premises of $\rrule_i$ are conclusions of the rules  $\rrule'_i$ and   $\rrule''_i$;
		\item $(A_i, \cneg{A}_i)$  are the cut formulas of $\rrule_i$;
		\item $\pi(\rrule_i) =  a_0a_1\ldots a_{n_i}$ is the address of $\rrule_i$ in $\der_i$.
	\end{itemize}
	By~\Cref{rem:infinite-cut-chain-eventually-oc-formulas}, we can assume w.l.o.g.~that $A_{i}=\oc B$ and  $\cneg{A}_i=\wn \cneg{B}$. It is easy to see that $\tau\dfn (A_i)_i$ is an infinite $\oc$-thread of some branch $\mathcal{B}'$ of $\der$ and that $\tau'\dfn (\cneg{A}_i)$ is an infinite $\wn$-thread of some branch $\mathcal{B}''$ in $\der$. 
	
	Let us suppose towards contradiction that $\mathcal{B}$ has infinitely many $\wnbrule$ rules. This means that, for any $k$ there is $n_k$ such that $\pi(\rrule_{n_k})$ contains $k$ $\wnbrule$ rules. Since $\der$ is finitely expandable,  there must be  infinitely many $i \geq 0$ such that $\rrule_i \mapsto \rrule_{i+1}$ is obtained by applying the cut-elimination step $\cutstep{\cprule}{\wnbrule}$. But this would mean that the $\wn$-thread $\tau'$ contains infinitely many principal rules for $\wnbrule$ rule, and so $\mathcal{B}''$ would contain infinitely many $\wnbrule$ rules, contradicting finite expandability of $\der$.
\end{proof}

\begin{proposition}\label{prop:base-of-limit-finitely-computable}
	Let $\der\in \nupll$ (resp.~$\cpll$). Then $f_{\sigma^\infty}(\der)$  admits a decomposition, and $\base{f_{\sigma^\infty}(\der)}= \base {\sigma^\infty_\der(n)}$  for some $n \geq 0$.
\end{proposition}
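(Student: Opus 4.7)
The proof naturally splits according to the two conjuncts of the statement. For the first conjunct, I would start by invoking Theorem~\ref{thm:CCE}, items~\ref{MEGA:2} (progressiveness) and~\ref{MEGA:4} (weak regularity/regularity with finite expandability), to conclude that $f_{\sigma^\infty}(\der) \in \nupll$ (resp.~$\cpll$). In particular $f_{\sigma^\infty}(\der)$ is progressing and \parsimonious, so by \Cref{prop:canonicity} and \Cref{def:decomp} it admits a (unique) decomposition: a finite set $\bord{f_{\sigma^\infty}(\der)} = \{\nu_1,\ldots,\nu_k\}$ of positions at the roots of the outermost \nwboxes of $f_{\sigma^\infty}(\der)$, together with a finite open derivation $D \coloneqq \base{f_{\sigma^\infty}(\der)} = \prun{f_{\sigma^\infty}(\der)}{\bord{f_{\sigma^\infty}(\der)}}$. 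This settles the existence part.

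For the equality, the first move is a compactness argument in the Scott-domain $(\opll(\Gamma),\preceq)$ of \Cref{prop:scott-domain}. Since $D$ is a finite open derivation, it is compact, and continuity of the $\mcices$ $\sigma^\infty$ gives $f_{\sigma^\infty}(\der) = \bigsqcup_{i} \cf{\sigma^\infty_\der(i)}$ as a directed join. From $D \preceq f_{\sigma^\infty}(\der)$ and compactness of $D$, I then obtain some $n_0$ with $D \preceq \cf{\sigma^\infty_\der(n_0)} \preceq \sigma^\infty_\der(n_0)$; in particular the non-$\zero$ positions of $D$ carry exactly the same rule labels in $\sigma^\infty_\der(n_0)$ as in $f_{\sigma^\infty}(\der)$.

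It then remains to show $D = \base{\sigma^\infty_\der(n)}$ for some $n \geq n_0$, which is the main obstacle. The plan is to argue that for $n$ large enough, $\bord{\sigma^\infty_\der(n)}$ coincides with $\{\nu_1,\ldots,\nu_k\}$: this, combined with $D \preceq \sigma^\infty_\der(n)$ and the uniqueness and minimality clauses of \Cref{def:decomp}, immediately yields the claimed equality. The argument rests on two preservation ingredients: \Cref{prop:cut-elim-preserves-finexp-reg-weakreg}, which ensures $\sigma^\infty_\der(i) \in \nupll$ (resp.~$\cpll$) for every $i$, so that each $\sigma^\infty_\der(i)$ has its own decomposition; and \Cref{lem:depth}, which bounds the depth uniformly along $\sigma^\infty_\der$. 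Using that the height-by-height discipline reduces cuts at minimal height first and that $D$ has bounded height, one can show that only finitely many steps can alter the region delimited by $D$ and its $\zero$-leaves $\nu_i$, so that for $n$ sufficiently large the subtrees $\sigma^\infty_\der(n)_{\nu_i}$ are already \nwboxes (their main branches consist entirely of $\cprule$-rules), and no \nwbox root of $\sigma^\infty_\der(n)$ lies strictly below any $\nu_i$ in the tree order.

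The delicate point, and the part I expect to require the most care, is precisely this stabilization claim: showing that the outermost \nwbox structure of $\sigma^\infty_\der(n)$ matches that of the limit in finitely many steps, rather than merely approximating it up to arbitrary finite depth. I would handle it by tracking, via the $\cprule$-chains of \Cref{defn:cprule-chain} used in the proof of \Cref{thm:preservation-weakly-progressive}, the positions of the outermost $\cprule$-rules created along $\sigma^\infty_\der$ and arguing, by the finiteness of $\bord{f_{\sigma^\infty}(\der)}$ and the bounded height of $D$, that only finitely many cut-elimination steps of type $\cutstep{\cprule}{\cprule}$ or $\cutstep{\cprule}{\wnbrule}$ can affect the main branches of the $\nwboxes$ at $\nu_i$ before the base stabilizes. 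Once this is in place, the minimality characterization of \Cref{def:decomp} closes the argument.
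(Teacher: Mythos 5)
Your skeleton matches the paper's own (very terse) proof: establish that $f_{\sigma^\infty}(\der)$ is a total, progressing, \parsimonious coderivation, invoke \Cref{prop:canonicity} for the existence of the decomposition, and appeal to continuity/compactness for the equality. But there are two genuine problems. The first is a circularity: you invoke \Cref{thm:CCE}.\ref{MEGA:4} to get $f_{\sigma^\infty}(\der)\in\nupll$, yet item~\ref{MEGA:4} is precisely the statement whose proof (\Cref{thm:preservation-weak-regularity}) uses the present proposition in its inductive step. The paper instead cites only \Cref{thm:preservation-weakly-progressive} and \Cref{thm:preservation-progressiveness} (the content of items \ref{MEGA:1}--\ref{MEGA:3}), which are proved independently and already supply everything \Cref{prop:canonicity} needs ($\zero$-freeness, progressiveness, finite expandability); weak regularity of the limit is irrelevant to the existence of the decomposition. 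The repair is simply to drop the reference to item~\ref{MEGA:4}, but as written the first step is not available to you.

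The second, more substantive problem is your stabilization claim that for $n$ large enough the subtrees of $\sigma^\infty_\der(n)$ rooted at the $\nu_i$ ``are already \nwboxes (their main branches consist entirely of $\cprule$-rules)'' and hence $\bord{\sigma^\infty_\der(n)}=\{\nu_1,\ldots,\nu_k\}$. This is false whenever $\nu_i$ is a border position of the limit created by resolving a $\cutr$ between two \nwboxes: each $\cutstep{\cprule}{\cprule}$ step produces one new $\cprule$ at the bottom and pushes a residual $\cutr$ one level up the main branch, so at every \emph{finite} stage the branch $\{\nu_i,\nu_i 2,\nu_i 22,\ldots\}$ still contains a $\cutr$, and $\nu_i$ becomes the root of a \nwbox only in the limit. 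Hence $\nu_i\notin\bord{\sigma^\infty_\der(n)}$ for any finite $n$, and the identification of borders you aim for cannot hold. What does stabilize --- and what the paper's later use of this proposition in \Cref{thm:preservation-weak-regularity} actually relies on --- is the weaker fact that, for $n$ large enough, the finite region of $\sigma^\infty_\der(n)$ below the $\nu_i$ is cut-free and agrees with $\base{f_{\sigma^\infty}(\der)}$, while at each $\nu_i$ one finds either a \nwbox or a $\cutr$ between two \nwboxes. Your compactness step ($D\preceq\cf{\sigma^\infty_\der(n_0)}$ for some $n_0$) is the right tool and is exactly the paper's ``by continuity'', but it should be aimed at this weaker statement rather than at literal equality of borders at a finite stage.
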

\begin{proof}
	By~\Cref{thm:preservation-weakly-progressive} and~\Cref{thm:preservation-progressiveness}, 
	$f_{\sigma^\infty}(\der)$ is a cut free ($\zero$-free) coderivation and finitely expandable coderivation. By~\Cref{prop:canonicity} $f_{\sigma^\infty}(\der)$ admits a decomposition $\bord \der= \{v_1, \ldots, v_k\}$. 	By continuity, this means that there  is $n\geq 0$  such that  $\base {\sigma^\infty_\der(n)}=\base{f_{\sigma^\infty}(\der)}$. Note that $\base {\sigma^\infty_\der(n)}$ exists by ~\Cref{prop:cut-elim-preserves-finexp-reg-weakreg,prop:canonicity}.
\end{proof}

\begin{restatable}{lemma}{WR}\label{thm:preservation-weak-regularity} 
	If $\der\in \nupll$ (resp.~$\der\in \cpll$), then so is $f_{\sigma^\infty} (\der)$.
\end{restatable}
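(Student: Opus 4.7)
The plan is to proceed by induction on $\depth{\der}$, which is a natural number by \Cref{lem:finite-depth} since $\der \in \ppll$. Thanks to \Cref{prop:weak-regular-finite-support}, it suffices to show that every $\nwbox$ in $f_{\sigma^\infty}(\der)$ has finite support (resp.~is periodic); this will follow by applying the inductive hypothesis to the finitely many distinct calls occurring in the \nwboxes of a decomposition of $\der$.

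In the base case $\depth{\der} = 0$, there is no $\nwbox$ in $\der$: by \Cref{lem:infinite-branches} every infinite branch of $\der$ would have to traverse the main branch of some $\nwbox$, of which there are none, so $\der$ is a finite derivation. Hence $\sigma^\infty_\der$ terminates in finitely many steps and $f_{\sigma^\infty}(\der)$ is a finite cut-free derivation, trivially in $\cpll \subseteq \nupll$.

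For the inductive step with $\depth{\der} = d > 0$, \Cref{prop:canonicity} yields a decomposition $\bord{\der} = \{v_1,\dots,v_k\}$ such that $\base{\der}$ is a finite open derivation and each $\der_{v_i} = \derstream{\der^i_0,\der^i_1,\dots}$ is a $\nwbox$; by \Cref{rem:nesting} each call $\der^i_j$ has depth strictly less than $d$ and lies in $\nupll$ (resp.~$\cpll$), as these global properties restrict to sub-coderivations. Since $\der \in \nupll$ (resp.~$\cpll$), the set $S_i = \{\der^i_j \mid j \in \Nset\}$ is finite (resp.~$\der_{v_i}$ is periodic) by \Cref{prop:weak-regular-finite-support}. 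I would then build a maximal transfinite strategy $\tau$ proceeding call-by-call: an initial phase normalises the cuts of $\base{\der}$ together with their interactions with the \nwboxes, which by finite expandability of $\der$ and \Cref{thm:cut-elimApp} pop only finitely many calls from each box; at stage $\omega \cdot j$ for $j \in \Nset$, the inductive hypothesis is applied to $\der^i_j$ (for each $i$) to reduce it to the cut-free coderivation $f_{\sigma^\infty}(\der^i_j) \in \nupll$ (resp.~$\cpll$). Because $S_i$ is finite (resp.~$\der_{v_i}$ is periodic), the family $\{f_{\sigma^\infty}(\der^i_j) \mid j \in \Nset\}$ is finite (resp.~periodic with the same period), so each limiting $\nwbox$ still has finite support (resp.~is periodic).

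The main technical obstacle is relating this transfinite strategy, which has length up to $\omega^{d+1}$, to the $\omega$-long $\sigma^\infty$. The plan is to invoke the compression techniques of Terese and Saurin cited in the proof sketch to collapse $\tau$ into an $\omega$-long $\mcices$ with the same Scott-continuous limit function, and then apply \Cref{prop:confluence} to conclude $f_\tau = f_{\sigma^\infty}$. Together with \Cref{prop:base-of-limit-finitely-computable}, which provides the decomposition of $f_{\sigma^\infty}(\der)$, this shows that every $\nwbox$ of $f_{\sigma^\infty}(\der)$ has finite support (resp.~is periodic), yielding $f_{\sigma^\infty}(\der) \in \nupll$ (resp.~$\cpll$).
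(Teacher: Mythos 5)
Your proposal follows essentially the same route as the paper's proof: induction on $\depth{\der}$ (finite by \Cref{lem:finite-depth}), decomposition of $\der$ into $\base{\der}$ and its \nwboxes via \Cref{prop:canonicity}, a transfinite strategy that reduces the calls orderly and applies the inductive hypothesis at smaller depth, followed by compression to an $\omega$-long $\mcices$ (via the Terese/Saurin techniques) and an appeal to \Cref{prop:confluence} together with \Cref{prop:base-of-limit-finitely-computable}.

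The one step that is under-specified is the treatment of a $\cutr$ whose premises are the roots of \emph{two distinct} \nwboxes. Your ``initial phase'' cannot dispose of such a cut: finite expandability only guarantees that finitely many calls are popped by $\wnbrule$ steps, after which a residual cut sits between the two main branches and must be driven upward by an $\omega$-long sequence of $\cutstep{\cprule}{\cprule}$ steps. In the limit this produces a \emph{single} \nwbox whose $j$-th call is a cut between the $j$-th calls of the two original boxes, so the inductive hypothesis must be applied not to the original calls $\der^i_j$ in isolation but to these cut-pairs $\cutr(\mathfrak{G}'_i(j),\mathfrak{G}''_i(j))$ (which still have strictly smaller depth); finite support (resp.\ periodicity) of the limit box then follows because both interacting boxes have finitely many distinct calls (resp.\ are periodic), hence there are finitely many distinct cut-pairs (resp.\ the sequence of cut-pairs is periodic). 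This merging of cut boxes is exactly where the paper's proof spends most of its effort, and it also affects your compression step, since the merging strategy is itself $\omega$-long and must be interleaved with the reductions of the calls. Your argument as written tacitly assumes each box can be normalised in isolation, which holds only for boxes not cut against another box; with this case filled in, the proof goes through as in the paper.
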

\begin{proof}
We define a  maximal and transfinite $\ices$ $\sigma=\{\sigma_\der\}_{\der\in\opll} $  preserving weak regularity, and show that this strategy can be ``compressed'' to a  $\mcices$, $\sigma^*=\{\sigma^*_\der\}_{\der\in\opll} $ , along the lines of~\cite{Saurin}. We then conclude since $f_{\sigma^\infty}=f_{\sigma^*}$ by~\Cref{prop:confluence}.  
So let $\der \in \nupll$.  By induction on $d=\depth{\der}$ (which is finite by~\Cref{lem:finite-depth}) we define  $\sigma_\der=(\der_i)_i$     such that:
	\begin{enumerate}[(a)]
		\item \label{enum:conditions-transfiniteA} For any limit ordinal $\lambda\leq \ell(\sigma_\der)$:
		\begin{enumerate}[(i)]
			\item  \label{enum:conditions-trasfinite1}  $\bigsqcup_{i<\lambda}\der'_i=\der_\lambda$ for some $\der'_i $ finite approximations  of $\der_i$. 
			\item  \label{enum:conditions-trasfinite2} If $h_i$ is the height of the $\cutr$  reduced at the $i$-th step of $\sigma_\der$ then $\lim{i < \lambda }{h_i}=\infty$. 
		\end{enumerate}
		\item\label{enum:conditions-transfinite-cut-free} $\der_{\ell(\sigma_\der)}$ is cut free
		\item \label{enum:conditions-transfinite3}  $\der_{\ell(\sigma_\der)}$ is weakly regular. 
	\end{enumerate}
	\begin{itemize}
		\item If $d=0$ then by~\Cref{prop:canonicity} $\der$ is an open derivation, so that by~\Cref{thm:cut-elimApp} there is a maximal cut-elimination sequence that rewrites $\der$ to a normal open coderivation. In particular, the latter is also cut-free because $\der$ is $\zero$-free and so every cut can be eventually eliminated. We set $\sigma_\der$ to be such a cut-elimination sequence. By construction, $\sigma_\der$  satisfies~\Cref{enum:conditions-trasfinite1}-\ref{enum:conditions-trasfinite2} and~\Cref{enum:conditions-transfinite-cut-free}. Moreover, by~\Cref{prop:cut-elim-preserves-finexp-reg-weakreg} $\sigma_\der$ satisfies~\Cref{enum:conditions-transfinite3}
		\item  If $d>0$ then by~\Cref{prop:base-of-limit-finitely-computable} there is $n \geq 0$ such that $\base{f_{\sigma^\infty}(\der)}= \base {\sigma^\infty_\der(n)}$
		By construction $\sigma^\infty_\der(n)$ has the following structure:
		$$
		\toks0={0.3}
		\vlderivation{
			\vltrf{\sigma^\infty_\der(n)}{\Gamma}
			{\vlhy{
					\left\{
					\cutr(\mathfrak{G}'_i, \mathfrak{G}''_i)	\dfn \vlderivation{
						\vliin{\cutr}{}{\wn \Delta_i, \wn \Sigma_i, \oc A_i}
						{
							\vltr{\mathfrak{G}'_i}{\wn \Delta_i, \oc B_i}{\vlhy{\ \ }}{\vlhy{\ \  }}{\vlhy{\ \ }}
						}{
							\vltr{\mathfrak{G}''_{i}}{\wn \cneg{B}_i, \wn \Sigma_i, \oc A_i}{\vlhy{\ \  }}{\vlhy{\ \ }}{\vlhy{\ \ }}
						}
					}
					\right\}_{1 \leq i \leq l}
				}
			}
			{
				\vlhy{}
			}
			{\vlhy{
					\left\{
					\vlderivation{
						\vltr{\mathfrak{G}'''_{i}}{\wn \Theta_i, \oc C_i}{\vlhy{\ \  }}{\vlhy{\ \  }}{\vlhy{\ \ }}
					}
					\right\}_{1 \leq i \leq m}
				}
			}
			{\the\toks0}
		}
		$$
		for some $\nwbox$s $\mathfrak{G}'_i, \mathfrak{G}_i'', \mathfrak{G}_i'''$.  For any $1 \leq i \leq l$, let  $\sigma_i$ be the  $\mcices$ that applies only cut-elimination steps for $\cutstep{\cprule}{\cprule}$ and that rewrites $\cutr(\mathfrak{G}'_i, \mathfrak{G}''_i)$ to the following coderivation:
		$$
			\vlderivation{
				\vliin{\cprule}{}{\wn \Delta_i, \wn \Sigma_i, \oc A_i}{\vltr{\cutr(\mathfrak{G}'_i(1), \mathfrak{G}''_i(1))}{\Delta_i, \Sigma_i, A_i}{\vlhy{\ \ \ \ }}{\vlhy{\ \ \ \ }}{\vlhy{\ \ \ \ }}}{
					\vliin{\cprule}{}{{\wn \Delta_i, \wn \Sigma_i, \oc A_i}}{
						\vltr{\cutr(\mathfrak{G}'_i(2), \mathfrak{G}''_i(2))}{\Delta_i, \Sigma_i, A_i}{\vlhy{\ \ \ \ }}{\vlhy{\ \ \ \ }}{\vlhy{\ \ \ \ }}
					}{				 			
						\vliin{\cprule}{}{\reflectbox{$\ddots$}}{\vltr{\cutr(\mathfrak{G}'_i(n), \mathfrak{G}''_i(n))}{\Delta_i, \Sigma_i, A_i}{\vlhy{\ \ \ \ }}{\vlhy{\ \ \ \  }}{\vlhy{\ \ \ \ }}}{
							\vlin{\cprule}{}{{\wn \Delta_i, \wn \Sigma_i, \oc A_i}}{\vlhy{\vdots}}
						}
					}
				}
			}
		$$
		where:
		$$
			\cutr(\mathfrak{G}'_i(j), \mathfrak{G}''_i(j)) \dfn \vlderivation{
				\vliin{\cutr}{}{\Delta_i, \Sigma_i, A_i}
				{
					\vltr{\mathfrak{G}'_i(j)}{\Delta_i, B_i}{\vlhy{\ \ }}{\vlhy{\ \ }}{\vlhy{\ \ }}
				}
				{
					\vltr{\mathfrak{G}''_i(j)}{\cneg{B}_i, \Sigma_i, A_i}{\vlhy{\ \  }}{\vlhy{\ \ }}{\vlhy{\ \  }}
				}
			}
		$$
		By induction hypothesis, for any $j \geq 0$ we have maximal transfinite $\ices$s $\sigma_{\cutr(\mathfrak{G}'_i(j), \mathfrak{G}''_i(j))}$ and    $\sigma_{\mathfrak{G}'''_i(j)}$ satisfying the hypothesis. Since $\der$ is weakly regular  the sets of sequences  $X_i\dfn\{\sigma_{\cutr(\mathfrak{G}'_i(j), \mathfrak{G}''_i(j))}\ \vert \ j \geq 0 \}$ and    $Y_i \dfn \{\sigma_{\mathfrak{G}'''_i(j)}\ \vert \ j \geq 0\}$ can be assumed to be  finite. We set:
		$$
			\sigma_\der\dfn   (\sigma^\infty_\der(i))_{0 \leq i\leq n} \cdot \prod_{i=1}^m  \prod_{j=1}^\infty \sigma_{\mathfrak{G}'''_i(j)} \cdot \prod_{i=1}^{l} (\sigma_i \cdot \prod_{j=1}^{\infty} \sigma_{\cutr(\mathfrak{G}'_i(j), \mathfrak{G}''_i(j))})
		$$
		where $\sigma' \cdot \sigma''$ denotes the concatenation of two sequences $\sigma' $ and $ \sigma''$. Let us now show that $\sigma_\der$ satisfies~\Cref{enum:conditions-trasfinite1}-\ref{enum:conditions-trasfinite2}. This follows from the induction hypothesis and the construction of $\sigma_i$ ($1 \leq i \leq l$). Notice, indeed, that the $i$-th element of  $\sigma_i$ is the application of  a cut-elimination step to a  $\cutr$ with shape $\cutstep{\cprule}{\cprule}$ and with height $ i$. Clearly,~\Cref{enum:conditions-transfinite-cut-free} is satisfied.  Concerning~\Cref{enum:conditions-transfinite3},  since the sets of sequences $X_i$ and $Y_i$ are finite, using the induction hypothesis we have that if the sequences $\sigma_{X_i}\dfn  \prod_{j=1}^\infty \sigma_{\mathfrak{G}'''_i(j)}$ and $\sigma_{Y_i}\dfn \sigma_i \cdot \prod_{j=1}^{\infty} \sigma_{\cutr(\mathfrak{G}'_i(j), \mathfrak{G}''_i(j))}$ are applied to a weakly regular coderivation, their limit  is a weakly regular coderivation. From this fact and~\Cref{prop:cut-elim-preserves-finexp-reg-weakreg} we can conclude that the limit of $\sigma_\der$ is weakly  regular.
	\end{itemize}
	Now, let $\lim{} {\sigma_\der}$ be the limit of $\sigma _\der$. We want to show by induction $d$ that $\sigma$ can be rewritten to a $\mcices$ $\sigma^*$  such that  $\lim{}{\sigma_\der}= f_{\sigma^*}(\der)$.
	\begin{itemize}
		\item 	 The case $d=0$ follows by construction of $\sigma$. 
	
		\item Let us suppose $d >0$. By induction hypothesis we have  $\sigma^*_{\cutr(\mathfrak{G}'_i(j), \mathfrak{G}''_i(j))}$ and $\sigma^*_{\mathfrak{G}'''_i(j)}$ such that, 	for any $j \geq 0$:
		\begin{itemize}
			\item  $\lim{}{\sigma_{\cutr(\mathfrak{G}'_i(j), \mathfrak{G}''_i(j))}}= f_{\sigma^*}(\cutr(\mathfrak{G}'_i(j), \mathfrak{G}''_i(j)))$
			\item $\lim{}{\sigma_{\mathfrak{G}'''_i(j)}}= f_{\sigma^*}({\mathfrak{G}'''_i(j)})$.
		\end{itemize}
		Let us now  show that the sequences $\sigma_i \cdot \prod_{j=1}^{\infty} \sigma^*_{\cutr(\mathfrak{G}'_i(j), \mathfrak{G}''_i(j))}$ can be rewritten to a sequence with length $\omega$ with the same  limit and preserving conditions~\Cref{enum:conditions-transfiniteA}-\ref{enum:conditions-transfinite3}.  We notice that:
		\begin{itemize}
			\item for any $j \neq j'$,  cut-elimination steps in $\sigma^*_{\cutr(\mathfrak{G}'_i(j), \mathfrak{G}''_i(j))}$ commute with  cut-elimination steps  in $\sigma^*_{\cutr(\mathfrak{G}'_i(j'), \mathfrak{G}''_i(j'))}$ 
			\item   the $j+1$-th cut-elimination step of $\sigma_i$ commutes with all cut-elimination steps   in $\sigma(\cutr(\mathfrak{G}'_i(j'), \mathfrak{G}''_i(j')))$ with $j'<j$.
		\end{itemize}
		Since $\sigma_i$ is has length $\omega$, by the above observations, we define a  sequence $\sigma^*_{i}$ of length $ \omega$ divided into stages, where each stage consists of a finite subsequence of reduction steps. At the $n$-th stage:
		\begin{itemize}
			\item we apply the $n$-th cut-elimination step of $\sigma_i$ 
			\item for any $1 \leq j \leq n$ we apply (the next available) $n+1-j$  steps of  $\sigma^*_{\cutr(\mathfrak{G}'_i(j), \mathfrak{G}''_i(j))}$. 
		\end{itemize} 
		In a similar way, for any $1 \leq i \leq m$  the reduction sequence $\sigma^*_{\mathfrak{G}'''_i(j)} $ can be rewritten to a sequence $\sigma^{**}_{i}(\der)$ of length $\leq \omega$ (preserving the limit and conditions~\Cref{enum:conditions-transfiniteA}-\ref{enum:conditions-transfinite3}).  We obtain a sequence of the following form:
		$$
			(\sigma^\infty_\der(i))_{0 \leq i\leq n} \cdot  \prod_{i=1}^m   \sigma^{**}_i \cdot \prod_{i=1}^{l} \sigma^*_i
		$$
		Since any cut-elimination step in $\sigma^{**}_i$ commutes with any cut-elimination step in $\sigma^*_i$, we can rewrite the above  sequence to a  sequence $\sigma^*_\der=(\der_i)_i$ of length $\leq \omega$ with the same  limit and preserving conditions~\Cref{enum:conditions-transfiniteA}-\ref{enum:conditions-transfinite3}. By definition, to prove that $\lim{}{\sigma_\der}=f_{\sigma^*}(\der)$ it suffices to show that $\lim{}{\sigma_\der}=\bigsqcup_i \cf{\der_i}$:
		\begin{itemize}
			\item By~\Cref{enum:conditions-trasfinite1} we have $\lim{}{\sigma_\der}=\bigsqcup_i {\der'_i}$ for some $\der'_i$ approximations of $\der_i$ so that, by~\Cref{enum:conditions-transfinite-cut-free}, we have  $\lim{}{\sigma_\der}=\bigsqcup_i {\der'_i}\preceq \bigsqcup_i \cf{\der_i}$. 
			
			\item 	 By~\Cref{enum:conditions-trasfinite2} we  have  $\bigsqcup_i \cf{\der_i}\preceq \lim{}{\sigma_\der}$.
		\end{itemize}
	\end{itemize}

	This shows that $f_{\sigma^\infty}(\der)$ is weakly regular if $\der$ is. Therefore, if $\der \in \nupll$  then $f_{\sigma^\infty}(\der) \in \nupll$  by~\Cref{thm:preservation-weakly-progressive} and~\Cref{thm:preservation-progressiveness}.

	Concerning preservation of regularity,  we apply the same reasoning, checking that the $\ices$  preserves periodicity of $\nwbox$s.
\end{proof}

\section{Proofs of \Cref{sec:semantics}}

\semanticapproximation*
\begin{proof}
	By \Cref{prop:scott-domain}, $\der = \bigsqcup_{\der' \in \fapx[ \der]}\der'$.
		
	For the left-to-right inclusion, observe  that for every $n \in \Nset$ there is $\der'_n \in \fapx[ \der]$ such that
	$\sem{\bigsqcup_{\der'\in \fapx[ \der]} \der'}_n = \sem{\der'_n} \subseteq \bigcup_{\der' \in \fapx[ \der]}\sem{\der'}$.
	Therefore, by minimality of the union,
	$$
	\sem{\der}
	= 
	\bigcup_{n \in \Nset} \sem{\der}_n
	= 
	\bigcup_{n \in \Nset} \sem{\bigsqcup_{\der' \in \fapx[ \der]}\der'}_n 
	\subseteq 
	\bigcup_{\der' \in \fapx[ \der]}\sem{\der'}.
	$$
	
	As for the converse inclusion, we have that  $\der' \preceq \der''$ implies $\sem{\der'}\subseteq \sem{\der''}$.	
	Hence, for all $\der'\in \fapx[ \der]$, since $\der' \preceq \bigsqcup_{\der'\in \fapx[ \der]} \der' = \der$, we have $\sem{\der'} \subseteq \sem{\der}$.
	By minimality of the union, $\bigcup_{\der'\in \fapx[ \der]}\sem{\der'} \subseteq \sem{\der}$.
\end{proof}

\soundness*
\begin{proof}
\begin{enumerate}
	\item By straightforward inspection of the cut-elimination steps for $\opll$. 
	
	\item By definition of $\mcices$,  for any $\der' \in \fapx[\der]$ we have $\der' \cutelims f_\sigma(\der')$, so $\sem{\der'}= \sem{f_\sigma(\der')}$ by \Cref{thm:soundness-semantics}.\ref{p:soundness-semantics-single}.
	By \Cref{prop:scott-domain}, $\der = \bigsqcup_{\der' \in \fapx[ \der]}\der'$. 
	By continuity of $f_\sigma$, we have $f_\sigma(\der) = \bigsqcup_{\der' \in \fapx[ \der]}f_\sigma(\der')$.
	Therefore, by~\Cref{lem:semantic-approximation} we have:
	\begin{equation*}
		\adjustbox{max width=\textwidth}{$
			\sem{\der}
			=		
			\bigcup_{\der' \in \fapx[ \der]}\sem{\der'}
			=
			\bigcup_{\der' \in \fapx[ \der]}\sem{f_\sigma(\der')}
			=	
			\sem{\bigsqcup_{\der' \in \fapx[ \der]}f_\sigma(\der')}
			=
			\sem{f_\sigma(\der)}
			$.}
		\qedhere
	\end{equation*}
\end{enumerate}
\end{proof}

\nonempty*
\begin{proof}
	If $\der\in \wppll$ is a cut-free coderivation, then \wprog ensures the existence of a bar $\Nodes$ containing  conclusions of rules in $\set{\axr,\lone,\cprule}$. By weak K\"{o}nig's lemma,  $\prun{\der}{\Nodes}$ is finite. Then, we prove by induction on  $\prun{\der}{\Nodes}$ that there is $n \geq 0$ such that $\sem{\prun{\der}{\Nodes}}_n \neq \emptyset$, so that we conclude  $\emptyset\neq \sem{\prun{\der}{\Nodes}}_n \subseteq \sem{\der}_n\subseteq \sem{\der}$. 
	As for the base case, notice that the interpretation of any coderivation ending with the $\cprule$ contains the element $(\vec{\emptymset} , \emptymset)$, so it is never empty. The inductive steps are straightforward.
	
	If $\dD$ contains $\cutr$-rules, then
	$\sem{\der}=\sem{f_\sigma(\der)}$ 
	by \Cref{thm:soundness-semantics}. Since 	$f_\sigma(\der)$ is $\cutr$-free, we  conclude $\sem{\der}\neq\emptyset$ by the above reasoning.
\end{proof}

\end{document}